\def\parentheses#1{\!\left(#1\right)}
\def\brackets#1{\!\left[#1\right]}
\def\R{\mathbb{R}}
\title{Reduction of nonholonomic systems in two stages and Hamiltonization}
\author{{\sc{Paula Balseiro}\thanks{
         Universidade Federal Fluminense, Instituto de Matem\'atica, Rua Mario Santos Braga S/N, 24020-140, Niteroi, Rio de Janeiro, Brazil. \newline{\texttt{E-mail: pbalseiro@vm.uff.br}}}} \ \  and \
{\sc{Oscar E. Fernandez}\thanks{
        Department of Mathematics, Wellesley College, 106 Central Street, Wellesley, MA 02481, USA. \newline{\texttt{E-mail:
ofernand@wellesley.edu}}}} }
\theoremstyle{plain}
\newtheorem{theorem}{Theorem}[section]
\newtheorem{lemma}[theorem]{Lemma}
\newtheorem{proposition}[theorem]{Proposition}
\newtheorem{corollary}[theorem]{Corollary}
\newtheorem*{theorem*}{Theorem}
\newtheorem{remarkth}[theorem]{Remark}
\theoremstyle{definition}
\newtheorem{notation}[theorem]{Notation}
\newenvironment{remark}{\begin{remarkth}\upshape}{\hfill$\diamond$\end{remarkth}}
\def\W{\mathcal{W}}
\def\M{\mathcal{M}}
\def\V{\mathcal{V}}
\def\S{\mathcal{S}}
\def\C{\mathcal{C}}
\def\Ham{\mathcal{H}}
\def\R{\mathbb{R}}
\def\L{\mathbb{F}L}
\def\red{{\mbox{\tiny{red}}}}
\def\nh{{\mbox{\tiny{nh}}}}
\def\B{{\mbox{\tiny{$B$}}}}
\def\subW{{\mbox{\tiny{$\W$}}}}
\def\subS{{\mbox{\tiny{$\S$}}}}
\def\subM{{\mbox{\tiny{$\M$}}}}
\def\vecOm{\boldsymbol{\Omega}}
\def\a{\alpha}
\def\b{\beta}
\def\cc{\textup{\bf c}}
\def\vv{\textup{\bf v}}
\def\vecom{\boldsymbol{\omega}}
\newcommand{\SO}{\mbox{$\textup{SO}$}}
\def\vecL{\boldsymbol{\lambda}}
\def\vecR{\boldsymbol{\rho}}
\def\vecgamma{\boldsymbol{\gamma}}
\def\vecalpha{\boldsymbol{\alpha}}
\def\vecbeta{\boldsymbol{\beta}}
\begin{document}
\maketitle

\begin{abstract}
In this paper we study the reduction of a nonholonomic system by a group of symmetries in two steps. Using the so-called {\it vertical-symmetry} condition, we first perform a {\it compression} of the nonholonomic system leading to an almost symplectic manifold. Second, we perform an {\it (almost) symplectic reduction}, relying on the existence of a momentum map. In this case, we verify that the resulting (almost) symplectic reduced spaces are the leaves of the characteristic foliation of the reduced nonholonomic bracket.
On each leaf we study the (Lagrangian) equations of motion, obtaining a nonholonomic version of the Lagrange-Routh equations, and we analyze the existence of a conformal factor for the reduced bracket in terms of the existence of  conformal factors for the almost symplectic leaves.
We also relate the conditions for the existence of a momentum map for the compressed system with gauge transformations by 2-forms. The results are illustrated in three different examples.
\end{abstract}

\tableofcontents

\section{Introduction} \label{S:Intro}

In mechanics, a central role is played by the so-called {\em nonholonomic systems}, i.e., mechanical systems subject to non-integrable
constraints on their velocities (see e.g. \cite{Bloch:Book,BKMM,BS93,Koiller1992}). From a mathematical standpoint, nonholonomic contraints are encoded in non-integrable
geometrical structures, e.g., 2-forms which fail to be closed, or bivector fields which fail to be Poisson \cite{BS93,IdLMM}. It has been observed that, in the presence of symmetries, it is often possible to control, or even eliminate, the non-integrable nature of a non-holonomic system by means of {\em reduction}. In other words, in many situations, reduced systems are closer to being Hamiltonian than the original ones. When the reduced system can  actually be brought to Hamiltonian form (possibly after a time reparametrization), one  speaks of the {\em Hamiltonization} of the nonholonomic system, see e.g.  \cite{BorMa,Chap,EhlersKoiller,JovaChap} (other references will be given throughout the paper).

This paper is concerned with the integrability properties of reduced nonholonomic brackets, and in
particular the existence of leaves integrating their characteristic distributions, as well as geometric ways to describe them.
The main ingredient in our approach to these issues is the observation that, under suitable assumptions on the symmetries,
the procedure of reduction by symmetries can be split in two steps, the first being a partial reduction by a smaller group of symmetries (a {\em Chaplygin reduction}, or {\em compression} \cite{MovingFrames,EhlersKoiller}),
and the second a Marsden-Weinstein type reduction with respect to the remaining symmetries. This second step depends on
the existence of a momentum map, and we explain how this condition provides information about the integrability of reduced brackets. We also give a description of the reduced nonholonomic dynamics on reduced spaces corresponding to different momentum levels. This leads to a unified viewpoint for various previous works on the subject \cite{paula,Fernandez,Hoch}.

The paper is organized as follows. In Section~\ref{S:OneStep} we recall the setup for nonholonomic systems. 
In the Hamiltonian framework one has a submanifold $\M$ of $T^*Q$ equipped with an almost Poisson bracket
$\{\cdot , \cdot \}_\nh$---called the {\em nonholonomic bracket} \cite{CdLMD,Marle,SchaftMaschke1994}---and a Hamiltonian function $\Ham_\subM$. The nonholonomic dynamics on $\M$ is given by the integral curves of the {\em nonholonomic vector field}, defined by $X_\nh = \{\cdot , \Ham_\subM\}_\nh.$ We note that nonholonomic brackets have an associated characteristic distribution that is non-integrable.
 If $G$ is a group of symmetries, that is, $G$ acts on $Q$ preserving all additional structures,
 then the nonholonomic system can be reduced to the quotient manifold $\M/G$.  The reduced dynamics is then described by a reduced Hamiltonian function $\Ham_\red$ and a reduced almost Poisson bracket $\{\cdot, \cdot\}_\red$ on $\M/G$.
We will be particularly interested in reduced brackets $\{\cdot, \cdot\}_\red$ that are {\it twisted Poisson} \cite{SeveraWeinstein}; such brackets may not satisfy the Jacobi identity, but admit an almost symplectic foliation integrating the characteristic distribution.

Assuming an additional condition on the symmetries (referred to as {\it vertical-symmetry condition} in \cite{paula}),
we show in Section~\ref{S:2steps} how the reduction process from $\M$ to $\M / G$ can be split into two steps.  First, we observe that there is an action on $\M$ of a normal subgroup $G_\subW$ of $G$ with respect to which the system is Chaplygin. Carrying out Chaplygin reduction is the first step in our reduction procedure, and it leads to an almost symplectic form $\bar\Omega$ on $\M/G_\subW$.
The second step of the process consists of reducing this almost symplectic manifold (\`a la Marsden-Weinstein \cite{MW}) by the remaining symmetries $G/G_\subW$.
At this point, we need conditions to guarantee the existence of a momentum map for $\bar{\Omega}$, which is closely related to the existence of conserved quantities for the $(G/G_\subW)$-symmetry. Whenever this second reduction can be performed, it leads to an almost symplectic foliation integrating the characteristic distribution of $\{\cdot, \cdot \}_\red$, see Theorem~\ref{T:Equivalence}.

Since the 2-form $\bar{\Omega}$ at the compressed level may not admit a momentum map, in Section \ref{S:Gauge} we explore the possibility of circumventing this problem by introducing a 2-form $\bar{B}$ in such a way that $\bar\Omega+\bar{B}$ still describes the compressed dynamics {\em and} admits a momentum map (c.f. \cite{Hoch,EhlersKoiller}). This procedure is related with the {\it dynamical gauge transformations by 2-forms} of the nonholonomic bracket $\{\cdot, \cdot\}_\nh$ studied in \cite{paula,PL2011}. By considering reduction in two steps, we obtain a description of the almost symplectic leaves of reduced brackets arising from these more general gauge-transformed brackets.

The goal of performing the reduction in two steps is to understand the possible foliations associated to the reduced nonholonomic bracket.  This viewpoint also helps to see the effect of applying a {\it gauge transformation}. It is worth mentioning that other approaches to the problem of nonholonomic reduction are possible---see \cite{FaSa,KoKo}---but we focus here on the formalism developed by Marsden and Koiller.

An important point in having a description of the almost symplectic foliation associated to reduced nonholonomic brackets is that we can study the dynamics on each leaf independently. In this paper we consider two key aspects of the dynamics on a leaf: the local version of the equations of motion and the existence of a conformal factor for the almost symplectic form.
 In Section~\ref{S:RouthEquations} we consider the classical Lagrange-Routh equations \cite{Marsden2000,Bavo,CM1} and explain how they express the equations of motion with respect to a canonical Poisson bracket on the reduced space.
  Then, we show that the equations of motion on each almost symplectic leaf of the reduced nonholonomic bracket are given
by a suitable modification of the Lagrange-Routh equations. Such {\em nonholonomic Lagrange-Routh equations} rely on the conservation of the momentum map studied in Sec.~\ref{S:2steps}. The modifications relating the classical and nonholonomic Lagrange-Routh equations can be naturally understood in terms of gauge transformations.

In Section \ref{sec:conf} we study conformal factors for reduced nonholonomic brackets by considering conformal factors on
each almost symplectic leaf. Recall that a {\em conformal factor} for $\{\cdot, \cdot\}_\red$ is a real-valued function $g$ with the property that $g \{\cdot, \cdot\}_\red$ is a Poisson bracket; the existence of a conformal factor guarantees that the reduced system can be transformed into a Hamiltonian system via a reparameterization of time \cite{FedorovJova}.  Using the almost symplectic foliation associated to $\{\cdot, \cdot\}_\red$, the problem of Hamiltonization is stated as the problem of finding a conformal factor for each  almost symplectic leaf. If the leaves are diffeomorphic to a cotangent bundle we can generalize the so-called Stanchenko and Chaplygin approaches \cite{Stanchenko,Chap,Oscar} to each leaf by including  gauge transformations.

Our results are illustrated in Section \ref{sec:examples} with three examples: the snakeboard, the Chaplygin ball and a mathematical example inspired by \cite{CM1}. For the snakeboard one obtains a reduced nonholonomic bracket $\{\cdot, \cdot \}_\red$ that is twisted Poisson, while for the Chaplygin ball one needs to consider the addition of a 2-form $\bar{B}$ at the compressed level in order to obtain a similar result.
In both examples we derive the equations of motion and the differential equations for the conformal factors on each almost symplectic leaf. In the case of the Chaplygin ball, this leads to a simple way to obtain the conformal factor previously found in \cite{BorMa}. The third example illustrates Section 5 for the case when the remaining Lie group $G/G_\subW$ is not abelian.


%

\noindent {\bf Acknowledgements:}  P.B. would like to thank CNPq (Brazil) for financial support and also David Iglesias-Ponte and Tom Mestdag for useful conversations. O.F. wishes to thank the Woodrow Wilson National Fellowship Foundation for their support through the Career Enhancement Fellowship. We thank IMPA for its hospitality in the initial stage of this project. Finally we would like to acknowledge the comments of the referees that improve the final version of the paper.

\section{Nonholonomic systems and the full reduction} \label{S:OneStep}

\subsection{Nonholonomic systems} \label{Ss:NH}
A nonholonomic system on a configuration manifold $Q$ is a
mechanical system with non-integrable constraints on the velocities.
It is described by a Lagrangian function $L:TQ \to \R$ of mechanical type,
i.e., $L= \frac{1}{2} \kappa - U$, where $\kappa$ is a metric and $U \in C^\infty(Q)$, and a non-integrable
subbundle $D\subset TQ$ (i.e., $D$ is not involutive).

Since $L$ is of mechanical type, the Legendre transformation $\L:TQ
\to T^*Q$ is an isomorphism such that $\L = \kappa^\flat$, where
$\kappa^\flat:TQ \to T^*Q$ is defined by  $\kappa^\flat(X)(Y) =
\kappa(X,Y)$, for $X,Y \in TQ$. The {\it constraint momentum space}
$\M$ is the submanifold of $T^*Q$ given by $\M :=\kappa^\flat(D)$.
Since $\kappa^\flat$ is linear on the fibers, it induces a
well-defined subbundle $\tau:\M \to Q$ of the cotangent bundle $T^*Q
\to Q$.

Using local canonical coordinates $(q, \dot q)$ on $TQ$, the
Hamiltonian function is defined by $\Ham = (\frac{\partial
L}{\partial \dot q^i} \dot q^i - L) \circ \kappa^\flat:T^*Q \to \R$.
If $\iota :\M \to T^*Q$ is the natural inclusion, we denote by
$\Ham_\subM := \iota^* \Ham: \M \to \R$ the restriction of $\Ham$ to
$\M$ and by $\Omega_\subM := \iota^*\Omega_Q$ the pull back of the
canonical 2-form $\Omega_Q$ on $T^*Q$ to $\M$.

The distribution $D$ on $Q$ defines a non-integrable distribution $\C$ on $\M$ by
\begin{equation} \label{Eq:C}
\C :=\{v \in T\M \ : \ \tau(v) \in D\},
\end{equation}
and, since  $\Omega_\subM |_\C$ is non-degenerate \cite{BS93}, the
dynamics is described by the (unique) vector field $X_\nh$ on $\M$
that takes values in $\C$ and satisfies the {\it Hamilton equations
for nonholonomic systems},
\begin{equation} \label{Eq:NHdynamics}
{\bf i}_{X_\nh} \Omega_\subM |_\C = d \Ham_\subM |_\C,
\end{equation}
where $\ |_\C$ represents the restriction to $\C$. Moreover, there
is a related $\R$-bilinear bracket $\{\cdot , \cdot \}_\nh :
C^\infty(\M) \times C^\infty(\M) \to C^\infty(\M)$  defined by
$$
\{\cdot , f \}_\nh = X_f \qquad \mbox{if and only if} \qquad {\bf i}_{X_f} \Omega_\subM |_\C = d f |_\C,
$$
for $f \in C^\infty(\M)$. The bracket $\{ \cdot , \cdot \}_\nh$ is known as the {\it nonholonomic bracket} (\cite{CdLMD,IdLMM,Marle,SchaftMaschke1994}), and $\{f, g \}_\nh = \Omega_\subM(X_f, X_g)$. The nonholonomic bracket is an almost Poisson bracket: it is skew-symmetric and satisfies the Leibniz condition,
$$
\{fg,h\}_\nh = f\{g,h\}_\nh + \{f,h\}_\nh g, \qquad \mbox{for all } f,g,h \in C^\infty(\M).
$$
In general, we say that a bracket $\{ \cdot, \cdot \}$ on a manifold is {\it Poisson} if it is an almost Poisson bracket that also satisfies the Jacobi identity,
$$
\{f,\{g,h\}\} + \{g,\{h,f\}\} + \{h,\{f,g\}\} = 0,
$$
for $f,g,h$ smooth functions on the manifold. The {\it
characteristic distribution} of an almost Poisson bracket is the
distribution generated by the Hamiltonian vector fields. If the
bracket is Poisson then its characteristic distribution is
integrable and each leaf carries a symplectic form. Observe that the
distribution $\C$ defined in \eqref{Eq:C} is the characteristic
distribution of $\{ \cdot, \cdot \}_\nh$, and it is not integrable
since $D$ is not integrable, \cite{BS93,PL2011}.

We say that an almost Poisson bracket $\{\cdot,\cdot\}$ {\it
describes the nonholonomic dynamics} if $X_\nh = \{ \cdot ,
\Ham_\subM \}$. In particular, the nonholonomic bracket describes the nonholonomic dynamics.


\subsection{Reduction by symmetries}
Consider a nonholonomic system on the manifold $Q$ given by a
Lagrangian $L$ of mechanical type and a non-integrable (regular)
distribution $D$. Let $G$ be a Lie group acting freely and properly on $Q$ such that the
lifted action to $TQ$ leaves $L$ and $D$ invariant. Then the
constraint momentum space $\M=\kappa^\flat(D)$ is an invariant
submanifold of $T^*Q$ by the cotangent lift of the $G$-action, so we
have a well defined action on $\M$.

The nonholonomic bracket $\{ \cdot , \cdot \}_\nh$ and the Hamiltonian $\Ham_\subM$ are also invariant by the $G$-action on $\M$. Thus, the orbit projection $\rho:\M \to \M/G$ induces a reduced bracket $\{\cdot, \cdot \}_\red$ on $\M/G$ given by
\begin{equation} \label{Eq:RedNHbracket}
\{f \circ \rho , g \circ \rho \}_\nh = \{f,g \}_\red \circ \rho,
\qquad \mbox{for } f,g \in C^\infty(\M/G).
\end{equation}
If $\Ham_\red: \M/G \to \R$ denotes the reduced Hamiltonian
function, $\rho^*\Ham_\red = \Ham_\subM$, then the (reduced)
dynamics is given by the vector field $X_\red$ on $\M/G$ such that
$$
\{ \cdot, \Ham_\red \}_\red = X_\red.
$$

The bracket $\{ \cdot , \cdot \}_\red$ is an almost Poisson bracket
on $\M/G$ whose characteristic distribution might also be
nonintegrable. In this paper we will study a particular situation
where the bracket $\{ \cdot , \cdot \}_\red$ has an integrable
characteristic distribution (even though the Jacobi identity might
not be satisfied), namely when the bracket is  {\it twisted
Poisson}.

More precisely, a {\it twisted Poisson bracket}
\cite{SeveraWeinstein} on a manifold $P$ is an almost Poisson
bracket for which there exists a closed 3-form $\phi$  on $P$ such
that
\begin{equation} \label{Def:Twisted}
\{f,\{g,h\}\} + \{g,\{h,f\}\} + \{h,\{f,g\}\} = \phi(X_f,X_g, X_h),
\end{equation}
where $f,g,h \in C^\infty(P)$ and $X_f, X_g, X_h$ are the
Hamiltonian vector fields defined by the bracket. The characteristic
distribution of a twisted Poisson bracket is integrable and its
associated leaves are almost symplectic manifolds. Twisted Poisson
brackets can be seen as an intermediate case between a bracket that
has a nonintegrable characteristic distribution (e.g. the
nonholonomic bracket) and a Poisson bracket. Moreover, if a regular
bracket admits a conformal factor (i.e., there is a function $f$
such that $f\{\cdot, \cdot \}$ is Poisson), then the bracket is
twisted Poisson. Searching for twisted Poisson brackets describing
the (reduced) dynamics can thus be seen as a first step toward the
Hamiltonization process.

Next, following \cite{paula} we introduce the necessary objects to
state the conditions under which the reduced bracket $\{ \cdot ,
\cdot \}_\red$ is twisted Poisson.
\bigskip

\noindent {\bf The dimension assumption and complements of the
constraints.} \ Let $G$ be a Lie group acting on $Q$ leaving $L$ and
$D$ invariant. In this paper, we assume, at each $q \in Q$,
\begin{equation} \label{Eq:DimensionAssumption}
 T_qQ = D_q +V_q,
\end{equation}
where $V_q$ is the tangent to the orbit at $q$ of the $G$-action.
Condition \eqref{Eq:DimensionAssumption} is called the {\it
dimension assumption} (see e.g. \cite{Bloch:Book}). Let us denote by
$S$ the distribution defined on $Q$, at each $q\in Q$, by
\begin{equation}\label{Eq:DefS}
S_q = D_q \cap V_q.
\end{equation}
A distribution $W$ on $Q$ is said to be a {\it vertical complement}
of the constraints \cite{paula} if, at each $q\in Q$,
\begin{equation}\label{Eq:D+W}
T_qQ = D_q \oplus W_q, \qquad \mbox{and} \qquad W_q \subseteq V_q.
\end{equation}

Consider the $G$-action on $\M$ and denote by $\V\subset T\M$ the
tangent distribution to the orbits. Observe that $T\tau|_\V :\V \to
V$ is an isomorphism (recall that $\tau:\M\to Q$ is the canonical
projection). The dimension assumption can be stated in terms of the
distribution $\C$ defined in \eqref{Eq:C}: at each $m \in \M$, the
dimension assumption \eqref{Eq:DimensionAssumption} guarantees that
$$
T_m\M = \C_m + \V_m.
$$
Analogously to \eqref{Eq:DefS}, we denote by $\S$ the distribution
on $\M$ such that, for $m \in \M$, $\S_m = \C_m \cap \V_m$. The
decomposition \eqref{Eq:D+W} also induces a decomposition on $T\M$,
such that, for each $m \in \M$, we have
\begin{equation} \label{Eq:C+W}
T_m\M = \C_m \oplus \W_m \qquad \mbox{and} \qquad \W_m \subseteq \V_m,
\end{equation}
where
\begin{equation} \label{Eq:DefW}
\W = (T\tau|_\V)^{-1}(W).
\end{equation}

\begin{remark}
 A $G$-invariant vertical complement of $D$ always exists since it can be chosen to be $W = V \cap D^{\perp}$, where $D^{\perp}$ is the orthogonal complement of $D$ with respect to the $G$-invariant kinetic energy metric $\kappa$. However, as it was observed in \cite{paula}, this choice of $W$ is not in general the most convenient when working with particular examples. In fact, in Sec. \ref{sec:examples} we study three different examples where the choice of $W$ is not $\kappa$-orthogonal to the constraints.
\end{remark}

\noindent {\bf The nonholonomic momentum map.}  \ Following \cite{BKMM}, define the subbundle $\mathfrak{g}_\subS \to \M$ of the
trivial bundle $\mathfrak{g}_\subM=\M \times \mathfrak{g} \to \M$ by
$$
\xi \in \mathfrak{g}_\subS |_m \ \Leftrightarrow \xi_\subM(m) \in \S_m.
$$
The {\it nonholonomic momentum map} is a map ${\mathcal J}^\nh:\M \to \mathfrak{g}_\subS^*$ given, at each $m \in \M$, by
\begin{equation} \label{Eq:NHMomMap}
\langle {\mathcal J}^\nh(m), \xi \rangle = {\bf i}_{\xi_\subM} \Theta_\subM, \qquad \mbox{for } \xi \in \mathfrak{g}_\subS,
\end{equation}
where $\Theta_\subM = \iota^*\Theta_Q$ and $\Theta_Q$ is the canonical 1-form on $T^*Q$.
Note that ${\mathcal J}^\nh:\M \to \mathfrak{g}_\subS^*$ satisfies
$$
{\bf i}_{\xi_\M} \Omega_\subM|_\C =  d\langle {\mathcal
J}^\nh,\xi\rangle |_\C - \pounds_{\xi_\M} \Theta_\subM|_\C,
$$
where $\xi$ is a (not necessarily constant) section of
$\mathfrak{g}_\subS$.  Note that $\langle {\mathcal J}^\nh,
\xi\rangle$ is not necessarily conserved by the nonholonomic motion,
since $\pounds_{\xi_\M} \Theta_\subM (X_\nh)$ might be different
from zero.
\bigskip

%

\noindent {\bf The 2-form $\langle {\mathcal J} , \mathcal{K}_\subW \rangle$.}  \ Denote by $P_{\mbox{\tiny{$D$}}}:TQ \to D$ and $P_{\mbox{\tiny{$W$}}}:TQ \to W$ the projections associated with the decomposition \eqref{Eq:D+W} for a $G$-invariant $W$. Following \cite{BKMM}, consider the map ${\bf A}_{\mbox{\tiny{$W$}}}: TQ \to \mathfrak{g}$ such that, for $v_q \in T_qQ$,
 ${\bf A}_{\mbox{\tiny{$W$}}} (v_q)= \xi$  if and only if $P_{\mbox{\tiny{$W$}}} (v_q) = \xi_Q(q).$
We define the $\mathfrak{g}$-valued 2-form ${\bf K}_{\mbox{\tiny{$W$}}}$ on $Q$ given at each $X, Y \in TQ$  by
\begin{equation} \label{Eq:Def_K}
{\bf K}_{\mbox{\tiny{$W$}}} (X, Y ) = d{\bf A}_{\mbox{\tiny{$W$}}} (P_{\mbox{\tiny{$D$}}} (X), P_{\mbox{\tiny{$D$}}} (Y)).
\end{equation}
The projection $\tau:\M \to Q$ induces the $\W$-curvature $\mathcal{K}_{\mbox{\tiny{$\mathcal{W}$}}}$, which is the $\mathfrak{g}$-valued 2-form on $\M$ defined by
$$
\mathcal{K}_\subW(X,Y) =  {\bf K}_{\mbox{\tiny{$W$}}} (T\tau (X), T\tau(Y)) \qquad \mbox{for } X,Y \in T\M.
$$
This definition coincides with the one given in \cite{paula}: $\mathcal{K}_{\mbox{\tiny{$\mathcal{W}$}}} (X, Y ) = d \mathcal{A}_\subW(P_{\mbox{\tiny{$\C$}}} (X), P_{\mbox{\tiny{$\C$}}} (Y)), $
where $ {\mathcal A}_\subW  = \tau^*{\bf A}_{\mbox{\tiny{$W$}}} : T\M \to \mathfrak{g}$, i.e., for $v_m \in T_m\M$,
 \begin{equation}\label{Eq:W-connection}
{\mathcal A}_\subW (v_m)= \xi \qquad \mbox{if and only if} \qquad P_\subW (v_m) = \xi_\subM(m),
\end{equation}
with $P_{\mbox{\tiny{$\C$}}}:T\M \to \C$ and $P_\subW:T\M \to \W$ the projections associated with the decomposition \eqref{Eq:C+W}.

Therefore, if  ${\mathcal J}:\M \to \mathfrak{g}^*$ is the restriction to $\M$ of the canonical momentum map $J:T^*Q \to \mathfrak{g}^*$ (i.e., ${\mathcal J} = \iota^* J$) then there is a well defined $G$-invariant 2-form on $\M$  given by
\begin{equation}\label{Eq:JK}
\langle {\mathcal J}, {\mathcal K}_\subW \rangle,
\end{equation}
where $\langle \cdot , \cdot \rangle$ is the natural pairing between
$\mathfrak{g}^*$ and $\mathfrak{g}$, see \cite{paula}.
\bigskip

\noindent {\bf The vertical-symmetry condition.} \ Following
\cite{paula}, we say that $W$ satisfies the {\it vertical-symmetry
condition} if there is a Lie subalgebra $\mathfrak{g}_\subW \subseteq
\mathfrak{g}$ such that
\begin{equation}\label{Eq:VerticalSymmetries}
W_q=\{\eta_Q(q) \ : \ \eta \in \mathfrak{g}_\subW \}.
\end{equation}

If $W$ satisfies the vertical-symmetry condition then the vertical
complement $\W$ of $\C$, defined in \eqref{Eq:DefW}, will satisfy
this condition as well, since for the Lie algebra
$\mathfrak{g}_\subW$,
$$
\W_m = \{\eta_\subM(m) \ : \ \eta \in \mathfrak{g}_\subW \}.
$$

\ \\ 

We remark that the vertical-symmetry condition is a condition imposed on the chosen complement $W$ of the constraints. We may find two different complements where one admits the vertical-symmetry condition while the other not. 
On the other hand, the {\it horizontal-symmetry condition} \cite{BKMM} is a condition of a different nature since it is imposed on the constraints. More precisely, $\xi \in \mathfrak{g}$ is an {\it horizontal symmetry} if $\xi_Q \in \Gamma(D)$ (or $\xi_\subM \in \Gamma(\C)$).  If $\mathfrak{g}_\subS$ is a Lie subalgebra of $\mathfrak{g}$ then the nonholonomic system has horizontal symmetries. Most of the examples developed in \cite{paula} and, in particular the Chaplygin ball carried out in section \ref{Ss:Ex:Ball} do not have horizontal symmetries. However we show that it is possible to choose a vertical complement $\W$ of the constraints satisfying the vertical-symmetry condition. 

When it is possible to choose a vertical complement $\W$ of the constraints such that the vertical-symmetry condition is satisfied,  several facts are simplified:

\begin{proposition}[{\cite{paula}}] \label{Prop:ConseqVertSym} If the vertical complement $W$ of the constraints $D$  is $G$-invariant and satisfies the vertical-symmetry condition \eqref{Eq:VerticalSymmetries} then
\begin{enumerate}
 \item[$(i)$] The Jacobi identity of the reduced bracket $\{ \cdot, \cdot \}_{\emph\red}$ on $\M/G$ becomes
 $$\{f,\{g,h\}_{\emph\red}\}_{\emph\red} \circ \rho + \textup{cyclic} = - d\langle {\mathcal J}, {\mathcal K}_\subW \rangle (X_{\rho^*f},X_{\rho^*g},X_{\rho^*h}),$$
 where $f,g,h \in C^\infty(\M/G)$ and $X_{\rho^*\!f}= \{ \cdot, \rho^*f \}_{\emph\nh}$ \textup{(}analogously with $X_{\rho^*g},X_{\rho^*h}$\textup{)}.

%
 \item[$(ii)$] At each $m \in \M$, the Lie algebra $\mathfrak{g}$ splits as follows:
\begin{equation} \label{eq:proj}
\mathfrak{g} = \mathfrak{g}_\subS |_m \oplus \mathfrak{g}_\subW.
\end{equation}

 \item[$(iii)$] For each $\eta \in \mathfrak{g}$, let us denote by $X_\eta = \left(  P_{\mathfrak{g}_\subS}(\eta) \right)_{\!\subM}$
 the infinitesimal generator on $\M$ associated to $P_{\mathfrak{g}_\subS}(\eta) \in \mathfrak{g}_\subS$ where $P_{\mathfrak{g}_\subS} : \mathfrak{g}_\subM \to \mathfrak{g}_\subS$ is the projection defined in \eqref{eq:proj}.
 Then
 $$
{\bf i}_{X_\eta} (\Omega_\subM + \langle {\mathcal J}, \mathcal{K}_\subW \rangle ) |_\C = d   \langle {\mathcal J}^{\emph{\nh}} , P_{\mathfrak{g}_\subS}(\eta) \rangle|_\C.
$$
\end{enumerate}
\end{proposition}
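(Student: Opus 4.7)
The three statements build on each other, so I would prove them in the order (ii), (iii), (i): part (ii) provides the algebraic splitting needed to define $P_{\mathfrak{g}_\subS}$ in (iii), while (iii) supplies the 2-form identity used in the Jacobiator computation of (i). For (ii) my plan is a pointwise dimension count at each $m \in \M$. The dimension assumption together with inclusion-exclusion yields $\dim V_q = \dim S_q + \dim W_q$; freeness identifies $\mathfrak{g}$ with $V_q$ through $\eta \mapsto \eta_Q(q)$, and under this isomorphism $\mathfrak{g}_\subS|_m$ is sent onto $S_q$ (by the defining relation of $\mathfrak{g}_\subS$) while $\mathfrak{g}_\subW$ is sent onto $W_q$ (by the vertical-symmetry condition), so the dimensions add to $\dim \mathfrak{g}$. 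Triviality of the intersection is immediate: any $\xi \in \mathfrak{g}_\subS|_m \cap \mathfrak{g}_\subW$ produces $\xi_\M(m) \in \S_m \cap \W_m \subseteq \C_m \cap \W_m = \{0\}$, and freeness forces $\xi = 0$.

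For (iii), I would start from the canonical momentum map identity $\mathbf{i}_{\eta_\M} \Omega_\M = d\langle \mathcal{J}, \eta\rangle$ for constant $\eta$, apply Cartan's formula, and extend to varying sections $\xi$ of $\mathfrak{g}_\subS$ to get
\[
\mathbf{i}_{\xi_\M} \Omega_\M |_\C = d\langle \mathcal{J}^\nh, \xi\rangle|_\C - \pounds_{\xi_\M} \Theta_\M|_\C.
\]
Taking $\xi = P_{\mathfrak{g}_\subS}(\eta)$ so that $X_\eta = \xi_\M$, the claim reduces to $\mathbf{i}_{X_\eta} \langle \mathcal{J}, \mathcal{K}_\subW \rangle|_\C = \pounds_{X_\eta} \Theta_\M|_\C$. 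I would verify this identity by expanding $\mathcal{K}_\subW(X, Y) = d\mathcal{A}_\subW(P_{\mbox{\tiny{$\C$}}} X, P_{\mbox{\tiny{$\C$}}} Y)$, noting that $\mathcal{A}_\subW$ vanishes on $\C$, so for $Y \in \C$ one has $\mathcal{K}_\subW(X_\eta, Y) = -\mathcal{A}_\subW([X_\eta, Y])$; pairing with $\mathcal{J}$ and using the relation $\langle \mathcal{J}, \alpha\rangle = \mathbf{i}_{\alpha_\M} \Theta_\M$ for a varying $\alpha$ then matches the Lie-derivative term by a direct computation.

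For (i), the plan is to view the reduced bracket as governed (on $G$-invariant Hamiltonian vector fields) by the compressed 2-form $\Omega_\M + \langle \mathcal{J}, \mathcal{K}_\subW\rangle$, as suggested by (iii). A general fact about almost Poisson brackets built from a 2-form on a (possibly non-integrable) distribution is that the Jacobiator, evaluated on Hamiltonian vector fields, coincides up to sign with the differential of the defining 2-form. Pulling back via $\rho$ and using $d\Omega_\M = 0$ leaves only $-d\langle \mathcal{J}, \mathcal{K}_\subW\rangle$, yielding the stated formula. The principal obstacle is the explicit computation in (iii) verifying $\mathbf{i}_{X_\eta} \langle \mathcal{J}, \mathcal{K}_\subW\rangle|_\C = \pounds_{X_\eta} \Theta_\M|_\C$, which requires carefully tracking how the derivative of the varying section $P_{\mathfrak{g}_\subS}(\eta)$ enters through $\mathcal{A}_\subW([X_\eta, \cdot])$ and reproduces the Lie derivative of the canonical 1-form.
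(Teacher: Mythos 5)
The paper does not actually prove this proposition; it is imported verbatim from \cite{paula}, so there is no in-paper argument to compare against. Judged on its own terms, your plan for items $(ii)$ and $(iii)$ is sound. The dimension count in $(ii)$ is exactly right (freeness identifies $\mathfrak{g}$ with $V_q$, the two subspaces map onto $S_q$ and $W_q$, and $S_q \cap W_q \subseteq D_q \cap W_q = 0$). For $(iii)$, the identity ${\bf i}_{\xi_\M}\Omega_\subM|_\C = d\langle\mathcal{J}^{\nh},\xi\rangle|_\C - \pounds_{\xi_\M}\Theta_\subM|_\C$ you start from is displayed in Section 2 of the paper, and the residual claim ${\bf i}_{X_\eta}\langle\mathcal{J},\mathcal{K}_\subW\rangle|_\C = \pounds_{X_\eta}\Theta_\subM|_\C$ does check out along the lines you indicate: writing $P_{\mathfrak{g}_\subW}(\eta) = g^A e_A$ in a basis of $\mathfrak{g}_\subW$ and using the $G$-invariance of $\C$ together with $\mathcal{A}_\subW((e_A)_\M) = e_A$, both sides reduce to $-dg^A(\cdot)\,\langle\mathcal{J},e_A\rangle$. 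This is where the vertical-symmetry condition and the invariance of $W$ actually enter, and your sketch correctly isolates that computation as the crux.

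Item $(i)$ has a genuine gap: the ``general fact'' you invoke --- that for a bracket built from a 2-form on a possibly non-integrable distribution the Jacobiator equals $\pm d\Omega$ on Hamiltonian vector fields --- is false, and the paper itself contains the counterexample. The unreduced nonholonomic bracket is defined by the \emph{closed} form $\Omega_\subM$ restricted to $\C$, so your principle would make it Poisson; but its characteristic distribution $\C$ is non-integrable, so it is not even twisted Poisson. For a non-involutive $F$ the correct Jacobiator formula carries extra terms of the type $\Omega(P_{\subW}[X_f,X_g],X_h)$ measuring the failure of $[X_f,X_g]$ to lie in $F$ (one cannot replace $\Omega(\cdot\,,X_h)$ by $-dh(\cdot)$ on vectors outside $F$). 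The entire content of item $(i)$ is that, for $G$-invariant arguments and under the vertical-symmetry condition, precisely these non-involutivity terms organize themselves into $-d\langle\mathcal{J},\mathcal{K}_\subW\rangle(X_{\rho^*f},X_{\rho^*g},X_{\rho^*h})$: one uses $P_\subW[X_f,X_g] = -\bigl(\mathcal{K}_\subW(X_f,X_g)\bigr)_\M$ (the same curvature computation as in your $(iii)$), the momentum-map identity ${\bf i}_{\xi_\M}\Omega_\subM = d\langle\mathcal{J},\xi\rangle$ for the constant elements $\xi \in \mathfrak{g}_\subW$ supplied by the vertical-symmetry condition, and the vanishing of $d(\rho^*h)$ on $\W \subseteq \V$. ``Pulling back via $\rho$ and using $d\Omega_\subM = 0$'' eliminates nothing that wasn't already zero and leaves all of this work undone. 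Alternatively one can reach $(i)$ through the gauge-transformation formalism (the bracket gauge-related to $\{\cdot,\cdot\}_\nh$ by $\langle\mathcal{J},\mathcal{K}_\subW\rangle$ reduces to the Poisson bracket $\Lambda$, and a gauge transformation by a 2-form $B$ shifts the twist by $dB$), but that route also requires proving the first of these assertions, which is again where the vertical-symmetry condition does its work; neither step is present in your proposal.
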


\noindent {\bf The Reduction Theorem.}  \ From Prop.~\ref{Prop:ConseqVertSym} we observe that the reduced bracket $\{\cdot, \cdot \}_\red$ is not necessarily twisted Poisson \eqref{Def:Twisted} since $d \langle {\mathcal J}\!,\!{\mathcal K}_\subW \rangle$ might not be a well defined 3-form on $\M/G$.

Recall that a form $\alpha$ on a manifold $P$ is {\it basic} with respect to the projection $\rho_M:P\to M$ if there is a form $\bar{\alpha}$ on $M$ such that $\rho^*_M\bar{\alpha} = \alpha$.

\begin{theorem}[\cite{paula}]\label{T:Reduced-Dyn}
Let $(\M, \{\cdot , \cdot \}_{\emph\nh}, \Ham_\subM)$ be a nonholonomic
system with a $G$-symmetry, and assume that the $G$-invariant vertical complement $W$ verifies
the vertical-symmetry condition \eqref{Eq:VerticalSymmetries}.
If the 3-form  $d\langle {\mathcal J}, \mathcal{K}_\subW \rangle$, defined in \eqref{Eq:JK}, is basic with
respect to the orbit projection $\rho: \M \to \M /G$, 
then the reduced bracket $\{\cdot, \cdot \}_{\emph\red}$ is twisted Poisson.
\end{theorem}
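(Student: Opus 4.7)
The plan is to combine Proposition~\ref{Prop:ConseqVertSym}(i) with the basicity hypothesis to exhibit an explicit closed 3-form $\phi$ on $\M/G$ that measures the failure of the Jacobi identity of $\{\cdot,\cdot\}_\red$, thereby witnessing that $\{\cdot,\cdot\}_\red$ is twisted Poisson in the sense of \eqref{Def:Twisted}.

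First I would unpack the basicity assumption: since $d\langle \mathcal{J}, \mathcal{K}_\subW\rangle$ is basic with respect to $\rho:\M\to\M/G$, there exists a (unique) 3-form $\phi$ on $\M/G$ with $\rho^*\phi = d\langle \mathcal{J}, \mathcal{K}_\subW\rangle$. This $\phi$ is automatically closed: indeed, $\rho^*(d\phi) = d(\rho^*\phi) = d^2\langle \mathcal{J}, \mathcal{K}_\subW\rangle = 0$, and since $\rho$ is a surjective submersion, $\rho^*$ is injective on forms, hence $d\phi=0$.

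Next I would identify the Hamiltonian vector fields on both sides. For $f\in C^\infty(\M/G)$, let $X_f^\red = \{\cdot,f\}_\red$ and $X_{\rho^*\!f}=\{\cdot,\rho^*\!f\}_\nh$. The definition \eqref{Eq:RedNHbracket} yields, for every $g\in C^\infty(\M/G)$,
\[
X_{\rho^*\!f}(\rho^*g) \;=\; \{\rho^*g,\rho^*\!f\}_\nh \;=\; \{g,f\}_\red\circ\rho \;=\; X_f^\red(g)\circ\rho,
\]
so $X_{\rho^*\!f}$ and $X_f^\red$ are $\rho$-related. Consequently, for any triple $f,g,h\in C^\infty(\M/G)$,
\[
d\langle \mathcal{J}, \mathcal{K}_\subW\rangle(X_{\rho^*\!f},X_{\rho^*g},X_{\rho^*h}) \;=\; (\rho^*\phi)(X_{\rho^*\!f},X_{\rho^*g},X_{\rho^*h}) \;=\; \phi(X_f^\red,X_g^\red,X_h^\red)\circ\rho.
\]

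Plugging this identity into Proposition~\ref{Prop:ConseqVertSym}(i) gives
\[
\bigl(\{f,\{g,h\}_\red\}_\red + \text{cyclic}\bigr)\circ\rho \;=\; -\,\phi(X_f^\red,X_g^\red,X_h^\red)\circ\rho,
\]
and because $\rho$ is surjective we may cancel $\rho$ to conclude that the Jacobiator of $\{\cdot,\cdot\}_\red$ equals $-\phi$ evaluated on the reduced Hamiltonian vector fields. Together with the closedness of $\phi$, this is exactly the definition of a twisted Poisson bracket (with 3-form $-\phi$), finishing the proof.

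The only genuinely delicate point is verifying that a basic form on $\M$ descends to a form on $\M/G$ enjoying the expected naturality with respect to $\rho$-related Hamiltonian vector fields; once the relation between $X_{\rho^*\!f}$ and $X_f^\red$ is in place, the rest is a direct translation of Proposition~\ref{Prop:ConseqVertSym}(i) through $\rho^*$.
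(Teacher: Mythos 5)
Your argument is correct. The paper does not actually prove this theorem — it imports it from \cite{paula} — but your derivation is exactly the intended one: the basicity hypothesis produces a closed 3-form $\phi$ on $\M/G$ with $\rho^*\phi = d\langle \mathcal{J},\mathcal{K}_\subW\rangle$, the $\rho$-relatedness of $X_{\rho^*\!f}$ and $X_f^{\red}$ lets you push the right-hand side of Proposition~\ref{Prop:ConseqVertSym}$(i)$ down to the quotient, and cancelling the surjection $\rho$ yields the twisted-Jacobi identity \eqref{Def:Twisted} with twisting form $-\phi$. No gaps.
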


\begin{corollary} \label{Cor:ConservedFunction} \textup{(\cite[Corollary 6.9]{paula})} If $\langle {\mathcal J}, \mathcal{K}_\subW \rangle$ is basic with respect to the orbit projection $\rho: \M \to \M /G$, then at each $\eta \in \mathfrak{g}$ the function  $f_\eta$ on $\M$ defined by
$$
f_\eta (m)= \langle {\mathcal J}^{\emph\nh}(m), P_{\mathfrak{g}_\subS}(\eta)\rangle
$$
is conserved by the nonholonomic motion, where $P_{\mathfrak{g}_\subS}:\mathfrak{g} \to \mathfrak{g}_\subS$ is the projection associated to the decomposition \eqref{eq:proj}.
\end{corollary}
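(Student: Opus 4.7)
The plan is to show that $df_\eta(X_\nh) = 0$ by evaluating both sides of the identity in Proposition~\ref{Prop:ConseqVertSym}(iii) on the nonholonomic vector field $X_\nh$. Since $X_\nh$ is a section of $\C$, and applying $\mathbf{i}_{X_\nh}$ to the identity there, one gets
\begin{equation*}
df_\eta(X_\nh) \;=\; \Omega_\subM(X_\eta, X_\nh) + \langle {\mathcal J}, {\mathcal K}_\subW\rangle(X_\eta, X_\nh),
\end{equation*}
where $X_\eta = (P_{\mathfrak{g}_\subS}(\eta))_\subM$, so that the proof reduces to showing that each of the two terms on the right vanishes.

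For the first term, note that $X_\eta$ takes values in $\S \subset \C$, so the nonholonomic dynamics equation \eqref{Eq:NHdynamics} gives $\Omega_\subM(X_\nh, X_\eta) = d\Ham_\subM(X_\eta) = X_\eta(\Ham_\subM)$. Since $X_\eta$ is the infinitesimal generator on $\M$ associated to the element $P_{\mathfrak{g}_\subS}(\eta) \in \mathfrak{g}$, and $\Ham_\subM$ is $G$-invariant by hypothesis, this derivative is zero. Hence $\Omega_\subM(X_\eta, X_\nh) = 0$.

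For the second term, we use the basicness hypothesis. Because $\langle {\mathcal J}, {\mathcal K}_\subW\rangle$ descends to a 2-form on $\M/G$, it is $G$-invariant and annihilates the vertical distribution $\V$; in particular $\mathbf{i}_{X_\eta}\langle {\mathcal J}, {\mathcal K}_\subW\rangle = 0$ since $X_\eta \in \Gamma(\V)$ as an infinitesimal generator of the $G$-action. Thus $\langle {\mathcal J}, {\mathcal K}_\subW\rangle(X_\eta, X_\nh) = 0$ as well, and we conclude $X_\nh(f_\eta) = df_\eta(X_\nh) = 0$.

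The main subtlety — though not really an obstacle — is simply keeping track of the roles played by the two pieces of Proposition~\ref{Prop:ConseqVertSym}(iii): the left-hand side encodes the $\C$-restricted differential of $f_\eta$, while the right-hand side splits into a "symplectic" piece controlled by $G$-invariance of $\Ham_\subM$ and a "curvature" piece controlled by the basicness hypothesis. It is precisely the basicness of $\langle {\mathcal J}, {\mathcal K}_\subW\rangle$ (and not the weaker condition that $d\langle {\mathcal J}, {\mathcal K}_\subW\rangle$ be basic used in Theorem~\ref{T:Reduced-Dyn}) that is needed to kill the curvature contribution.
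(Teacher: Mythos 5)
Your argument is correct. Note that the paper does not actually prove this statement --- it is quoted verbatim from \cite[Corollary~6.9]{paula} --- so there is no in-text proof to compare against; but your derivation is exactly the natural one given the machinery the paper assembles: contract the identity of Proposition~\ref{Prop:ConseqVertSym}$(iii)$ with $X_{\nh}\in\Gamma(\C)$, kill the $\Omega_\subM$ term via \eqref{Eq:NHdynamics} and the $G$-invariance of $\Ham_\subM$, and kill the curvature term because a basic form annihilates $\V\supseteq\S\ni X_\eta$. One small point worth being explicit about: $P_{\mathfrak{g}_\subS}(\eta)$ is a \emph{section} of the bundle $\mathfrak{g}_\subS\to\M$ rather than a fixed element of $\mathfrak{g}$ (the splitting \eqref{eq:proj} is pointwise), so $X_\eta$ is not literally an infinitesimal generator of the action; your two vanishing claims survive because both $d\Ham_\subM(\cdot)$ and ${\bf i}_{(\cdot)}\langle\mathcal{J},\mathcal{K}_\subW\rangle$ are tensorial, and at each $m$ the vector $X_\eta(m)$ coincides with the value at $m$ of the generator of the constant element $P_{\mathfrak{g}_\subS}(\eta)|_m$, which lies in $\S_m\subset\C_m\cap\V_m$. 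Your closing observation is also on target: it is the basicness of the 2-form $\langle\mathcal{J},\mathcal{K}_\subW\rangle$ itself, not merely of its differential, that is needed here, which is why the paper states this as a separate hypothesis from that of Theorem~\ref{T:Reduced-Dyn}.
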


\begin{remark} In references \cite{BoBoMa,BoMaBi,BoMaTsi} the authors studied the problem of Hamiltonization based in finding a Poisson bracket or a conformally Poisson bracket describing the reduced dynamics. Several examples are analyzed in these references.  
\end{remark}

\begin{remark} In \cite{paula} there is a description of the almost symplectic leaves in terms of the nonholonomic momentum map. The 2-form on each leaf involves the symplectic form of the Marsden-Weinstein reduction of the canonical symplectic manifold $(T^*Q, \Omega_{\mbox{\tiny{$Q$}}})$ and a diffeomorphism between $\M$ and the annihilator $W^\circ$ of the vertical complement $W$ of $D$.
In this paper, we will study these leaves from another viewpoint, involving conserved quantities.  As a consequence it will be possible to see the equations of motion as a nonholonomic version of the Lagrange-Routh equations on each almost symplectic leaf.
\end{remark}

\section{Reduction in two steps: compression and reduction} \label{S:2steps}

In this section we will perform, assuming the vertical-symmetry
condition, the reduction of a nonholonomic system in two steps,
unifying previous works \cite{paula,EhlersKoiller,Hoch,Oscar}.

Consider a nonholonomic system on a manifold $Q$ given by the
mechanical-type Lagrangian $L$ and the non-integrable distribution
$D$ with a $G$-symmetry. On the manifold $\M$ the nonholonomic
bracket $\{\cdot, \cdot \}_\nh$ and the Hamiltonian $\Ham_\subM$ are
invariant by the $G$-action on $\M$, denoted by
$$
\varphi : G \times \M \to \M.
$$

Let us choose a $G$-invariant vertical complement $W$ of $D$ in $TQ$
satisfying the {\it vertical-symmetry condition}
\eqref{Eq:VerticalSymmetries} and define the complement $\W$ of $\C$
in $T\M$ as in \eqref{Eq:DefW}. The Lie subalgebra
$\mathfrak{g}_\subW$ defined in \eqref{Eq:VerticalSymmetries} is an
ideal, due to the $G$-invariance of $\W$: if $\eta\in \mathfrak{g}$
and $\tilde \eta \in \mathfrak{g}_\subW$, then $[\eta,\tilde
\eta]_\subM = [\eta_\subM,\tilde \eta_\subM] \in \W$ and thus
$[\eta,\tilde \eta] \in \mathfrak{g}_\subW$.  Therefore, the
connected integration of $\mathfrak{g}_\subW$ is a normal subgroup
$G_\subW$ of $G$ that acts freely and properly on $\M$ by restricting the
action:
\begin{equation} \label{Eq:Gw}
\varphi_\subW: G_\subW \times \M \to \M \qquad \mbox{such that} \qquad \varphi_\subW(g,m) = \varphi(g,m) \mbox{ \ for \ } g\in G_\subW, \ m \in \M.
\end{equation}
Hence, we can consider the orbit projection $\rho_\subW :\M  \to \M/G_\subW$. On the other hand, $H:=G/G_\subW$ is a Lie group and the natural projection  $\varrho_G:G \to H$ is a smooth group homomorphism. The Lie group $H$ acts on $\M/G_\subW$ with the action $\varphi_H: H \times \M/G_\subW \to \M/G_\subW$ such that
\begin{equation}\label{Eq:H-action}
\varphi_H(\varrho_G(g), \rho_\subW(m)) = \rho_\subW \varphi (g,m) \qquad \mbox{for \ } g \in G, \ m\in \M.
\end{equation}
We denote by $\mathfrak{h}: = \mathfrak{g}/\mathfrak{g}_\subW$ the Lie algebra of $H$ (for more details see e.g.
\cite{DuisKo2000:Book}).

Our goal is to realize the reduced almost symplectic leaves
associated to $\{ \cdot , \cdot \}_\red$ in a two-step procedure:
first we {\it compress} by $G_\subW$ and then, under certain
conditions,  we perform an {\it almost symplectic reduction} by the
Lie group $H$.

\begin{remark}
 Since $G$ acts originally on $Q$ (the action on $\M$ is the restriction of the lifted action to $T^*Q$) then $G_\subW$ acts naturally on $Q$ as in \eqref{Eq:Gw} and $H$ acts on $Q/G_\subW$ as in \eqref{Eq:H-action}.
\end{remark}

From now on, we fix a vertical complement $W$ of $D$ with the
vertical-symmetry condition \eqref{Eq:VerticalSymmetries}, so that
the Lie group $G_\subW$ acts on $Q$ and on $\M$.

\subsection{Compression} \label{Ss:2S-Compression}
Since $G$ is a symmetry of the nonholonomic system, so is the normal
subgroup $G_\subW$ \cite{MMORR}. The vertical space associated to
the $G_\subW$-action on $Q$ is $W$, which is a complement of the
constraints by \eqref{Eq:D+W}. Thus the nonholonomic system is
$G_\subW$-Chaplygin \cite{Koiller1992}. Let us denote by  $\bar Q :=
Q/G_\subW$ the reduced manifold, and recall that $\M/G_\subW \simeq
T^*\bar{Q}$, so that $\rho_\subW: \M \to T^*\bar{Q}$ denotes the
orbit projection induced by the $G_\subW$-action on $\M$. After
compression, the dynamics takes place on $T^*\bar{Q}$ (see e.g.,
\cite{Koiller1992,Bloch:Book}).

In this case, note that the map $\mathcal{A}_\subW:TQ \to
\mathfrak{g}_\subW$ defined in \eqref{Eq:W-connection} is a
principal connection, and thus  ${\mathcal K}_\subW$ can be viewed
as a $\mathfrak{g}_\subW$-valued 2-form that is just its associated
curvature.

\begin{lemma}\label{L:JKbasic}
 The 2-form  $\langle {\mathcal J}, \mathcal{K}_\subW \rangle$ on $\M$ defined in \eqref{Eq:JK} is basic with respect to the orbit projection $\rho_\subW:\M \to T^*\bar{Q}$. 
\end{lemma}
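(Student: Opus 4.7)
My plan is to prove the statement by verifying the two standard criteria for a 2-form to be basic with respect to the $G_\subW$-orbit projection $\rho_\subW$: namely, $G_\subW$-invariance and vanishing of the contraction with every fundamental vector field $\eta_\subM$ for $\eta \in \mathfrak{g}_\subW$.

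The second condition is the easy one and should come out immediately from the alternative formula
$$\mathcal{K}_\subW(X,Y) = d\mathcal{A}_\subW(P_\C X,\, P_\C Y)$$
stated just after \eqref{Eq:W-connection}. For $\eta \in \mathfrak{g}_\subW$ one has $\eta_\subM \in \W$ by \eqref{Eq:VerticalSymmetries}, so $P_\C(\eta_\subM)=0$ and hence ${\bf i}_{\eta_\subM}\mathcal{K}_\subW = 0$. Pairing with $\mathcal{J}$ then gives ${\bf i}_{\eta_\subM}\langle \mathcal{J}, \mathcal{K}_\subW\rangle = 0$, which is exactly the vertical-vanishing condition with respect to $\rho_\subW$ (since the vertical distribution of $\rho_\subW$ is spanned precisely by such $\eta_\subM$).

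For $G_\subW$-invariance I would use the equivariance of the two pieces separately. Since $\mathcal{J} = \iota^* J$ is the restriction of the canonical cotangent-lift momentum map, it is $G$-equivariant, hence also $G_\subW$-equivariant: $\varphi_g^*\mathcal{J} = \mathrm{Ad}^*_{g^{-1}}\mathcal{J}$ for $g\in G_\subW$. On the other hand, because $W$ is $G$-invariant and satisfies the vertical-symmetry condition, the map $\mathcal{A}_\subW:T\M\to\mathfrak{g}_\subW$ is a principal connection on the $G_\subW$-bundle $\rho_\subW$, so its curvature-type form $\mathcal{K}_\subW$ is $\mathrm{Ad}$-equivariant: $\varphi_g^*\mathcal{K}_\subW = \mathrm{Ad}_{g^{-1}}\mathcal{K}_\subW$ for $g\in G_\subW$. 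Combining these two facts and using that the pairing $\langle\cdot,\cdot\rangle$ is $\mathrm{Ad}$-$\mathrm{Ad}^*$-invariant gives
$$\varphi_g^* \langle \mathcal{J},\mathcal{K}_\subW\rangle = \langle \mathrm{Ad}^*_{g^{-1}}\mathcal{J},\,\mathrm{Ad}_{g^{-1}}\mathcal{K}_\subW\rangle = \langle \mathcal{J},\mathcal{K}_\subW\rangle,$$
which is the required $G_\subW$-invariance.

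The main (and essentially only) technical point to justify carefully is that $\mathcal{A}_\subW$ is genuinely a principal connection for the $G_\subW$-action on $\M$, so that $\mathcal{K}_\subW$ really does transform by $\mathrm{Ad}$. This follows from \eqref{Eq:W-connection}: $\mathcal{A}_\subW(\eta_\subM) = \eta$ for $\eta\in\mathfrak{g}_\subW$ by construction, and $G_\subW$-equivariance of $\mathcal{A}_\subW$ comes from the $G$-invariance of both $\W$ and the decomposition \eqref{Eq:C+W}. Once this connection interpretation is in place, both items reduce to standard facts in the geometry of principal bundles, and the lemma follows.
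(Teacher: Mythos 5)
Your proof is correct, but it takes a genuinely different route from the paper's. The paper's own argument is essentially a one-line reduction to a quoted result: since $\mathcal{K}_\subW$ takes values in $\mathfrak{g}_\subW$, the pairing $\langle {\mathcal J}, \mathcal{K}_\subW \rangle$ only sees the $\mathfrak{g}_\subW^*$-component ${\mathcal J}_\subW$ of the momentum map, so it coincides with the 2-form $\langle {\mathcal J}_\subW, \mathcal{K}_\subW\rangle_{\mathfrak{g}_\subW}$ of ordinary Chaplygin compression, which is known to be basic with respect to $\rho_\subW$ by \cite{MovingFrames}. What you do instead is re-prove that quoted fact from scratch: horizontality (${\bf i}_{\eta_\subM}\langle {\mathcal J}, \mathcal{K}_\subW\rangle = 0$ for $\eta\in\mathfrak{g}_\subW$) from $P_\subC(\eta_\subM)=0$, and $G_\subW$-invariance from the equivariance of ${\mathcal J}$ and of the curvature $\mathcal{K}_\subW$ together with the $\mathrm{Ad}$--$\mathrm{Ad}^*$ invariance of the pairing. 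Both arguments rest on the same structural input, already recorded in the paragraph preceding the lemma: under the vertical-symmetry condition, $\mathcal{A}_\subW$ is a principal connection for the $G_\subW$-action with horizontal space $\C$, and $\mathcal{K}_\subW$ is its curvature. Your version buys self-containedness at the cost of some bookkeeping -- in particular, keep the left-action conventions straight (one has $\varphi_g^*\mathcal{K}_\subW=\mathrm{Ad}_g\circ\mathcal{K}_\subW$ and $\varphi_g^*{\mathcal J}=\mathrm{Ad}^*_{g^{-1}}\circ{\mathcal J}$, which indeed cancel in the pairing, though the signs of the exponents differ from what you wrote) -- while the paper's version outsources exactly this computation to the cited reference.
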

\begin{proof}
Denote by ${\mathcal J}_\subW : \M \to \mathfrak{g}_\subW^*$ the
restriction to $\M$ of the canonical momentum map associated to the
$G_\subW$-action on $T^*Q$; then  $\langle {\mathcal J},\xi \rangle
= \langle \mathcal{J}_\subW,\xi\rangle$ for all $\xi \in
\mathfrak{g}_\subW$. Viewing $\mathcal{K}_\subW$ as a
$\mathfrak{g}_\subW$-valued 2-form one can define the 2-form
$\langle {\mathcal J}_\subW,{\mathcal
K}_\subW\rangle_{\mathfrak{g}_\subW}$, where $\langle\cdot, \cdot
\rangle_{\mathfrak{g}_\subW}$ is the natural pairing on
$\mathfrak{g}_\subW$. It was shown in  \cite{MovingFrames} that
$\langle {\mathcal J}_\subW,{\mathcal
K}_\subW\rangle_{\mathfrak{g}_\subW}$ is basic with respect to the
orbit projection $\rho_\subW: \M \to \M/G_\subW$.  Finally, observe
that $\langle {\mathcal J}, \mathcal{K}_\subW \rangle$ defined in
\eqref{Eq:JK} coincides with  $\langle {\mathcal J}_\subW,{\mathcal
K}_\subW\rangle_{\mathfrak{g}_\subW}$.
\end{proof}
We denote by $\overline{\langle {\mathcal J},{\mathcal K}_\subW\rangle}$ the 2-form on $T^*\bar{Q}$  such that $\rho_\subW^*\overline{\langle {\mathcal J},{\mathcal K}_\subW\rangle} = \langle {\mathcal J},{\mathcal K}_\subW\rangle. $
If $\Ham_{\mbox{\tiny{$T^*\bar{Q}$}}}$ denotes the Hamiltonian
function on $T^*\bar{Q}$ so that
$\rho_\subW^*\Ham_{\mbox{\tiny{$T^*\bar{Q}$}}} = \Ham_\subM$, then
the {\it compressed Hamilton equations} are given by
\begin{equation}\label{Eq:ChapReduction}
{\bf i}_{\bar{X}_\nh}\bar{\Omega} = d\Ham_{\mbox{\tiny{$T^*\bar{Q}$}}},
\end{equation}
where
\begin{equation}\label{Eq:Chap2form}
 \bar{\Omega} := \Omega_{\mbox{\tiny{$\bar{Q}$}}} - \overline{\langle {\mathcal J},{\mathcal K}_\subW\rangle},
\end{equation}
with $\Omega_{\mbox{\tiny{$\bar{Q}$}}}$ the canonical 2-form on
$T^*\bar{Q}$, \cite{MovingFrames}. The {\it compressed dynamics} is
described by the integral curves of the vector field $\bar{X}_\nh$
on $T^*\bar{Q}$ (see \cite{Koiller1992,Hoch,BS93}).

\begin{remark}
In \cite{Fernandez2} the particular case when $d \overline{\langle
{\mathcal J},{\mathcal K}_\subW\rangle}$ vanishes was studied. In
this case,  $\bar{\Omega}$ is a genuine symplectic 2-form and the
compressed system \eqref{Eq:ChapReduction} is Hamiltonian.  However,
in general $d\bar{\Omega} \neq 0$, and so $\bar{\Omega}$ is only an
almost symplectic 2-form.
\end{remark}



Recall that $\mathfrak{h}:= \mathfrak{g}/\mathfrak{g}_\subW$ is the Lie algebra associated to the Lie group $H$.
\begin{proposition}\label{Prop:f-conserved}
Let $W$ be a $G$-invariant complement of the constraints \eqref{Eq:D+W} that satisfies the
vertical-symmetry condition \eqref{Eq:VerticalSymmetries}, and
denote by  $\varrho : \mathfrak{g} \to \mathfrak{h}$ the natural
projection. Then:
\begin{itemize}
 \item[$(i)$] For each $\eta \in \mathfrak{g}$ the function $f_\eta$ defined in Corollary \ref{Cor:ConservedFunction} is basic with respect to the orbit projection $\rho_\subW: \M \to \M/G_\subW \simeq T^*\bar{Q}$.
 \item[$(ii)$] For all $\eta, \tilde \eta \in \mathfrak{g}$ for which $\varrho(\eta)=\varrho(\tilde\eta)$ holds, $f_\eta = f_{\tilde\eta}$ .
\end{itemize}
\end{proposition}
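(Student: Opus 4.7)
My plan is to dispatch (ii) first, since it drops out of the algebraic splitting in Proposition~\ref{Prop:ConseqVertSym}(ii), and then derive (i) from (ii) by exploiting $G$-equivariance of $f_\eta$ in the argument $\eta$.

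For (ii), if $\varrho(\eta)=\varrho(\tilde\eta)$ then $\eta-\tilde\eta\in\mathfrak{g}_\subW$. The decomposition $\mathfrak{g}=\mathfrak{g}_\subS|_m\oplus\mathfrak{g}_\subW$ of Proposition~\ref{Prop:ConseqVertSym}(ii) forces $P_{\mathfrak{g}_\subS}(\eta-\tilde\eta)=0$ at every $m\in\M$, so $P_{\mathfrak{g}_\subS}(\eta)=P_{\mathfrak{g}_\subS}(\tilde\eta)$ and consequently $f_\eta=f_{\tilde\eta}$.

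For (i) I would first put $f_\eta$ in a manifestly geometric form. Writing $\eta = P_{\mathfrak{g}_\subS|_m}(\eta)+\mu_m$ with $\mu_m\in\mathfrak{g}_\subW$, the vector $P_{\mathfrak{g}_\subS|_m}(\eta)_\subM(m)$ lies in $\S_m\subset\C_m$ while $(\mu_m)_\subM(m)\in\W_m$, so it coincides with $P_\C(\eta_\subM(m))$ under the decomposition $T\M=\C\oplus\W$. Combined with the defining formula \eqref{Eq:NHMomMap} this yields the identity
$$ f_\eta(m) = \Theta_\subM\bigl(P_\C(\eta_\subM(m))\bigr). $$
From here I invoke three $G$-equivariance facts: $\Theta_\subM$ is $G$-invariant (cotangent-lifted action); the distributions $\C$ and $\W$ are $G$-invariant, so $P_\C$ commutes with $T\varphi_g$; and the infinitesimal generator transforms as $\eta_\subM(\varphi_g(m))=T\varphi_g\,(\mathrm{Ad}_{g^{-1}}\eta)_\subM(m)$. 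Chaining these produces the key identity
$$ f_\eta(\varphi_g(m)) = f_{\mathrm{Ad}_{g^{-1}}\eta}(m), \qquad g\in G,\ m\in\M. $$

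To conclude I restrict to $g\in G_\subW$. Since $\mathfrak{g}_\subW$ is an ideal of $\mathfrak{g}$ and $G_\subW$ is connected, the adjoint action of $G_\subW$ on $\mathfrak{h}=\mathfrak{g}/\mathfrak{g}_\subW$ is trivial (the infinitesimal version being $[\mathfrak{g}_\subW,\mathfrak{g}]\subseteq\mathfrak{g}_\subW$, which integrates along $G_\subW$). Hence $\varrho(\mathrm{Ad}_{g^{-1}}\eta)=\varrho(\eta)$, so (ii) gives $f_{\mathrm{Ad}_{g^{-1}}\eta}=f_\eta$. Therefore $f_\eta\circ\varphi_g=f_\eta$ for every $g\in G_\subW$; since $\rho_\subW$ is the orbit projection of a free proper action of the connected group $G_\subW$, this $G_\subW$-invariance is equivalent to being basic. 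The step that requires the most care is the geometric reformulation $f_\eta=\Theta_\subM\circ P_\C\circ\eta_\subM$: it is what converts the hidden $m$-dependence of $P_{\mathfrak{g}_\subS|_m}$ into a transparent $G$-equivariance statement and reduces (i) to the easy identity (ii), rather than forcing a direct Lie-derivative computation with the principal connection $\mathcal{A}_\subW$.
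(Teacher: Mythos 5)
Your proof is correct, and while part (ii) coincides with the paper's argument (both rest on $P_{\mathfrak{g}_\subS}(\eta-\tilde\eta)=0$ from the splitting \eqref{eq:proj}), your treatment of part (i) is genuinely different. The paper transports the problem to the annihilator $W^\circ$: it invokes the identity $\langle\mathcal{J}^{\nh}(m),P_{\mathfrak{g}_\subS}(\eta)\rangle=\langle J_0\circ\Psi(m),\eta\rangle$ from \cite[Theorem~6.5]{paula}, together with $\textup{Ker}\,\Omega_{\mbox{\tiny{$W^\circ$}}}=T\Psi(\W)$ from \cite[Lemma~5.6]{paula}, to compute $df_\eta(\xi_\subM)=\Omega_{\mbox{\tiny{$W^\circ$}}}(\eta_{\mbox{\tiny{$W^\circ$}}},T\Psi(\xi_\subM))=0$ directly. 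You instead stay on $\M$, observe that $(P_{\mathfrak{g}_\subS|_m}(\eta))_\subM(m)=P_\subC(\eta_\subM(m))$ by uniqueness of the splitting $T_m\M=\C_m\oplus\W_m$ (this is exactly the vector field $X_\eta$ of Proposition~\ref{Prop:ConseqVertSym}(iii)), so that $f_\eta=\Theta_\subM\circ P_\subC\circ\eta_\subM$, and then deduce the equivariance law $f_\eta\circ\varphi_g=f_{\mathrm{Ad}_{g^{-1}}\eta}$ from $G$-invariance of $\Theta_\subM$, $\C$ and $\W$. Combined with triviality of the $G_\subW$-adjoint action on $\mathfrak{h}$ (which holds since $\mathfrak{g}_\subW$ is an ideal and $G_\subW$ is connected), this reduces (i) to (ii). What your route buys is self-containment: it avoids the two external results from \cite{paula} and needs only the definition \eqref{Eq:NHMomMap} and standard equivariance facts; note also that the logical order is reversed, with (ii) proved first and feeding into (i). What the paper's route buys is the pointwise infinitesimal statement $df_\eta(\xi_\subM)=0$, which is literally what ``basic'' requires, whereas your global $G_\subW$-invariance argument relies (harmlessly, since the paper takes $G_\subW$ to be the connected integration of $\mathfrak{g}_\subW$) on connectedness of $G_\subW$ to conclude.
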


\begin{proof}
To show $(i)$ it is sufficient to prove that $df_\eta (\xi_\subM) = 0$ for all $\xi \in \mathfrak{g}_\subW$.
First, observe that since the annihilator $W^\circ \subset T^*Q$  of $W$ is $G$-invariant we can consider $J_0 := J \circ \iota_0 :W^\circ \to \mathfrak{g}^*$, the restriction of the canonical momentum map $J:T^*Q \to \mathfrak{g}^*$ to $W^\circ$.
It was observed in \cite[Theorem 6.5]{paula} that, at each $ m \in \M, \eta \in \mathfrak{g}$,
$$ 
 \langle {\mathcal J}^\nh (m) , P_{\mathfrak{g}_\subS}(\eta)
\rangle = \langle J_0 \circ \Psi (m) , \eta \rangle,
$$
where $\Psi: \M \to W^\circ$ is the $G$-invariant isomorphism $\Psi:\M \to W^\circ$ given by $\Psi = (\kappa_0^\flat |_D) \circ (\kappa|_D^\flat)^{-1} : \M \to W^\circ,$
for $\kappa_0$ a metric on $Q$ such that $D$ and $W$ are $\kappa_0$-orthogonal and $P_{\mathfrak{g}_\subS}: \mathfrak{g} \to \mathfrak{g}_\subS$ is the projection associated to decomposition \eqref{eq:proj}. 
Now,  for $\eta$ and $\xi \in \mathfrak{g}$, 
$$
df_\eta(\xi_\subM) = d \langle J_0 \circ \Psi, \eta \rangle( \xi_\subM) = \Psi^* d \langle J_0, \eta \rangle (\xi_\subM) = d\langle J_0 , \eta \rangle (T\Psi (\xi_\subM)).
$$
Observe that the map $J_0:W^\circ \to \mathfrak{g}^*$ is a momentum map for the presymplectic manifold $(W^\circ, \Omega_{\mbox{\tiny{$W^\circ$}}})$, where $\Omega_{\mbox{\tiny{$W^\circ$}}} = \iota^*_0 \Omega_Q$ and $\iota_0:W^\circ \to T^*Q$ is the natural inclusion. Moreover, it was shown in \cite[Lemma 5.6]{paula} that $\textup{Ker}\, \Omega_{\mbox{\tiny{$W^\circ$}}} = T\Psi(\W).$ Therefore, we see that $d\langle J_0 , \eta \rangle (T\Psi (\xi_\subM)) = \Omega_{\mbox{\tiny{$W^\circ$}}} (\eta_{\mbox{\tiny{$W^\circ$}}},T\Psi( \xi_\subM)) = 0$ and we conclude that $df_\eta(\xi_\subM) =0$.

Finally, to prove $(ii)$ observe that $P_{\mathfrak{g}_\subS}(\tilde\eta) = P_{\mathfrak{g}_\subS}(\eta)$ when $\varrho(\eta)=\varrho(\tilde\eta)$.
Thus, for all $m \in \M$, $
f_{\tilde\eta}(m) = \langle {\mathcal J}^\nh(m), P_{\mathfrak{g}_\subS}(\tilde\eta) \rangle = \langle {\mathcal J}^\nh(m), P_{\mathfrak{g}_\subS}(\eta) \rangle = f_\eta(m).$
\end{proof}

Therefore,  for each $\xi \in \mathfrak{h}$ there is a well-defined function $g_\xi \in C^\infty(T^*\bar{Q})$ such that
\begin{equation}\label{Eq:Defg_xi}
\rho_\subW^*g_\xi = f_\eta = \langle {\mathcal J}^\nh , P_{\mathfrak{g}_\subS}(\eta) \rangle.
\end{equation}
where  $\eta$ is any element in $\mathfrak{g}$ such that $\varrho(\eta)=\xi$.
If the function $f_\eta$ is conserved by $X_\nh$ then $g_\xi$ is conserved by the compressed motion $\bar{X}_\nh$ defined in  \eqref{Eq:ChapReduction}.

\subsection{Almost symplectic reduction} \label{Ss:2S-H_action}

Since  the Lie group $G$ acts on $Q$, there is an induced action of
the Lie group $H=G/G_\subW$  on $\bar{Q}$ as in \eqref{Eq:H-action}.
The lifted $H$-action to $T^*\bar{Q}$ is a symmetry of the
compressed system \eqref{Eq:ChapReduction}, that is, the Hamiltonian
$\Ham_{\mbox{\tiny{$T^*\bar{Q}$}}}$ and the 2-form $\bar\Omega$
given in \eqref{Eq:Chap2form} are $H$-invariant. As a consequence,
we have that $\overline{\langle \mathcal{J},
\mathcal{K}_\subW\rangle}$ is also $H$-invariant.

\noindent {\bf A momentum map for $(T^*\bar{Q}, \bar{\Omega})$.} \
Recall that the lifted $H$-action to $T^*\bar{Q}$, denoted by
$\varphi_H$, is Hamiltonian with respect to the canonical symplectic
form, and thus the canonical momentum map $\bar{J}: T^*\bar{Q} \to
\mathfrak{h}^*$ satisfies
$$
{\bf i}_{\xi_{T^*\bar{Q}}} \Omega_{\mbox{\tiny{$\bar{Q}$}}} = d \langle \bar{J}, \xi\rangle, \qquad \mbox{for } \xi \in \mathfrak{h}.
$$


\begin{proposition}\label{Prop:g_xi} For each $\xi \in \mathfrak{h}$ the function $g_\xi$ defined in \eqref{Eq:Defg_xi} is the canonical momentum map on  $T^*\bar{Q}$; that
is,
$$g_\xi (\bar m) = \langle \bar{J}(\bar m), \xi \rangle \qquad \mbox{for all} \quad \bar m \in T^*\bar Q.$$
\end{proposition}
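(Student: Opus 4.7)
The plan is to unpack both sides as functions on $\M$ (via $\rho_\subW^*$) and match them pointwise. Fix $\xi \in \mathfrak{h}$ and choose any $\eta \in \mathfrak{g}$ with $\varrho(\eta)=\xi$. By \eqref{Eq:Defg_xi} it suffices to show that
\[
\langle \bar{J}(\rho_\subW(m)),\,\xi\rangle \;=\; \langle \mathcal{J}^\nh(m),\,P_{\mathfrak{g}_\subS}(\eta)\rangle \qquad \text{for all } m\in\M,
\]
since then $\rho_\subW^*\langle\bar J,\xi\rangle = f_\eta = \rho_\subW^* g_\xi$ and $\rho_\subW$ is surjective.

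The first move is to write both sides as an evaluation of $m\in\M\subset T^*Q$ on a vector in $T_qQ$, where $q=\tau(m)$. For the right-hand side, use the definition \eqref{Eq:NHMomMap} of $\mathcal{J}^\nh$: for any $\zeta\in\mathfrak{g}_\subS|_m$, $\langle\mathcal{J}^\nh(m),\zeta\rangle = (\iota^*\Theta_Q)(\zeta_\subM(m)) = m(\zeta_Q(q))$ by the standard tautological formula for $\Theta_Q$ and the fact that $\zeta_\subM$ is the restriction of $\zeta_{T^*Q}$ to $\M$. Taking $\zeta = P_{\mathfrak{g}_\subS}(\eta)$ gives
\[
\langle \mathcal{J}^\nh(m),P_{\mathfrak{g}_\subS}(\eta)\rangle \;=\; m\bigl(P_{\mathfrak{g}_\subS}(\eta)_Q(q)\bigr).
\]

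For the left-hand side, the key is to describe the orbit projection $\rho_\subW:\M\to T^*\bar Q$ concretely. Since the vertical-symmetry condition makes $W$ the vertical distribution of the principal bundle $Q\to\bar Q=Q/G_\subW$, the splitting $TQ = D\oplus W$ is a principal connection with horizontal distribution $D$. Writing $m=\kappa^\flat(v_q)$ with $v_q\in D_q$, the quotient metric $\bar\kappa$ on $\bar Q$ (for which $T\pi|_D$ is an isometry) identifies $\rho_\subW(m)$ with $\bar\kappa^\flat(T\pi(v_q))\in T^*_{\bar q}\bar Q$, so that
\[
\rho_\subW(m)(u_{\bar q}) \;=\; m\bigl(u^h_q\bigr) \qquad \text{for any }u_{\bar q}\in T_{\bar q}\bar Q,
\]
where $u^h_q\in D_q$ denotes the horizontal lift. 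Applying this with $u_{\bar q}=\xi_{\bar Q}(\bar q)$ and using the cotangent-lift formula $\langle \bar J(\beta_{\bar q}),\xi\rangle = \beta_{\bar q}(\xi_{\bar Q}(\bar q))$ yields $\langle\bar J(\rho_\subW(m)),\xi\rangle = m\bigl(\xi_{\bar Q}(\bar q)^h\bigr)$.

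It then remains to identify $\xi_{\bar Q}(\bar q)^h$ with $P_{\mathfrak{g}_\subS}(\eta)_Q(q)$. By \eqref{Eq:H-action}, $T_q\pi(\eta_Q(q)) = \xi_{\bar Q}(\bar q)$, so $\eta_Q(q)$ is a (typically non-horizontal) lift of $\xi_{\bar Q}(\bar q)$. Splitting it according to \eqref{eq:proj} gives $\eta_Q(q) = P_{\mathfrak{g}_\subS}(\eta)_Q(q) + P_{\mathfrak{g}_\subW}(\eta)_Q(q)$, where the second summand lies in $W_q$ by the vertical-symmetry condition \eqref{Eq:VerticalSymmetries}. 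Hence the $D$-component of $\eta_Q(q)$, i.e.\ the horizontal lift of $\xi_{\bar Q}(\bar q)$, is precisely $P_{\mathfrak{g}_\subS}(\eta)_Q(q)$. Combining the two computations gives the desired equality.

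The only genuinely delicate point is the explicit formula for $\rho_\subW$ in terms of the horizontal lift; once the identification $\M/G_\subW\simeq T^*\bar Q$ via the connection $D$ is in hand, everything else is bookkeeping. An alternative, essentially equivalent route is to use the formula $f_\eta = \langle J_0\circ\Psi,\eta\rangle$ recalled in the proof of Proposition~\ref{Prop:f-conserved} together with the natural projection $W^\circ\to T^*\bar Q$, which factors $\rho_\subW$ through the $G$-invariant isomorphism $\Psi:\M\to W^\circ$.
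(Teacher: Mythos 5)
Your proof is correct, but it takes a genuinely different route from the paper's. You compute both sides pointwise as tautological pairings: $\langle \mathcal{J}^\nh(m), P_{\mathfrak{g}_\subS}(\eta)\rangle = m\bigl(P_{\mathfrak{g}_\subS}(\eta)_Q(q)\bigr)$ from the definition \eqref{Eq:NHMomMap}, and $\langle \bar J(\rho_\subW(m)),\xi\rangle = m\bigl(\xi_{\bar Q}(\bar q)^h\bigr)$ from the explicit Chaplygin identification $\M/G_\subW \simeq T^*\bar Q$ via horizontal lift, and then match the two vectors using the splitting $\eta = P_{\mathfrak{g}_\subS}(\eta)+P_{\mathfrak{g}_\subW}(\eta)$ together with $P_{\mathfrak{g}_\subS}(\eta)_Q(q)\in S_q\subset D_q$ and $P_{\mathfrak{g}_\subW}(\eta)_Q(q)\in W_q$. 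The paper instead works at the level of differentials: it uses Proposition~\ref{Prop:ConseqVertSym}(iii) to write ${\bf i}_{X_\eta}\Omega_\subM|_\C = df_\eta|_\C - {\bf i}_{X_\eta}\langle\mathcal{J},\mathcal{K}_\subW\rangle|_\C$, pushes this down with Lemma~\ref{L:JKbasic}, the identity $\rho_\subW^*\bar\Omega|_\C=\Omega_\subM|_\C$ and $T\rho_\subW(X_\eta)=\xi_{T^*\bar Q}$, and concludes ${\bf i}_{\xi_{T^*\bar Q}}\Omega_{\mbox{\tiny{$\bar{Q}$}}}=dg_\xi=d\langle\bar J,\xi\rangle$. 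Your approach buys two things: it is self-contained (no appeal to the 2-form identity of Proposition~\ref{Prop:ConseqVertSym}(iii)), and it yields the equality of functions directly rather than of their differentials, whereas the paper's argument strictly speaking only pins down $g_\xi$ up to a locally constant function and implicitly uses that both sides vanish on the zero section. What it costs is having to make the identification $\rho_\subW(m)(u_{\bar q})=m(u^h_q)$ explicit — you correctly isolate this as the delicate step, and your justification via the isometry $T\pi|_D:(D,\kappa|_D)\to(T\bar Q,\bar\kappa)$ and the Legendre transforms is the standard one and is sound.
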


\begin{proof}
First, observe that by Prop. \ref{Prop:ConseqVertSym}(ii) and
recalling that $\W$ is the vertical space associated to the
$G_\subW$-action, $T\!\rho_\subW \left(
(P_{\mathfrak{g}_\subS}(\eta))_\M \right) = T\! \rho_\subW
(\eta_\subM)$ for $\eta \in \mathfrak{g}$. In addition, if $\xi \in
\mathfrak{h}$ is such that $\varrho(\eta)= \xi$, then
$T\!\rho_\subW(\eta_\M) = \xi_{T^*\bar{Q}}$ since, at each $m\in\M$,
we have
$$
\xi_{T^*\bar{Q}} (\rho_\subW(m)) = \frac{d}{dt} \varphi_H(exp(t\varrho(\eta)), \rho_\subW(m)) |_{t=0} =  \frac{d}{dt} \rho_\subW\varphi(exp(t\eta), m ) |_{t=0} = T\rho_\subW (\eta_\M (m)),
$$
where in the second equality we are using that $\varrho_G:G\to H$ is the homomorphism of Lie groups and that $\varrho =
T\varrho_G:\mathfrak{g}\to\mathfrak{h}$ (see \cite[Prop.~9.1.5]{MarsdenRatiu}). Hence, letting
$X_\eta = (P_{\mathfrak{g}_\subS}(\eta))_\M$, we obtain that
$T\!\rho_\subW (X_\eta ) = \xi_{T^*{\bar{Q}}}$. On the other hand,
Prop.~\ref{Prop:ConseqVertSym}(iii) implies that $ {\bf i}_{X_\eta}
\Omega_\subM |_\C = df_\eta |_\C - {\bf i}_{X_\eta} \langle
\mathcal{J}, \mathcal{K}_\subW \rangle |_\C$, and using
\eqref{Eq:Defg_xi} and Lemma \ref{L:JKbasic} we get
\begin{equation}\label{Proof:Prop:gxi}
{\bf i}_{X_\eta} \Omega_\subM |_\C = \rho_\subW^*dg_\xi |_\C - \rho_\subW^* \left( {\bf i}_{\xi_{T^*{\bar{Q}}}} \overline{\langle \mathcal{J}, \mathcal{K}_\subW \rangle} \right) |_\C.
\end{equation}
Using that $\rho_\subW^* \bar \Omega |_\C = \Omega_\subM |_\C$ \cite{MovingFrames}, then $\rho_\subW^* \left( {\bf i}_{\xi_{T^*\bar{Q}}} [\Omega_{\mbox{\tiny{$\bar{Q}$}}} - \overline{\langle \mathcal{J}, \mathcal{K}_\subW \rangle}] \right) |_\C = {\bf i}_{X_\eta} \Omega_\subM |_\C$, and thus, by \eqref{Proof:Prop:gxi}, we obtain that ${\bf i}_{\xi_{T^*\bar{Q}}} \Omega_{\mbox{\tiny{$\bar{Q}$}}} = d g_\xi.$  Finally, since ${\bf i}_{\xi_{T^*\bar{Q}}} \Omega_{\mbox{\tiny{$\bar{Q}$}}} = d \langle \bar{J}, \xi\rangle$, we obtain the desired result.
\end{proof}

Note that $\bar{J}:T^*\bar{Q} \to \mathfrak{h}^*$ is not necessarily a momentum map for the (almost symplectic) manifold $(T^*\bar{Q}, \bar{\Omega}= \Omega_{\bar Q} - \overline{\langle {\mathcal J}, {\mathcal K}_\subW\rangle})$ since
\begin{equation} \label{Eq:NotMomMap}
{\bf i}_{\xi_{T^*\bar{Q}}} \bar{\Omega}  = d \langle \bar{J}, \xi \rangle - {\bf i}_{\xi_{T^*\bar{Q}}} \overline{\langle {\mathcal J}, {\mathcal K}_\subW \rangle}.
\end{equation}
The next Proposition shows that  $\bar{J}:T^*\bar{Q} \to \mathfrak{h}^*$ may be conserved by the nonholonomic dynamics even if it is not a momentum map of $(T^*\bar{Q}, \bar{\Omega})$.

\begin{proposition} \label{Prop:JK=0}
For each  $\xi \in \mathfrak{h}$, the function $g_\xi \in C^\infty (T^*\bar{Q})$ is conserved by the compressed nonholonomic motion $\bar{X}_{\emph\nh}$ on $T^*\bar{Q}$ $($respectively $f_\eta \in C^\infty(\M)$ is conserved by the nonholonomic motion $X_{\emph\nh}$ for $\eta\in \mathfrak{g}$ such that $\xi = \varrho(\eta)$ $)$  if and only if $\overline{\langle {\mathcal J}, \mathcal{K}_\subW \rangle} (\bar{X}_{\emph\nh} , \xi_{\mbox{\tiny{$T^*\bar{Q}$}}} ) = 0$.
\end{proposition}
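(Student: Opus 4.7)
The plan is to compute $\bar{X}_\nh(g_\xi)$ directly using the compressed Hamilton equations together with the identification of $g_\xi$ with the canonical momentum map provided by Proposition~\ref{Prop:g_xi}, and then to transfer the result to $f_\eta$ via the pullback relation \eqref{Eq:Defg_xi}.

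First I would note that, by Proposition~\ref{Prop:g_xi}, $g_\xi = \langle \bar{J},\xi\rangle$, so by the defining property of the canonical momentum map on $T^*\bar Q$,
\begin{equation*}
dg_\xi = {\bf i}_{\xi_{T^*\bar{Q}}}\Omega_{\mbox{\tiny{$\bar Q$}}}.
\end{equation*}
Applying this 1-form to $\bar{X}_\nh$ gives
\begin{equation*}
\bar{X}_\nh(g_\xi) \;=\; \Omega_{\mbox{\tiny{$\bar Q$}}}(\xi_{T^*\bar{Q}},\bar{X}_\nh) \;=\; -\Omega_{\mbox{\tiny{$\bar Q$}}}(\bar{X}_\nh,\xi_{T^*\bar{Q}}).
\end{equation*}

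Next, I would pair the compressed Hamilton equation \eqref{Eq:ChapReduction} with $\xi_{T^*\bar{Q}}$. Since the Hamiltonian $\Ham_{\mbox{\tiny{$T^*\bar Q$}}}$ is $H$-invariant (it is the reduction of the $G$-invariant $\Ham_\subM$), one has $d\Ham_{\mbox{\tiny{$T^*\bar Q$}}}(\xi_{T^*\bar Q})=0$, and therefore
\begin{equation*}
\bar\Omega(\bar{X}_\nh,\xi_{T^*\bar{Q}}) \;=\; 0.
\end{equation*}
Using the decomposition \eqref{Eq:Chap2form}, $\bar{\Omega} = \Omega_{\mbox{\tiny{$\bar Q$}}} - \overline{\langle\mathcal{J},\mathcal{K}_\subW\rangle}$, this rewrites as
\begin{equation*}
\Omega_{\mbox{\tiny{$\bar Q$}}}(\bar{X}_\nh,\xi_{T^*\bar{Q}}) \;=\; \overline{\langle\mathcal{J},\mathcal{K}_\subW\rangle}(\bar{X}_\nh,\xi_{T^*\bar{Q}}).
\end{equation*}
Combining with the computation above,
\begin{equation*}
\bar{X}_\nh(g_\xi) \;=\; -\overline{\langle\mathcal{J},\mathcal{K}_\subW\rangle}(\bar{X}_\nh,\xi_{T^*\bar{Q}}),
\end{equation*}
which immediately gives the equivalence between conservation of $g_\xi$ along $\bar{X}_\nh$ and the vanishing of $\overline{\langle\mathcal{J},\mathcal{K}_\subW\rangle}(\bar{X}_\nh,\xi_{T^*\bar{Q}})$.

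Finally, for the statement about $f_\eta$, I would invoke \eqref{Eq:Defg_xi}: since $\rho_\subW^* g_\xi = f_\eta$ and the compressed vector field satisfies $T\rho_\subW\circ X_\nh = \bar{X}_\nh \circ \rho_\subW$, one has $X_\nh(f_\eta)=(\bar{X}_\nh(g_\xi))\circ\rho_\subW$, so $f_\eta$ is conserved by $X_\nh$ iff $g_\xi$ is conserved by $\bar{X}_\nh$. There is no real obstacle here; the only care needed is keeping track of signs in the skew-symmetric pairings and verifying that $\xi_{T^*\bar{Q}}$ really is the pushforward of $X_\eta = (P_{\mathfrak{g}_\subS}(\eta))_\subM$, which is exactly the content established in the proof of Proposition~\ref{Prop:g_xi}.
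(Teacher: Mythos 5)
Your proof is correct and follows essentially the same route as the paper's: both rest on Proposition~\ref{Prop:g_xi} identifying $g_\xi$ with $\langle\bar J,\xi\rangle$, the $H$-invariance of $\Ham_{\mbox{\tiny{$T^*\bar{Q}$}}}$, and the splitting $\bar\Omega=\Omega_{\mbox{\tiny{$\bar Q$}}}-\overline{\langle\mathcal J,\mathcal K_\subW\rangle}$. The only (cosmetic) difference is that you package the argument into the single identity $\bar X_{\nh}(g_\xi)=-\overline{\langle\mathcal J,\mathcal K_\subW\rangle}(\bar X_{\nh},\xi_{T^*\bar Q})$ and read off both implications at once, and you make explicit the projectability $T\rho_\subW\circ X_{\nh}=\bar X_{\nh}\circ\rho_\subW$ used to transfer the statement to $f_\eta$, where the paper argues the two directions separately and cites Propositions~\ref{Prop:g_xi} and~\ref{Prop:f-conserved}.
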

\begin{proof}
If $g_\xi$ is conserved then, by Prop.~\ref{Prop:g_xi}, $d\bar{J}_\xi (\bar{X}_\nh )= 0$ for $\bar{J}:T^*\bar{Q} \to \mathfrak{h}^*$ the canonical momentum map, that is, $0 =  \Omega_{\mbox{\tiny{$\bar{Q}$}}} (\bar{X}_\nh, \xi_{\mbox{\tiny{$T^*\bar{Q}$}}})$. On the other hand, by the $H$-invariance of the Hamiltonian $\Ham_{\mbox{\tiny{$T^*\bar{Q}$}}}$, we have that $0= d\Ham_{\mbox{\tiny{$T^*\bar{Q}$}}} (\xi_{\mbox{\tiny{$T^*\bar{Q}$}}}) =   \bar\Omega(\bar{X}_\nh, \xi_{\mbox{\tiny{$T^*\bar{Q}$}}})$ and thus by \eqref{Eq:Chap2form} $\overline{\langle {\mathcal J}, \mathcal{K}_\subW \rangle} (\bar{X}_{\nh} , \xi_{\mbox{\tiny{$T^*\bar{Q}$}}} ) = 0$. Conversely, again by the $H$-invariance of the Hamiltonian $\Ham_{\mbox{\tiny{$T^*\bar{Q}$}}}$ and \eqref{Eq:Chap2form}, we have that
$$
0 = d\Ham_{\mbox{\tiny{$T^*\bar{Q}$}}} (\xi_{\mbox{\tiny{$T^*\bar{Q}$}}} ) = \bar{\Omega}(\bar{X}_\nh, \xi_{\mbox{\tiny{$T^*\bar{Q}$}}})= \Omega_{\mbox{\tiny{$\bar{Q}$}}} (\bar{X}_\nh, \xi_{\mbox{\tiny{$T^*\bar{Q}$}}}) = - d\bar{J}_\xi (\bar{X}_\nh).
$$
Finally, by Prop.~\ref{Prop:g_xi} we conclude that $g_\xi$ is conserved and thus by Prop.~\ref{Prop:f-conserved} $f_\eta$ is conserved by $X_\nh$ for $\eta \in \mathfrak{g}_\subW$ such that $\xi = \varrho(\eta)$.
\end{proof}

Proposition \ref{Prop:JK=0} derives a necessary but not sufficient condition for $\bar{J}:T^*\bar{Q} \to \mathfrak{h}^*$ to be a momentum map of $\bar{\Omega}$. In fact, we have

\begin{corollary} \label{C:JKbasic=conservationLaws}
The canonical momentum map $\bar{J}:T^*\bar{Q} \to \mathfrak{h}^*$
is a momentum map for the manifold $(T^*\bar{Q}, \bar{\Omega}=
\Omega_{\bar Q} - \overline{\langle {\mathcal J}, {\mathcal
K}_\subW\rangle})$ if and only if $\overline{\langle {\mathcal J},
{\mathcal K}_\subW \rangle}$ is basic with respect to the orbit
projection $\rho_H: T^*\bar{Q}\to T^*\bar{Q}/H$.
\end{corollary}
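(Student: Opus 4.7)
The plan is to read off the corollary from equation \eqref{Eq:NotMomMap} combined with the standard characterization of basic forms on a principal bundle.

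First, I would observe that the defining property of $\bar{J}$ being a momentum map for $(T^*\bar{Q},\bar{\Omega})$ is precisely ${\bf i}_{\xi_{T^*\bar{Q}}}\bar{\Omega}=d\langle \bar{J},\xi\rangle$ for every $\xi\in\mathfrak{h}$. Comparing this with \eqref{Eq:NotMomMap}, which states
$$
{\bf i}_{\xi_{T^*\bar{Q}}}\bar{\Omega} = d\langle \bar{J},\xi\rangle - {\bf i}_{\xi_{T^*\bar{Q}}}\overline{\langle {\mathcal J},{\mathcal K}_\subW\rangle},
$$
shows that $\bar{J}$ is a momentum map for $\bar{\Omega}$ if and only if
$$
{\bf i}_{\xi_{T^*\bar{Q}}}\overline{\langle {\mathcal J},{\mathcal K}_\subW\rangle} = 0 \qquad \text{for all } \xi\in\mathfrak{h}.
$$
This reduces the corollary to showing that horizontality (in the above sense) of $\overline{\langle {\mathcal J},{\mathcal K}_\subW\rangle}$ is equivalent to its being basic with respect to $\rho_H$.

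For this second step, I would invoke the standard fact that a differential form on the total space of a principal $H$-bundle $\rho_H:T^*\bar{Q}\to T^*\bar{Q}/H$ is basic if and only if it is both $H$-invariant and horizontal (i.e., its contraction with every infinitesimal generator $\xi_{T^*\bar{Q}}$ vanishes). The $H$-invariance of $\overline{\langle {\mathcal J},{\mathcal K}_\subW\rangle}$ has already been noted in the paragraph preceding the statement (it follows from the $H$-invariance of both $\bar\Omega$ and $\Omega_{\bar Q}$, via \eqref{Eq:Chap2form}). Hence, under this $H$-invariance, basicness is equivalent to horizontality, and the corollary follows immediately from the first step.

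There is really no serious obstacle here; the content of the corollary is essentially a repackaging of \eqref{Eq:NotMomMap} via the basic-form criterion. The only mild subtlety worth mentioning (though not requiring proof) is that one must implicitly assume that $H$ acts freely and properly on $T^*\bar{Q}$ so that $\rho_H:T^*\bar{Q}\to T^*\bar{Q}/H$ is a genuine principal bundle and the basic-form criterion applies; this is ensured by the assumptions on the $G$-action made at the outset of Section~\ref{S:2steps}.
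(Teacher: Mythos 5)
Your proposal is correct and follows essentially the same route as the paper's proof: both reduce the statement, via \eqref{Eq:NotMomMap} and the canonical momentum map condition for $\Omega_{\mbox{\tiny{$\bar{Q}$}}}$, to the vanishing of ${\bf i}_{\xi_{T^*\bar{Q}}}\overline{\langle {\mathcal J},{\mathcal K}_\subW\rangle}$ for all $\xi\in\mathfrak{h}$, and then identify semi-basic plus $H$-invariant with basic. Nothing is missing.
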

\begin{proof} If $\bar{J}:T^*\bar{Q}\to \mathfrak{h}^*$ satisfies that ${\bf i}_{\xi_{\mbox{\tiny{$T^*\bar{Q}$}}}} \bar\Omega = d\bar{J}_\xi$ and ${\bf i}_{\xi_{\mbox{\tiny{$T^*\bar{Q}$}}}} \Omega_{\mbox{\tiny{$\bar{Q}$}}} = d\bar{J}_\xi$, then $\overline{\langle {\mathcal J}, {\mathcal K}_\subW\rangle}$ is semi-basic with respect to $\rho_H: T^*\bar{Q}\to T^*\bar{Q}/H$. Therefore $\overline{\langle {\mathcal J}, {\mathcal K}_\subW\rangle}$ is basic because it is $H$-invariant.  The converse is a direct consequence of \eqref{Eq:NotMomMap} and the $H$-invariance of $\overline{\langle {\mathcal J}, {\mathcal K}_\subW\rangle}$.
\end{proof}

\begin{notation} \label{Notation}  By Lemma \ref{L:JKbasic}, the 2-form $\overline{\langle {\mathcal J}, {\mathcal K}_\subW \rangle}$  defined on $\M/G_\subW \simeq T^*\bar{Q}$ is basic with respect to the orbit projection $\rho_H:T^*\bar{Q} \to T^*\bar{Q}/H$ if and only if the 2-form $\langle {\mathcal J}, {\mathcal K}_\subW \rangle$ on $\M$ is basic with respect to $\rho: \M \to \M/G$. If we assume that $\langle {\mathcal J}, {\mathcal K}_\subW\rangle$ is basic with respect to $\rho: \M \to \M/G$ then we denote by
\begin{enumerate}
 \item[$(i)$]  $\langle {\mathcal J}, {\mathcal K}_\subW \rangle_\red$ the 2-form on $\M/G$ such that $\rho^* \langle {\mathcal J}, {\mathcal K}_\subW \rangle_\red = \langle {\mathcal J}, {\mathcal K}_\subW \rangle$.
 \item[$(ii)$] $\overline{\langle {\mathcal J}, {\mathcal K}_\subW \rangle}_\red$ the 2-form on $T^*\bar{Q}/H$ such that $\rho_H^* \overline{\langle {\mathcal J}, {\mathcal K}_\subW \rangle}_\red = \overline{\langle {\mathcal J}, {\mathcal K}_\subW \rangle}$.
\end{enumerate}
It is clear that $\overline{\langle {\mathcal J}, {\mathcal K}_\subW \rangle}_\red = \langle {\mathcal J}, {\mathcal K}_\subW \rangle_\red$.
\end{notation}

%


\medskip

\noindent {\bf (Almost) symplectic reduction.}  \ Let us assume that the 2-form $\overline{\langle {\mathcal J}, {\mathcal K}_\subW \rangle}$ on $\M/G_\subW \simeq T^*\bar{Q}$ defined in Lemma \ref{L:JKbasic} is basic with respect to the orbit projection $\rho_H:T^*\bar{Q} \to T^*\bar{Q}/H$. Then by Corollary \ref{C:JKbasic=conservationLaws} 
$\bar{J}:T^*\bar{Q} \to \mathfrak{h}^*$ is a momentum map for the (not necessarily closed) 2-form $\bar \Omega= \Omega_{\bar{Q}} -  \overline{\langle {\mathcal J}, {\mathcal K}_\subW \rangle}$, and thus we can perform an (almost) symplectic reduction on  $(T^*\bar{Q}, \bar \Omega)$.
Let us denote by $H_{\bar\mu}$ the coadjoint isotropy group at $\bar{\mu} \in \mathfrak{h}^*$. Then, performing an (almost) symplectic reduction \cite{MW}, we obtain, at each $\bar{\mu} \in \mathfrak{h}^*$, an almost symplectic form $\bar{\omega}_{\bar{\mu}}$ on $\bar{J}^{-1}(\bar{\mu})/H_{\bar{\mu}}$ such that
\begin{equation} \label{Eq:SymplReduction}
\pi^*_{\bar{\mu}}(\bar{\omega}_{\bar{\mu}}) = \iota_{\bar{\mu}}^* (\bar{\Omega}),
\end{equation}
where $\pi_{\bar{\mu}}:\bar{J}^{-1}(\bar{\mu}) \to \bar{J}^{-1}(\bar{\mu})/H_{\bar{\mu}}$ is the projection to the quotient and $\iota_{\bar{\mu}}: \bar{J}^{-1}(\bar{\mu}) \to T^*\bar{Q}$ is the inclusion.

Therefore, the reduced dynamics is restricted to the level set $(\bar{J}^{-1}(\bar{\mu})/H_{\bar{\mu}}, \bar{\omega}_{\bar\mu})$, and is described by the vector field $X_\red^{\bar{\mu}} \in \mathfrak{X}(\bar{J}^{-1}(\bar{\mu}))$ given by
\begin{equation} \label{Eq:Hamilt-MWreduction}
{\bf i}_{X_\red^{\bar{\mu}}} \bar{\omega}_{\bar{\mu}} = d (\Ham_{\bar\mu} ),
\end{equation}
where $\Ham_{\bar\mu}: \bar{J}^{-1}(\bar{\mu})/H_{\bar{\mu}} \to \R$ is the Hamiltonian function such that $\pi^*_{\bar{\mu}} \Ham_{\bar\mu} = \iota_{\bar{\mu}}^* \Ham_{T^*\bar{Q}}$.

Now, we analyze the 2-form $\bar\omega_{\bar\mu}$ given in \eqref{Eq:SymplReduction}.  First, observe that since $\bar{J}:T^*\bar{Q} \to \mathfrak{h}^*$ is the canonical momentum map, then the Marsden-Weinstein reduction of the canonical symplectic manifold $(T^*\bar{Q}, \Omega_{\bar{Q}})$, gives the symplectic leaves
\begin{equation}\label{Eq:Proof:equivalence}
({\bar J}^{-1}(\bar{\mu})/H_{\bar{\mu}} , {\Omega}_{\bar{\mu}}),
\end{equation}
where $\pi^*_{\bar{\mu}}({\Omega}_{\bar{\mu}}) = \iota_{\bar{\mu}}^* (\Omega_{\bar{Q}} )$. Second, since the 2-form $\overline{\langle {\mathcal J}\!,\!{\mathcal K}_\subW \rangle}$ on $T^*\bar{Q}$ is basic with respect to the projection $\rho_H : T^*\bar{Q} \to T^*\bar{Q}/H$, we can consider the pull back of the 2-form $\overline{\langle {\mathcal J}\!,\!{\mathcal K}_\subW \rangle}_\red$ to the submanifold ${\bar J}^{-1}(\bar{\mu})/H_{\bar{\mu}}$, denoted by $[\overline{\langle {\mathcal J}\!,\!{\mathcal K}_\subW \rangle}_\red]_{\bar{\mu}}$. Then, observe that $\iota_{\bar\mu}^* \overline{\langle {\mathcal J}\!,\!{\mathcal K}_\subW \rangle} = \pi_{\bar\mu}^* \left( [\overline{\langle {\mathcal J}\!,\!{\mathcal K}_\subW \rangle}_\red]_{\bar{\mu}} \right)$. Hence,

\begin{proposition} \label{P:Reduced2-form} If the 2-form $\overline{\langle {\mathcal J}\!,\!{\mathcal K}_\subW \rangle}$ is basic with respect to the projection $\rho_H : T^*\bar{Q} \to T^*\bar{Q}/H$, then at each $\bar{\mu} \in \mathfrak{h}^*$ the
 (almost symplectic) reduction of $(T^*\bar{Q}, \bar\Omega = \Omega_{\mbox{\tiny{$\bar{Q}$}}}- \overline{\langle {\mathcal J}\!,\!{\mathcal K}_\subW \rangle})$ gives the almost symplectic manifold $({\bar J}^{-1}(\bar{\mu})/H_{\bar{\mu}} , \bar{\omega}_{\bar\mu})$, where
\begin{equation}\label{Eq:BarOmega-SymplecReduc}
\bar{\omega}_{\bar\mu} = {\Omega}_{\bar{\mu}} - [\overline{\langle {\mathcal J}\!,\!{\mathcal K}_\subW \rangle}_{\emph\red}]_{\bar{\mu}}.
\end{equation}
\end{proposition}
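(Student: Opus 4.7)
The plan is to use the defining equation \eqref{Eq:SymplReduction} of the almost symplectic reduction and to show that both summands of $\bar\Omega = \Omega_{\bar Q} - \overline{\langle \mathcal{J}, \mathcal{K}_\subW\rangle}$ descend independently to $\bar J^{-1}(\bar\mu)/H_{\bar\mu}$. Since $\pi^*_{\bar\mu}$ is injective on forms on the quotient, it then suffices to match the two pullbacks along $\pi_{\bar\mu}$.

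First I pull $\bar\Omega$ back along $\iota_{\bar\mu}$:
$$\iota_{\bar\mu}^* \bar\Omega = \iota_{\bar\mu}^* \Omega_{\bar Q} - \iota_{\bar\mu}^* \overline{\langle \mathcal{J}, \mathcal{K}_\subW\rangle}.$$
The first term is handled by the classical Marsden--Weinstein reduction of the canonical symplectic manifold $(T^*\bar Q, \Omega_{\bar Q})$ with respect to the canonical momentum map $\bar J: T^*\bar Q \to \mathfrak{h}^*$, which produces the symplectic leaves $(\bar J^{-1}(\bar\mu)/H_{\bar\mu}, \Omega_{\bar\mu})$ satisfying $\iota_{\bar\mu}^* \Omega_{\bar Q} = \pi^*_{\bar\mu} \Omega_{\bar\mu}$, exactly as recalled in \eqref{Eq:Proof:equivalence}.

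For the second term I use the basic-ness hypothesis, which gives a well-defined $2$-form $\overline{\langle \mathcal{J}, \mathcal{K}_\subW\rangle}_\red$ on $T^*\bar Q/H$ with $\rho_H^* \overline{\langle \mathcal{J}, \mathcal{K}_\subW\rangle}_\red = \overline{\langle \mathcal{J}, \mathcal{K}_\subW\rangle}$ (see Notation \ref{Notation}). Because $\bar J$ is $H$-equivariant (it is the canonical momentum map for the lifted $H$-action), the orbit projection $\rho_H$ restricted to $\bar J^{-1}(\bar\mu)$ factors as $\rho_H \circ \iota_{\bar\mu} = j_{\bar\mu} \circ \pi_{\bar\mu}$, where $j_{\bar\mu}: \bar J^{-1}(\bar\mu)/H_{\bar\mu} \hookrightarrow T^*\bar Q/H$ is the canonical inclusion of the reduced space into the full orbit space. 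Pulling $\overline{\langle \mathcal{J}, \mathcal{K}_\subW\rangle}_\red$ back along $j_{\bar\mu}$ produces the $2$-form $[\overline{\langle \mathcal{J}, \mathcal{K}_\subW\rangle}_\red]_{\bar\mu}$ on $\bar J^{-1}(\bar\mu)/H_{\bar\mu}$, and this commutative square immediately gives $\iota_{\bar\mu}^* \overline{\langle \mathcal{J}, \mathcal{K}_\subW\rangle} = \pi_{\bar\mu}^* [\overline{\langle \mathcal{J}, \mathcal{K}_\subW\rangle}_\red]_{\bar\mu}$.

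Combining these two identities with \eqref{Eq:SymplReduction} yields
$$\pi_{\bar\mu}^* \bar\omega_{\bar\mu} = \iota_{\bar\mu}^* \bar\Omega = \pi^*_{\bar\mu} \Omega_{\bar\mu} - \pi_{\bar\mu}^* [\overline{\langle \mathcal{J}, \mathcal{K}_\subW\rangle}_\red]_{\bar\mu},$$
and the desired formula \eqref{Eq:BarOmega-SymplecReduc} follows by injectivity of $\pi_{\bar\mu}^*$. There is no real obstacle here: the substantive work has been done in Lemma \ref{L:JKbasic} (for the compressed-level basic-ness) and Corollary \ref{C:JKbasic=conservationLaws} (guaranteeing that $\bar J$ truly serves as a momentum map for $\bar\Omega$, so that the almost symplectic reduction is available). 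The proposition itself is a bookkeeping step that combines classical Marsden--Weinstein reduction with the $H$-equivariant descent of the basic correction $2$-form.
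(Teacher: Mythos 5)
Your argument is correct and follows essentially the same route as the paper: the paper's (inline) proof likewise splits $\bar\Omega$ into the canonical part, reduced by classical Marsden--Weinstein to $\Omega_{\bar\mu}$, and the basic correction term, whose pullback identity $\iota_{\bar\mu}^*\,\overline{\langle \mathcal{J},\mathcal{K}_\subW\rangle} = \pi_{\bar\mu}^*\bigl([\overline{\langle \mathcal{J},\mathcal{K}_\subW\rangle}_\red]_{\bar\mu}\bigr)$ is exactly the commutative-square observation you make. Your version merely spells out the factorization $\rho_H \circ \iota_{\bar\mu} = j_{\bar\mu}\circ \pi_{\bar\mu}$ and the injectivity of $\pi_{\bar\mu}^*$ more explicitly, which is fine.
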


For each $\bar{\mu} \in \mathfrak{h}^*$, let us denote by $\tilde{\mathcal{O}}_{\bar\mu}$ the associated coadjoint bundle $\tilde{\mathcal{O}}_{\bar\mu} =  (\bar{Q} \times \mathcal{O}_{\bar\mu})/H$ over $\bar{Q}/H$ with standard fiber the coadjoint orbit $\mathcal{O}_{\bar\mu}$.  Following \cite{MMORR}, consider the symplectic manifold
\begin{equation} \label{Ap:Eq:Ham-Red_Sympl_mfld}
\left(T^*(\bar{Q}/H) \times_{\mbox{\tiny{$\bar{Q}/H$}}} \tilde{\mathcal{O}}_{\bar\mu}, \Omega_{\mbox{\tiny{$\bar{Q}/H$}}} - \beta_{\bar\mu} \right),
\end{equation}
where $\Omega_{\mbox{\tiny{$\bar{Q}/H$}}}$ is the canonical 2-form on $T^*(\bar{Q}/H)$ and
$\beta_{\bar\mu} \in \Omega^2(\tilde{\mathcal O}_{\bar{\mu}})$ such that
\begin{equation}\label{Ap:DefBeta}
\pi_H^*\beta_{\bar\mu}= d\alpha_{\mbox{\tiny{$\bar{\mathcal A}$}}} + \pi_2^*\omega_{\mathcal{O}_{\bar\mu}},
\end{equation}
where $\pi_H:\bar{Q} \times \mathcal{O}_{\bar\mu} \to \tilde{\mathcal O}_{\bar\mu}$, $\alpha_{\mbox{\tiny{$\bar{\mathcal A}$}}} \in \Omega^1(\bar{Q} \times \mathcal{O}_{\bar\mu})$ is such that $\alpha_{\mbox{\tiny{$\bar{\mathcal A}$}}}(\bar{q},\bar\mu) = \langle \bar\mu, \bar{\mathcal A} \rangle$, where $\bar{\mathcal A}:T\bar{Q} \to \mathfrak{h}$ is a connection, $\pi_2: \bar{Q} \times \mathcal{O}_{\bar\mu} \to \mathcal{O}_{\bar\mu}$ and $\omega_{\mathcal{O}_{\bar\mu}}$ is the canonical 2-form on $\mathcal{O}_{\bar\mu}$.

Next, following \cite{MMORR}, we identify the quotients \eqref{Eq:Proof:equivalence} with \eqref{Ap:Eq:Ham-Red_Sympl_mfld} and consequently we will obtain an identification of \eqref{Eq:BarOmega-SymplecReduc} with a corresponding almost symplectic manifold.

\begin{lemma}[\cite{Marsden2000,MMORR}]\label{L:identifications}
For each $\bar\mu \in \mathfrak{h}^*$, consider the bundles $\bar{J}^{-1}(\bar\mu)/H_{\bar\mu} \to \bar{Q}/H$, $T(\bar{Q}/H) \to \bar{Q}/H$ and $\rho_{\bar\mu}:\bar{Q}/H_{\bar\mu}\to \bar{Q}/H$. Then
\begin{enumerate}
\item[(i)] There is a bundle symplectomorphism $\Upsilon^{\bar\mu}: (T^*(\bar{Q}/H) \times_{\mbox{\tiny{$\bar{Q}/H$}}} \bar{Q}/H_{\bar\mu}, \Omega_{\mbox{\tiny{$\bar{Q}/G$}}} - \beta_\mu)  \to ({\bar J}^{-1}(\bar\mu)/H_{\bar\mu}, \Omega_{\bar\mu}) $ over the identity.
\item[(ii)] There is a global diffeomorphism $\psi_{\bar\mu}:\bar{Q}/H_{\bar\mu} \to \tilde{\mathcal{O}}_{\bar\mu} $ covering the identity on $\bar{Q}/H$.
\end{enumerate}
\end{lemma}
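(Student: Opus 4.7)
The plan is to recognize both parts of the lemma as classical instances of cotangent bundle reduction by a free and proper Lie group action in its Kummer (fibration) form, applied here to the principal bundle $\bar{Q} \to \bar{Q}/H$. I would first dispatch the easier, purely group-theoretic statement (ii), and then turn to the symplectic content of (i).

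For (ii), I would define $\psi_{\bar\mu}: \bar{Q}/H_{\bar\mu} \to \tilde{\mathcal{O}}_{\bar\mu}$ by $[q]_{H_{\bar\mu}} \mapsto [(q, \bar\mu)]_H$. Well-definedness is immediate since $H_{\bar\mu}$ is by definition the coadjoint stabilizer of $\bar\mu$. Injectivity follows because $[(q,\bar\mu)]_H = [(q',\bar\mu)]_H$ forces an $h \in H$ with $q' = q\cdot h$ and $\mathrm{Ad}^*_h\bar\mu = \bar\mu$, hence $h \in H_{\bar\mu}$. Surjectivity uses $\mathcal{O}_{\bar\mu} = H\cdot \bar\mu$: given $[(q,\nu)]_H$, write $\nu = \mathrm{Ad}^*_h\bar\mu$ and apply the $H$-action to reduce to $[(q \cdot h,\bar\mu)]_H$. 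That $\psi_{\bar\mu}$ covers the identity on $\bar{Q}/H$ is built into the construction, since both natural projections send these classes to $[q]_H$.

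For (i), I would fix a principal connection $\bar{\mathcal{A}}: T\bar{Q} \to \mathfrak{h}$ on $\bar{Q} \to \bar{Q}/H$, which exists because the $H$-action is free and proper. Then I would apply the shift trick: every $\alpha_q \in \bar{J}^{-1}(\bar\mu)$ splits uniquely as $\alpha_q = \alpha_q^{\mathrm{hor}} + \langle \bar\mu, \bar{\mathcal{A}}_q\rangle$, where $\alpha_q^{\mathrm{hor}}$ annihilates the vertical distribution and hence descends to a covector $\bar\alpha \in T^*_{[q]_H}(\bar{Q}/H)$. The residual information about $q$ modulo the isotropy subgroup is precisely an element of $\bar{Q}/H_{\bar\mu}$. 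Setting $\Upsilon^{\bar\mu}(\bar\alpha, [q]_{H_{\bar\mu}}) = [\alpha_q]_{H_{\bar\mu}}$ and verifying well-definedness, bijectivity and smoothness delivers the bundle diffeomorphism over $\bar{Q}/H$.

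The main obstacle is verifying $(\Upsilon^{\bar\mu})^* \Omega_{\bar\mu} = \Omega_{\bar{Q}/H} - \beta_{\bar\mu}$. From $\pi^*_{\bar\mu}\Omega_{\bar\mu} = \iota^*_{\bar\mu}\Omega_{\bar{Q}}$, the shift produces on $\bar{J}^{-1}(\bar\mu)$ the sum of the pullback of the canonical form on $T^*(\bar{Q}/H)$ and the exact term $d\langle \bar\mu, \bar{\mathcal{A}}\rangle$. After passing to the quotient by $H_{\bar\mu}$ and composing with the identification from (ii), the second term must be shown to match $\beta_{\bar\mu}$ as defined by $\pi_H^*\beta_{\bar\mu} = d\alpha_{\bar{\mathcal{A}}} + \pi_2^*\omega_{\mathcal{O}_{\bar\mu}}$. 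The Kostant--Kirillov contribution $\pi_2^*\omega_{\mathcal{O}_{\bar\mu}}$ emerges precisely because changing the representative from $(q,\bar\mu)$ to $(q\cdot h,\mathrm{Ad}^*_h\bar\mu)$ introduces a correction that only the coadjoint orbit form can absorb; combined with the structure equation for the curvature of $\bar{\mathcal{A}}$, this yields the required identity. This connection-theoretic calculation is the technical heart of the argument, and once granted the lemma is just a careful packaging of cotangent bundle reduction.
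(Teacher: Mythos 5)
The paper offers no proof of this lemma: it is quoted verbatim from \cite{Marsden2000,MMORR}, so the only ``internal'' argument is the citation. Your reconstruction is precisely the standard proof from those references---cotangent-bundle reduction at a nonzero momentum value in its fibration (Kummer) form, via the connection-dependent momentum shift---and it is essentially correct. Part (ii) is complete as written. For part (i), the one step worth making fully explicit is the matching of the exact term $d\langle\bar\mu,\bar{\mathcal A}\rangle$ produced by the shift with the form $\beta_{\bar\mu}$ of \eqref{Ap:DefBeta}: under the identification of (ii) the composite $\psi_{\bar\mu}\circ\bar{\tau}_{\bar\mu}:\bar{Q}\to\tilde{\mathcal O}_{\bar\mu}$ equals $\pi_H\circ(\mathrm{id}\times\bar\mu)$, and pulling $\pi_H^*\beta_{\bar\mu}=d\alpha_{\mbox{\tiny{$\bar{\mathcal A}$}}}+\pi_2^*\omega_{\mathcal{O}_{\bar\mu}}$ back along the constant-$\bar\mu$ slice annihilates the Kostant--Kirillov term and leaves exactly $d\langle\bar\mu,\bar{\mathcal A}\rangle$; the orbit form is present only so that $\beta_{\bar\mu}$ descends to the associated bundle, as your heuristic correctly indicates. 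Likewise, the inverse assignment $\alpha_q=(T_q\bar\tau)^*\bar\alpha+\langle\bar\mu,\bar{\mathcal A}_q\rangle$ depends on the representative $q$ of $[q]_{H_{\bar\mu}}$, but its $H_{\bar\mu}$-orbit does not, which is the well-definedness you flag. With those two verifications spelled out, your argument is a complete and faithful rendering of the proof the paper delegates to the literature.
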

 When the level $\bar\mu$ is clear, we will write $\Upsilon$ instead of  $\Upsilon^{\bar\mu}$.

 Since $\bar{\omega}_{\bar{\mu}}$ satisfies \eqref{Eq:BarOmega-SymplecReduc} we conclude that

 \begin{proposition} \label{C:AlmostReducedMFLD}
  If $\overline{\langle {\mathcal J}, \mathcal{K}_\subW\rangle}$ is basic with respect to the orbit projection $\rho_H:T^*\bar{Q} \to T^*\bar{Q}/H$, then, at each $\bar\mu\in\mathfrak{h}^*$, the almost symplectic quotient $(\bar{J}^{-1}(\bar{\mu})/H_{\bar{\mu}}, \bar{\omega}_{\bar{\mu}})$ given in \eqref{Eq:SymplReduction} is symplectomorphic to
\begin{equation} \label{Eq:Ham-Red_Sympl_mfld}
\left(T^*(\bar{Q}/H) \times_{\mbox{\tiny{$\bar{Q}/H$}}} \tilde{\mathcal{O}}_{\bar\mu}, \Omega_{\mbox{\tiny{$\bar{Q}/H$}}} - \beta_{\bar\mu} - \Upsilon^*[\overline{\langle {\mathcal J}, \mathcal{K}_\subW\rangle}_{\emph\red}]_{\bar\mu}\right).
\end{equation}
 \end{proposition}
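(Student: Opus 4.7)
The plan is to assemble the result as a direct consequence of Proposition~\ref{P:Reduced2-form} and Lemma~\ref{L:identifications}; essentially nothing new is needed beyond recognizing how the two decompositions line up. Under the basicness hypothesis on $\overline{\langle \mathcal{J},\mathcal{K}_\subW\rangle}$, Proposition~\ref{P:Reduced2-form} asserts
\[
\bar{\omega}_{\bar\mu} \;=\; \Omega_{\bar\mu} \;-\; [\overline{\langle \mathcal{J},\mathcal{K}_\subW\rangle}_{\red}]_{\bar\mu}
\]
on the almost symplectic quotient $(\bar J^{-1}(\bar\mu)/H_{\bar\mu},\bar\omega_{\bar\mu})$, where $\Omega_{\bar\mu}$ is the Marsden--Weinstein reduction of the \emph{canonical} symplectic form $\Omega_{\bar Q}$. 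My strategy is to transport this equality through the identifications provided by Lemma~\ref{L:identifications}.

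First I would invoke part~$(i)$ of Lemma~\ref{L:identifications} to obtain the bundle symplectomorphism $\Upsilon = \Upsilon^{\bar\mu}\colon (T^*(\bar Q/H)\times_{\bar Q/H}\bar Q/H_{\bar\mu},\,\Omega_{\bar Q/H}-\beta_{\bar\mu})\to (\bar J^{-1}(\bar\mu)/H_{\bar\mu},\,\Omega_{\bar\mu})$, which by definition satisfies $\Upsilon^*\Omega_{\bar\mu} = \Omega_{\bar Q/H}-\beta_{\bar\mu}$. Pulling back the decomposition of $\bar\omega_{\bar\mu}$ under $\Upsilon$ yields
\[
\Upsilon^*\bar{\omega}_{\bar\mu} \;=\; \Omega_{\bar Q/H}\;-\;\beta_{\bar\mu}\;-\;\Upsilon^*[\overline{\langle \mathcal{J},\mathcal{K}_\subW\rangle}_{\red}]_{\bar\mu},
\]
which is already the 2-form appearing in \eqref{Eq:Ham-Red_Sympl_mfld}, except that the fiber product on the left is taken over $\bar Q/H_{\bar\mu}$ rather than $\tilde{\mathcal{O}}_{\bar\mu}$.

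Next, I would use part~$(ii)$ of Lemma~\ref{L:identifications}, namely the diffeomorphism $\psi_{\bar\mu}\colon \bar Q/H_{\bar\mu}\to\tilde{\mathcal{O}}_{\bar\mu}$ over $\bar Q/H$, to induce a fiberwise diffeomorphism
\[
\mathrm{id}\times \psi_{\bar\mu}\colon T^*(\bar Q/H)\times_{\bar Q/H}\bar Q/H_{\bar\mu}\;\longrightarrow\; T^*(\bar Q/H)\times_{\bar Q/H}\tilde{\mathcal{O}}_{\bar\mu},
\]
and then pre-compose $\Upsilon^{-1}$ with its inverse. By construction of $\beta_{\bar\mu}$ via \eqref{Ap:DefBeta}, $\psi_{\bar\mu}$ intertwines the respective presentations of $\beta_{\bar\mu}$, so the composition is an (almost) symplectic diffeomorphism from the space in \eqref{Eq:Ham-Red_Sympl_mfld} onto $(\bar J^{-1}(\bar\mu)/H_{\bar\mu},\bar\omega_{\bar\mu})$. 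Combining these two identifications establishes the claim.

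The only potentially delicate point — and what I would treat as the main obstacle — is bookkeeping: checking that the term $\Upsilon^*[\overline{\langle \mathcal{J},\mathcal{K}_\subW\rangle}_{\red}]_{\bar\mu}$ is well defined on $T^*(\bar Q/H)\times_{\bar Q/H}\tilde{\mathcal{O}}_{\bar\mu}$ after composing with $(\mathrm{id}\times\psi_{\bar\mu})^{-1}$, and that the diagram of projections over $\bar Q/H$ commutes so that the pullback through $\Upsilon$ is unambiguous. This is a naturality statement that follows from the fact that both $\Upsilon$ and $\psi_{\bar\mu}$ cover the identity on $\bar Q/H$, but I would verify it explicitly before concluding the proof.
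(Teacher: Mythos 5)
Your proposal is correct and follows exactly the route the paper intends: the paper gives no separate proof, presenting the proposition as an immediate consequence of Proposition~\ref{P:Reduced2-form} (the decomposition $\bar{\omega}_{\bar\mu}=\Omega_{\bar\mu}-[\overline{\langle \mathcal{J},\mathcal{K}_\subW\rangle}_{\red}]_{\bar\mu}$) combined with the identifications $\Upsilon^{\bar\mu}$ and $\psi_{\bar\mu}$ of Lemma~\ref{L:identifications}. You have merely spelled out the bookkeeping the paper leaves implicit, and your checks (that both maps cover the identity on $\bar{Q}/H$, so the pullbacks are unambiguous) are the right ones.
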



%

\subsection{Almost symplectic leaves of the reduced bracket} \label{Ss:2S-Reduction}

Consider the nonholonomic system on a manifold $Q$ given by a Lagrangian $L$ and a nonintegrable distribution $D$ with a $G$-symmetry.  Let $W$ be a $G$-invariant vertical complement of $D$ in $TQ$ satisfying the vertical-symmetry condition \eqref{Eq:VerticalSymmetries}.
Define the Lie subgroup $G_\subW$ of $G$ as in \eqref{Eq:Gw} and the quotient Lie group $H=G/G_\subW$.

\begin{theorem} \label{T:Equivalence}
If $\overline{\langle {\mathcal J}\!,\!{\mathcal K}_\subW \rangle}$ is basic with respect to $\rho_H:T^*\bar{Q} \to T^*\bar{Q}/H$, then \begin{enumerate}                                                                                                                                       \item[$(i)$] The reduced nonholonomic bracket $\{ \cdot , \cdot \}_{\emph\red}$ on $\M/G$ given in \eqref{Eq:RedNHbracket} is twisted Poisson.                                                                                                                                 \item[$(ii)$] The connected components of $(\bar{J}^{-1}(\bar{\mu})/H_{\bar{\mu}}, \bar{\omega}_{\bar{\mu}})$, given in \eqref{Eq:BarOmega-SymplecReduc}, are the almost symplectic leaves of $\{ \cdot , \cdot \}_{\emph\red}$. In other words,  the leaves of $\{ \cdot, \cdot \}_{\emph\red}$ are identified with the (connected components of) \eqref{Eq:Ham-Red_Sympl_mfld}.                                                                                                                                        \end{enumerate}
\end{theorem}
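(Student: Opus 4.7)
\textbf{Part (i)} is immediate from the earlier machinery. By Notation~\ref{Notation}, the hypothesis that $\overline{\langle \mathcal{J},\mathcal{K}_\subW\rangle}$ is basic with respect to $\rho_H:T^*\bar Q\to T^*\bar Q/H$ is equivalent to $\langle \mathcal{J},\mathcal{K}_\subW\rangle$ being basic with respect to $\rho:\M\to\M/G$. Since exterior differentiation commutes with pullback, $d\langle \mathcal{J},\mathcal{K}_\subW\rangle$ is basic as well, and Theorem~\ref{T:Reduced-Dyn} then yields that $\{\cdot,\cdot\}_\red$ is twisted Poisson.

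\textbf{Part (ii)} I would handle by two-stage reduction. The hypothesis together with Corollary~\ref{C:JKbasic=conservationLaws} gives that $\bar J:T^*\bar Q\to\mathfrak h^*$ is a genuine momentum map for the compressed almost symplectic form $\bar\Omega$. The almost symplectic reduction of Proposition~\ref{P:Reduced2-form} then supplies, at each $\bar\mu\in\mathfrak h^*$, the almost symplectic manifold $(\bar J^{-1}(\bar\mu)/H_{\bar\mu},\bar\omega_{\bar\mu})$ with $\bar\omega_{\bar\mu}$ given by \eqref{Eq:BarOmega-SymplecReduc}. Via the identifications $\M/G_\subW\simeq T^*\bar Q$ and $T^*\bar Q/H\cong\M/G$ from Section~\ref{Ss:2S-Compression}, these quotients partition $\M/G$ as $\bar\mu$ varies. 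My task is to identify this partition with the almost symplectic foliation produced by part~(i). Using the factorization $\rho=\rho_H\circ\rho_\subW$, the reduced bracket $\{\cdot,\cdot\}_\red$ factors as the $H$-reduction of the compressed almost Poisson bracket on $T^*\bar Q$ induced by $\bar\Omega$ through the relation $\rho_\subW^*\bar\Omega|_\C=\Omega_\subM|_\C$ of \cite{MovingFrames}. Since $\bar\Omega$ is non-degenerate and $\bar J$ a momentum map for it, a standard argument identifies the Casimirs of the $H$-reduction with the $\operatorname{Ad}^*$-invariant functions of $\bar J$, whose common level sets modulo $H_{\bar\mu}$ are exactly the $\bar J^{-1}(\bar\mu)/H_{\bar\mu}$, and identifies the induced almost symplectic form on each leaf with the Marsden--Weinstein reduced form $\bar\omega_{\bar\mu}$ defined by \eqref{Eq:SymplReduction}. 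Finally, invoking Lemma~\ref{L:identifications} and Proposition~\ref{C:AlmostReducedMFLD} gives the explicit model \eqref{Eq:Ham-Red_Sympl_mfld}.

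\textbf{The main obstacle} is the verification, in the non-abelian case, that the one-stage reduction of the nonholonomic bracket really does coincide with the two-stage construction at the level of the induced foliations. Concretely, one needs both that the characteristic distribution of $\{\cdot,\cdot\}_\red$ is tangent to each submanifold $\bar J^{-1}(\bar\mu)/H_{\bar\mu}$ (i.e.\ that all $\operatorname{Ad}^*$-invariant functions of $\bar J$ descend to Casimirs of $\{\cdot,\cdot\}_\red$), and that the almost symplectic form induced by $\{\cdot,\cdot\}_\red$ on each such leaf agrees with $\bar\omega_{\bar\mu}$. Both points come down to a diagram chase through the tower $\M\xrightarrow{\rho_\subW} T^*\bar Q\xrightarrow{\rho_H}\M/G$, combining the compatibility identity $\rho_\subW^*\bar\Omega|_\C=\Omega_\subM|_\C$ with the defining relation $\pi_{\bar\mu}^*\bar\omega_{\bar\mu}=\iota_{\bar\mu}^*\bar\Omega$, and exploiting Proposition~\ref{Prop:g_xi} to identify the momentum-map components $\bar J_\xi$ with the conserved functions $g_\xi$ coming from the nonholonomic momentum map.
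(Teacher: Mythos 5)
Your proposal is correct and follows essentially the same route as the paper: part (i) via Notation~\ref{Notation} and Theorem~\ref{T:Reduced-Dyn}, and part (ii) via reduction by stages, identifying $\{\cdot,\cdot\}_\red$ with the $H$-reduction of the non-degenerate compressed bracket given by $\bar\Omega$ and then recognizing its leaves as the Marsden--Weinstein quotients carrying $\bar\omega_{\bar\mu}={\Omega}_{\bar\mu}-[\overline{\langle \mathcal{J},\mathcal{K}_\subW\rangle}_\red]_{\bar\mu}$. The only (cosmetic) difference is that you justify the leaf identification through Casimirs/$\operatorname{Ad}^*$-invariant functions of $\bar J$, whereas the paper cites the canonical case and views the shift by the basic 2-form as a gauge transformation; both amount to the same computation of the characteristic distribution.
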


\begin{proof} By Notation \ref{Notation}, item $(i)$ is a consequence of Theorem \ref{T:Reduced-Dyn}.  To show $(ii)$, first note that the nonholonomic bracket $\{ \cdot,  \cdot \}_\nh$ is $G_\subW$-invariant (because it is $G$-invariant). Then, the reduction of $\{ \cdot,  \cdot \}_\nh$ by $G_\subW$ in the sense of \eqref{Eq:RedNHbracket} gives an almost Poisson manifold denoted by $(\M/G_\subW, \{ \cdot,  \cdot \}_{\bar{\Omega}})$. Since $\{ \cdot,  \cdot \}_{\bar{\Omega}}$ is $H$-invariant, there is a reduced bracket $\{ \cdot , \cdot \}$ induced on $(\M/G_\subW)/H$. Using an argument of reduction by stages, it is straightforward to see that $\{ \cdot , \cdot \}_\red$ and $\{ \cdot, \cdot \}$  are diffeomorphic and thus their leaves are symplectomorphic. 

On the other hand, the almost Poisson bracket $\{ \cdot , \cdot \}_{\bar{\Omega}}$ on $\M/G_\subW \simeq T^*\bar{Q}$ is non-degenerate and given by the 2-form $\bar{\Omega}=\Omega_{\mbox{\tiny{$\bar{Q}$}}} - \overline{\langle {\mathcal J}\!,\!{\mathcal K}_\subW \rangle}$ on $T^*\bar{Q}$.
The reduction by $H$ of the canonical symplectic manifold $(T^*\bar{Q}, \Omega_{\mbox{\tiny{$\bar{Q}$}}})$ gives a Poisson bracket whose symplectic leaves are the Marsden-Weinstein quotients \eqref{Eq:Proof:equivalence}. Hence, the reduction of $(T^*\bar{Q}, \Omega_{\mbox{\tiny{$\bar{Q}$}}}- \overline{\langle {\mathcal J}\!,\!{\mathcal K}_\subW \rangle})$ gives the almost Poisson bracket $\{\cdot, \cdot \}$ that has almost symplectic leaves given by $({\bar J}^{-1}(\bar{\mu})/H_{\bar{\mu}} , \bar{\Omega}_{\bar{\mu}}- [\overline{\langle {\mathcal J}\!,\!{\mathcal K}_\subW \rangle}_\red]_{\bar{\mu}})$.  Therefore the $\mu$-leaves of $\{ \cdot, \cdot \}_\red$ are symplectomorphic to connected components of $({\bar J}^{-1}(\bar{\mu})/H_{\bar{\mu}} , \bar{\omega}_{\bar{\mu}})$ for $\bar\mu \in \mathfrak{h}^*$ such that $\varrho^*(\bar\mu) = \mu$.
\end{proof}

Finally, we observe that for the special case of horizontal symmetries (i.e., $\mathfrak{g}_\subS$ is a subalgebra of $\mathfrak{g}$) the 2-form $\langle {\mathcal J}, {\mathcal K}_\subW \rangle$ is basic with respect to $\rho: \M \to \M/G$ using the $G_\subW$-invariance. Then we observe

\begin{corollary}
 If the system has {\it horizontal symmetries} then the reduced nonholonomic bracket $\{\cdot, \cdot \}_{\emph\red}$ is $(d\langle {\mathcal J}, {\mathcal K}_\subW \rangle_\red)$-twisted Poisson with the almost symplectic leaves given in \eqref{Eq:Ham-Red_Sympl_mfld}.
\end{corollary}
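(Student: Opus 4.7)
The plan is to verify the hypothesis of Theorem~\ref{T:Equivalence}, namely that $\overline{\langle {\mathcal J}, {\mathcal K}_\subW\rangle}$ is basic with respect to $\rho_H : T^*\bar{Q}\to T^*\bar{Q}/H$, which by Notation~\ref{Notation} amounts to showing that $\langle {\mathcal J}, {\mathcal K}_\subW\rangle$ is basic with respect to $\rho: \M \to \M/G$. Once this is secured, Theorem~\ref{T:Reduced-Dyn} yields that $\{\cdot,\cdot\}_\red$ is $(d\langle {\mathcal J}, {\mathcal K}_\subW\rangle_\red)$-twisted Poisson and Theorem~\ref{T:Equivalence}(ii) identifies its almost symplectic leaves with those in \eqref{Eq:Ham-Red_Sympl_mfld}.

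Since $\langle {\mathcal J}, {\mathcal K}_\subW\rangle$ is already $G$-invariant, I only need to verify horizontality: $i_{\eta_\M}\langle {\mathcal J}, {\mathcal K}_\subW\rangle = 0$ for every $\eta \in \mathfrak{g}$. Under the horizontal-symmetry hypothesis $\mathfrak{g}_\subS$ is a (global) Lie subalgebra, and Proposition~\ref{Prop:ConseqVertSym}(ii) then gives a decomposition $\mathfrak{g} = \mathfrak{g}_\subS \oplus \mathfrak{g}_\subW$; it suffices to treat the two summands separately. For $\eta \in \mathfrak{g}_\subW$, the fundamental vector field $\eta_\M$ is tangent to the $G_\subW$-orbits, and since Lemma~\ref{L:JKbasic} says $\langle {\mathcal J}, {\mathcal K}_\subW\rangle$ is basic with respect to $\rho_\subW$, contracting with $\eta_\M$ gives zero.

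The delicate case is $\eta \in \mathfrak{g}_\subS$. Because $\C$ and $\W$ are $G$-invariant and $\mathfrak{g}_\subW$ is an ideal of $\mathfrak{g}$, the $G_\subW$-connection form $\mathcal{A}_\subW: T\M \to \mathfrak{g}_\subW$ determined by the splitting $T\M = \C \oplus \W$ is $G$-equivariant, $\varphi_g^* \mathcal{A}_\subW = \mathrm{Ad}_g \mathcal{A}_\subW$; infinitesimally this reads $\pounds_{\eta_\M}\mathcal{A}_\subW = [\eta, \mathcal{A}_\subW]$, which vanishes on $\C$ since $\mathcal{A}_\subW|_\C = 0$. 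For a horizontal symmetry $\eta_\M \in \Gamma(\C)$, so also $i_{\eta_\M}\mathcal{A}_\subW = \mathcal{A}_\subW(\eta_\M) = 0$. Cartan's magic formula then forces $(i_{\eta_\M} d\mathcal{A}_\subW)|_\C = 0$, and combining this with $P_\C(\eta_\M) = \eta_\M$ and the identity $\mathcal{K}_\subW(X,Y) = d\mathcal{A}_\subW(P_\C X, P_\C Y)$ gives $\mathcal{K}_\subW(\eta_\M, Y) = 0$ for every $Y \in T\M$, whence $i_{\eta_\M}\langle {\mathcal J}, {\mathcal K}_\subW\rangle = 0$. The main obstacle is precisely this last step: instead of the standard curvature-is-horizontal identity (which only handles $\eta \in \mathfrak{g}_\subW$), for $\eta \in \mathfrak{g}_\subS$ one must exploit that horizontal symmetries preserve $\mathcal{A}_\subW$ up to the adjoint action on the ideal $\mathfrak{g}_\subW$, a property requiring both the $G$-invariance of $\W$ and that $\mathfrak{g}_\subW$ be an ideal of $\mathfrak{g}$.
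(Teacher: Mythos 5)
Your proof is correct and follows essentially the same route as the paper: both arguments reduce the corollary to showing that $\langle {\mathcal J},{\mathcal K}_\subW\rangle$ is basic with respect to $\rho:\M\to\M/G$ (invariance being known, one checks horizontality on the two summands of $\mathfrak{g}=\mathfrak{g}_\subS\oplus\mathfrak{g}_\subW$) and then invoke Theorems \ref{T:Reduced-Dyn} and \ref{T:Equivalence}. Your Cartan-formula/equivariance argument for $\eta\in\mathfrak{g}_\subS$ is a repackaging of the paper's one-line computation ${\bf i}_{\xi_\subM}\langle {\mathcal J},{\mathcal K}_\subW\rangle(X)=-\langle {\mathcal J},{\mathcal A}_\subW[\xi_\subM,X]\rangle=0$ for $X\in\Gamma(\C)$, since both rest on exactly the same inputs: $\xi_\subM\in\Gamma(\C)$ and the $G$-invariance of the splitting $T\M=\C\oplus\W$.
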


\begin{proof}
 If $\mathfrak{g}_\subS$ is a Lie algebra, then then any Lie subalgebra $\mathfrak{g}_\subW$ complementing $\mathfrak{g}$ generates a complement $\W$ of the constraints with the vertical-symmetry condition. Then for $\xi \in \mathfrak{g}_\subS$ and $X\in \Gamma(\C)$ 
 $${\bf i}_{\xi_\subM} \langle {\mathcal J}, {\mathcal K}_\subW \rangle (X) = - \langle {\mathcal J}, {\mathcal A}_\subW  [ \xi_\subM, X] \rangle = 0, 
 $$ 
 using the $G$-invariance of $\C$. Therefore this Corollary is a direct consequece of Theorem \ref{T:Equivalence}.
 \end{proof}

In \cite[Sec.~6]{Oscar} and \cite{Hoch} the authors also propose an almost symplectic reduction at the compressed level. In the present section we explain when it is possible to perform such a reduction, the existence for the extra symmetry $H$ and the connection with the nonholonomic momentum map (and as a consequence the connection with the conserved quantities of the system).
 This viewpoint allows us to unify the results in \cite{paula} and \cite{Hoch,Oscar} and to describe the almost symplectic foliation of  $\{ \cdot, \cdot \}_\red$ in terms of \eqref{Eq:Ham-Red_Sympl_mfld}.

In general, the 2-form $\overline{\langle {\mathcal J}, {\mathcal K}_\subW \rangle}$ might not be basic with respect to $\rho_H:T^*\bar{Q} \to T^*\bar{Q}/H$; this is the case of the Chaplygin ball (discussed in Section \ref{Ss:Ex:Ball}).  In Section \ref{S:Gauge} we explore the possibility of describing the compressed dynamics using a modified 2-form that admits the canonical momentum map $\bar{J}:T^*\bar{Q} \to \mathfrak{h}^*$ as a momentum map. We relate this modification with a {\it dynamical gauge transformation by a 2-form} of the nonholonomic bracket, as was done in \cite{paula,PL2011}. However it is worth mentioning that the idea of modifying $\bar\Omega$ using a 2-form appeared for the first time in \cite{EhlersKoiller}.


\section{Gauge transformations} \label{S:Gauge}

A gauge transformation is a process by which one {\it deforms} a bracket using a 2-form yet retains some of the original geometric properties, such as the characteristic distribution (even though the Hamiltonian vector fields may have changed) \cite{SeveraWeinstein}. In this paper we are interested in considering gauge transformations of the nonholonomic bracket by a 2-form $B$, since the failure of the Jacobi identity can be controlled using the 2-form $B$ \cite{PL2011}.
\bigskip

\noindent {\bf Gauge transformation of a bracket.} \ To begin, recall that a regular almost Poisson manifold $(P, \{\cdot , \cdot \})$ is given by a distribution $F$ on $P$ and a 2-form $\Omega$ on $P$ such that $\Omega|_F$ is non-degenerate:
$$
{\bf i}_{X_f} \Omega |_F = df |_F \mbox { \ if and only if \ } \{ \cdot , f \} = X_f  \qquad \mbox{for \ } f,g, \in C^\infty(P).
$$
Consider now a 2-form $B$ such that $(\Omega - B)|_F$ is still non-degenerate. Then a {\it gauge transformation} of $\{\cdot, \cdot \}$ by a 2-form $B$, \cite{SeveraWeinstein}, gives a new bracket $\{\cdot , \cdot \}_\B$ on $P$ defined by
$$
{\bf i}_{Y_f} (\Omega_F - B) |_F = df |_F \mbox { \ if and only if \ } \{ \cdot , f \}_\B = Y_f  \qquad \mbox{  for \ } f,g, \in C^\infty(P).
$$
In this case, we say that the brackets $\{\cdot, \cdot \}$ and $\{\cdot , \cdot \}_\B$ are gauge-related by the 2-form $B$.

\begin{remark} \label{R:Gauge:Examples}  The gauge transformation by a 2-form $B$ of a nondegenerate bracket $\{\cdot, \cdot \}$ given by the 2-form $\Omega$ defines a new nondegenerate bracket given by $\Omega-B$ (recall that, in this case, $\Omega - B$ is nondegenerate). In this sense, a gauge transformation by $B$ of a 2-form $\Omega$  gives the new 2-form $\Omega-B$. In the same way, if a Poisson bracket $\{\cdot , \cdot \}$  has the symplectic foliation $(\mathcal{P}_\mu, \omega_\mu)$, then a gauge transformation of $\{\cdot , \cdot \}$  by the 2-form $B$ gives a bracket $\{\cdot , \cdot \}_\B$ that has the almost symplectic foliation  $(\mathcal{P}_\mu, \omega_\mu - B_\mu)$ where $B_\mu$ is the restriction of $B$ to leaf $\mathcal{P}_\mu$; if $B$ is closed then the bracket $\{\cdot , \cdot \}_\B$ is still Poisson.
\end{remark}

\begin{remark} \label{R:DefG:General}
A gauge transformation by a 2-form $B$ of a (regular) bracket given by the 2-form $\Omega$ and the distribution $F$ is defined in \cite{SeveraWeinstein} without asking for $(\Omega - B )|_F$ to be nondegenerate since the definition is given on more general structures called Dirac structures.

 On the other hand, a gauge transformation of an almost Poisson manifold $(P, \{\cdot , \cdot \})$ can be defined even in the non-regular case using the bivector field $\pi$ on $P$ associated to the bracket $\{\cdot , \cdot \}$.  In this case, if the endomorphism $\textup{Id} + B^\flat \circ \pi^\sharp$ on $T^*P$ is invertible, then the gauge transformation of $\pi$ by the 2-form $B$ is the new bivector field $\pi_\B$ on $P$ defined by $ \pi_\B^\sharp = \pi^\sharp \circ (\textup{Id} + B^\flat \circ \pi^\sharp)$ (see \cite{SeveraWeinstein}).
\end{remark}

As in \cite{paula,PL2011} we want to consider a gauge transformation of the almost Poisson manifold $(\M, \{ \cdot,\cdot \}_\nh)$ such that the new bracket $\{ \cdot, \cdot \}_\B$ describes the same dynamics. If the 2-form $B$ on $\M$ is such that $(\Omega_\subM - B)|_\C$ is non-degenerate and ${\bf i}_{X_\nh}B =0$, then we say that $B$ induces a {\it dynamical gauge transformation} of $\{ \cdot,\cdot \}_\nh$ \cite{PL2011}.  The new dynamically-gauge related bracket $\{ \cdot , \cdot \}_\B$ describes the same dynamics in the sense that
$$
\{ \cdot, \Ham_\subM \}_\B = X_\nh.
$$

If the nonholonomic system $(\M, \{ \cdot ,\cdot \}_\nh, \Ham_\subM)$ has a $G$-symmetry, then any dynamically-gauge related bracket  $\{ \cdot, \cdot \}_\B$ to $\{ \cdot, \cdot \}_\nh$ is $G$-invariant if the 2-form $B$ is invariant. In this case, there is also a reduced bracket $\{ \cdot, \cdot \}_\red^\B$ on $\M/G$ given by
$$
\{ f,g\}_\red^\B \circ \rho = \{ f\circ \rho,g\circ \rho\}_\B  \qquad \mbox{for } f,g \in C^\infty(\M/G),
$$
that also describes the reduced dynamics: $\{ \cdot, \Ham_\red \}_\red^\B = X_\red.$

Observe that a gauge transformation of the nonholonomic bracket $\{ \cdot , \cdot \}_\nh$ is determined by the values of $B$ on $\C$. Therefore, following \cite{paula}, once we fix a decomposition of \eqref{Eq:C+W}, we can then consider gauge transformations by 2-forms $B$ such that
\begin{equation}\label{Eq:Bsection}
{\bf i}_Z B\equiv 0 \qquad \mbox{for all } \ Z\in \W.
\end{equation}

The interesting consequence of having the freedom to choose such a 2-form $B$ is that the reduced bivector $\{ \cdot , \cdot \}_\red^\B$ might have more desirable properties than $\{ \cdot , \cdot \}_\red$---as we have discussed in \cite{paula,PL2011}---since the Jacobi identity changes by $B$. More precisely, it was shown in \cite{paula} that if the vertical complement $W$ \eqref{Eq:D+W} has the vertical symmetry condition \eqref{Eq:VerticalSymmetries}, then
\begin{equation}\label{Eq:Gauged-Jac}
\{f,\{g,h\}^\B_\red\}^\B_\red \circ \rho + \textup{cyclic} = -d(\langle {\mathcal J}, {\mathcal K}_\subW \rangle +B) (Y_{\rho^*\!f},Y_{\rho^*\!g},Y_{\rho^*\!h}),
\end{equation}
where $f,g,h \in C^\infty(\M/G)$, and $Y_{\rho^*\!f}$ is the Hamiltonian vector field with respect to the nonholonomic bracket $\{ \cdot, \cdot \}^\B_\nh$.
Therefore, if the 3-form $d(\langle {\mathcal J}, {\mathcal K}_\subW \rangle + B)$ defined on $\M$ is basic with respect to the orbit projection $\rho: \M \to \M /G$ then the Jacobi identity of $\{ \cdot , \cdot \}^\B_\red$ becomes
 $$\{f,\{g,h\}^\B_\red\}^\B_\red  + \textup{cyclic} = \Phi (\mathcal{Y}_{f},\mathcal{Y}_{g},\mathcal{Y}_{h}),
 $$
where $\Phi$ is a 3-form on $\M/G$ such that $\rho^*\Phi = - d(\langle {\mathcal J}, {\mathcal K}_\subW \rangle + B)$, and $\mathcal{Y}_{f}$ is the Hamiltonian vector field associated to $\{ \cdot, \cdot \}^\B_\red$; that is,  $\{ \cdot, \cdot \}^\B_\red$ is a twisted Poisson bracket and thus it admits an (almost) symplectic foliation.

We can now restate Theorem \ref{T:Reduced-Dyn} considering the whole family of brackets $\{ \cdot, \cdot \}^\B_\red$.

\noindent {\bf Theorem \ref{T:Reduced-Dyn}'} \cite{paula}
\emph{Suppose that $B$ is a $G$-invariant 2-form on $\M$ satisfying
\eqref{Eq:Bsection} defining a dynamical gauge transformation of $\{ \cdot, \cdot \}_{\emph\nh}$.
If $W$ verifies the vertical-symmetry condition \eqref{Eq:VerticalSymmetries} and  $d(\langle {\mathcal J}, {\mathcal K}_\subW \rangle + B)$ is basic with respect to
the orbit projection $\rho: \M \to \M /G$, then $\{ \cdot, \cdot \}_{\emph\red}^\B$ is a twisted Poisson bracket.
}

\subsection{Reduction in two stages}\label{Ss:Gauge2Setps}

Analogously to what we did in Section \ref{S:2steps}, let us now perform a reduction of $(\M, \{\cdot , \cdot \}_\B)$ in two stages.

Consider the nonholonomic system $(\M, \{\cdot , \cdot \}_\nh, \Ham_\subM)$ with a $G$-symmetry. Suppose that the vertical complement $\W$ of $\C$ satisfies the vertical-symmetry condition \eqref{Eq:VerticalSymmetries}, so that $G_\subW$ is a symmetry of the system.

\begin{lemma}
 Let $\{\cdot , \cdot \}_\B$ be the dynamically-gauge related bracket to $\{\cdot , \cdot \}_{\emph\nh}$ by a $G$-invariant 2-form $B$ that satisfies \eqref{Eq:Bsection}. Then,
 \vspace{-\topsep}
 \begin{enumerate} \itemsep0em
  \item[$(i)$] The nonholonomic system $(\M, \{\cdot, \cdot \}_\B, \Ham_\subM)$ is $G_\subW$-Chaplygin.
  \item[$(ii)$] The compression by $G_\subW$ of  $\{\cdot, \cdot \}_\B$ induces the almost symplectic manifold
\begin{equation} \label{Eq:Gauge-Chap2form}
(T^*\bar{Q}, \Omega_{\mbox{\tiny{$\bar{Q}$}}} - \overline{\langle {\mathcal J}, {\mathcal K}_\subW \rangle} - \bar{B}),
\end{equation}
where $\bar{B}$ is the 2-form on $T^*\bar{Q}$ such that $\rho_\subW^* \bar{B} = B$.
\end{enumerate}
\end{lemma}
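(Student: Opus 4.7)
The plan is to first verify that the dynamical gauge transformation preserves the $G_\subW$-Chaplygin structure, and then to track how the 2-form $B$ descends under compression by $G_\subW$.

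For $(i)$, being $G_\subW$-Chaplygin is a structural property of the nonholonomic system: it requires $G_\subW$ to act freely and properly on $Q$ preserving $L$ and $D$, with its vertical distribution $W$ forming a complement of $D$. This does not depend on the particular almost Poisson bracket used to encode the dynamics. Section \ref{Ss:2S-Compression} already established, under the vertical-symmetry condition \eqref{Eq:VerticalSymmetries} and the decomposition \eqref{Eq:D+W}, that $(\M,\{\cdot,\cdot\}_\nh,\Ham_\subM)$ is $G_\subW$-Chaplygin. Since passing to $\{\cdot,\cdot\}_\B$ does not alter $Q$, $L$, $D$ or the $G_\subW$-action, the same conclusion applies to $(\M,\{\cdot,\cdot\}_\B,\Ham_\subM)$.

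For $(ii)$, the first step is to show that $B$ descends to a well-defined 2-form $\bar{B}$ on $\M/G_\subW\simeq T^*\bar{Q}$. Since $B$ is $G$-invariant it is in particular $G_\subW$-invariant, and condition \eqref{Eq:Bsection} gives ${\bf i}_Z B \equiv 0$ for all $Z\in\W$. The distribution $\W=\{\eta_\M:\eta\in\mathfrak{g}_\subW\}$ is precisely the vertical distribution of the orbit projection $\rho_\subW:\M\to T^*\bar{Q}$, because the connected group $G_\subW$ integrates the ideal $\mathfrak{g}_\subW$ of \eqref{Eq:VerticalSymmetries}. Thus $B$ is $G_\subW$-invariant and semi-basic with respect to $\rho_\subW$, hence basic, which yields a unique $\bar{B}\in\Omega^2(T^*\bar{Q})$ with $\rho_\subW^*\bar{B}=B$.

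The second step is to identify the compressed 2-form. By the definition of the gauge transformation, $\{\cdot,\cdot\}_\B$ is encoded on $\C$ by $(\Omega_\subM - B)|_\C$. Following the compression recipe of Section \ref{Ss:2S-Compression}, a 2-form $\bar{\Omega}_\B$ on $T^*\bar{Q}$ describes the compressed bracket as soon as $\rho_\subW^*\bar{\Omega}_\B|_\C=(\Omega_\subM - B)|_\C$. Combining the identity $\rho_\subW^*\bar{\Omega}|_\C=\Omega_\subM|_\C$ already used in \eqref{Eq:Chap2form} with $\rho_\subW^*\bar{B}=B$, linearity forces $\bar{\Omega}_\B=\bar{\Omega}-\bar{B}=\Omega_{\bar{Q}} - \overline{\langle \mathcal{J},\mathcal{K}_\subW\rangle}-\bar{B}$. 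Almost-symplecticity of this form is equivalent, via the fiberwise isomorphism $T\rho_\subW|_\C:\C\to T(T^*\bar{Q})$, to non-degeneracy of $(\Omega_\subM - B)|_\C$, which is built into the definition of a dynamical gauge transformation.

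I expect the only (mild) obstacle to be the descent argument for $\bar{B}$, which depends on correctly identifying $\W$ with the vertical distribution of $\rho_\subW$; once this is secured, the rest is routine bookkeeping of pullbacks and restrictions to $\C$.
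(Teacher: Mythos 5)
Your proposal is correct and follows essentially the same route as the paper: part $(i)$ rests on the observation that the gauge transformation changes neither the $G_\subW$-action nor the characteristic distribution $\C$ (so the splitting \eqref{Eq:C+W} persists), and part $(ii)$ descends $B$ to $\bar{B}$ using $G_\subW$-invariance together with \eqref{Eq:Bsection} and then identifies the compressed 2-form as $\bar{\Omega}-\bar{B}$. The only cosmetic difference is that where the paper invokes \cite[Prop~4.8]{paula} and Remark \ref{R:Gauge:Examples} to transport the gauge relation through the reduction, you verify the same identity directly from $\rho_\subW^*\bar{\Omega}|_\C=\Omega_\subM|_\C$ and the surjectivity of $T\rho_\subW|_\C$, which is an equally valid and slightly more self-contained bookkeeping of the same facts.
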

\begin{proof}
 $(i)$ Since $G_\subW$ is a normal subgroup of $G$, then  $\{\cdot, \cdot \}_\nh$ and  the 2-form $B$ are also $G_\subW$-invariant because they are already $G$-invariant.  Therefore, the bracket $\{\cdot, \cdot \}_\B$ is $G_\subW$-invariant. Moreover, since the characteristic distribution of $\{\cdot, \cdot \}_\B$ is still $\C$ then we have the splitting \eqref{Eq:C+W} where $\W$ is the vertical space with respect to the $G_\subW$-action; that is why the system is still $G_\subW$-Chaplygin.

To see $(ii)$ observe that $B$ is $G_\subW$-invariant and satisfies \eqref{Eq:Bsection}, hence $B$ is basic with respect to $\rho_\subW:\M \to \M / G_\subW$.
 Since  $\{\cdot, \cdot \}_\nh$ and  $\{\cdot, \cdot \}_\B$ are gauge related by the $G_\subW$-basic 2-form $B$ then, following  \cite[Prop~4.8]{paula}, the reductions of  $\{\cdot, \cdot \}_\nh$ and  $\{\cdot, \cdot \}_\B$ by $G_\subW$ will be also gauge-related by $\bar{B}$, where $\bar{B}$ is the 2-form on $T^*\bar{Q}$ such that $\rho_\subW^* \bar{B} = B$. Moreover, by item $(i)$ the reduction by $G_\subW$ of  $\{\cdot, \cdot \}_\B$ gives a nondegenerate bracket described by a 2-form $\bar\Omega_\B$, and recalling that the $G_\subW$-reduction of $\{\cdot, \cdot \}_\nh$ gives a nondegenerate bracket described by $\bar\Omega$ \eqref{Eq:Chap2form}, from Remark \ref{R:Gauge:Examples} the 2-forms $\bar\Omega_\B$ and $\bar{\Omega}$ are gauge-related by $\bar{B}$ and thus we obtain that $\bar\Omega_\B = \bar{\Omega} - \bar{B}$.
\end{proof}

In other words,  a dynamical gauge transformation of $\{\cdot, \cdot\}_\nh$ by a $G$-invariant 2-form $B$ verifying \eqref{Eq:Bsection} is equivalent to a dynamical gauge transformation of $\bar\Omega$ by the $H$-invariant 2-form $\bar{B}$ on $T^*\bar{Q}$ such that $\rho_\subW^*\bar{B} = B$. 

To clarify the role of $\bar{B}$, recall that the obstruction to performing an almost symplectic reduction of $(T^*\bar{Q}, \bar{\Omega}= \Omega_{\mbox{\tiny{$\bar{Q}$}}} - \overline{\langle \mathcal{J},\mathcal{K}_\subW \rangle})$ as done in Sec.~\ref{Ss:2S-Reduction} is that the canonical momentum map $\bar{J}:T^*\bar{Q} \to \mathfrak{h}^*$ might not be a momentum map for $\bar{\Omega}$ (since $\overline{\langle \mathcal{J},\mathcal{K}_\subW \rangle}$ might not be basic). However, if we find a 2-form $\bar B$ on $T^*\bar{Q}$ such that $\overline{\langle {\mathcal J}, \mathcal{K}_\subW \rangle} + \bar{B}$ is basic with respect to $\rho_H:T^*\bar{Q}\to T^*\bar{Q}/H$ and such that
${\bf i}_{\bar{X}_\nh}\bar B=0$ for $\bar{X}_\nh$ the vector field on $T^*\bar{Q}$ describing the {\it compressed} dynamics \eqref{Eq:ChapReduction}, then the compressed equations of motion become
\begin{equation} \label{Eq:Gauge:Chap}
{\bf i}_{\bar{X}_\nh}( \bar{\Omega} - \bar{B})=d\Ham_{T^*\bar{Q}},
\end{equation}
which is equivalent to  \eqref{Eq:ChapReduction}, but now
$\bar{J}:T^*\bar{Q} \to \mathfrak{h}^*$ is a momentum map for $\Omega_{\mbox{\tiny{$\bar{Q}$}}} -  (\overline{\langle {\mathcal J}, \mathcal{K}_\subW \rangle} + \bar{B})$.

Analogously to Corollary \ref{C:JKbasic=conservationLaws} we obtain:

\begin{corollary} \label{C:B+JK-Basic}
 If there is a 2-form $\bar{B}$ on $T^*\bar{Q}$ such that
${\bf i}_{\bar{X}_{\emph\nh}} \bar{B} =0$ and $\overline{\langle {\mathcal J}, \mathcal{K}_\subW \rangle} + \bar{B}$ is basic with respect to $\rho_H:T^*\bar{Q} \to T^*\bar{Q}/H$ then,
\vspace{-\topsep}
\begin{enumerate} \itemsep0em
\item[$(i)$] $\bar{J}:T^*\bar{Q} \to \mathfrak{h}^*$ is a momentum map for the manifold $(T^*\bar{Q}, \Omega_{\mbox{\tiny{$\bar{Q}$}}} - (\overline{\langle {\mathcal J}, \mathcal{K}_\subW \rangle} + \bar{B}))$.
\item[$(ii)$] The function $g_\xi \in C^\infty(\bar{Q})$ defined in \eqref{Eq:Defg_xi} is conserved by the compressed motion $\bar{X}_{\emph\nh}$.
\end{enumerate}
\end{corollary}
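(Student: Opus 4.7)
The plan is to mirror Corollary \ref{C:JKbasic=conservationLaws}, with the gauged form $\overline{\langle {\mathcal J}, {\mathcal K}_\subW \rangle} + \bar{B}$ playing the role previously played by $\overline{\langle {\mathcal J}, {\mathcal K}_\subW \rangle}$. Part $(i)$ is the direct analogue and is proved by using ``basic'' in its semi-basic sense. Part $(ii)$ then follows by contracting the gauge-transformed compressed equation of motion \eqref{Eq:Gauge:Chap} with the infinitesimal generator $\xi_{T^*\bar{Q}}$ and invoking ${\bf i}_{\bar{X}_\nh}\bar{B}=0$, which is precisely what $\bar{B}$ was introduced for.

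For $(i)$, I would argue as follows. Since $\overline{\langle {\mathcal J}, {\mathcal K}_\subW \rangle} + \bar{B}$ is basic with respect to $\rho_H:T^*\bar{Q}\to T^*\bar{Q}/H$, it is in particular semi-basic, so for every $\xi\in\mathfrak{h}$ one has
\[
{\bf i}_{\xi_{T^*\bar{Q}}}\bigl(\overline{\langle {\mathcal J}, {\mathcal K}_\subW \rangle} + \bar{B}\bigr) = 0,
\]
because $\xi_{T^*\bar{Q}}$ is tangent to the $H$-orbits. Combining this with the canonical momentum map identity ${\bf i}_{\xi_{T^*\bar{Q}}}\Omega_{\bar{Q}} = d\langle \bar{J},\xi\rangle$ immediately yields
\[
{\bf i}_{\xi_{T^*\bar{Q}}}\bigl(\Omega_{\bar{Q}} - (\overline{\langle {\mathcal J}, {\mathcal K}_\subW \rangle} + \bar{B})\bigr) = d\langle \bar{J},\xi\rangle,
\]
which is exactly the momentum map condition claimed in $(i)$.

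For $(ii)$, I would contract \eqref{Eq:Gauge:Chap} with $\xi_{T^*\bar{Q}}$. The $H$-invariance of $\Ham_{T^*\bar{Q}}$ gives $d\Ham_{T^*\bar{Q}}(\xi_{T^*\bar{Q}})=0$, hence
\[
(\bar{\Omega} - \bar{B})(\bar{X}_\nh,\xi_{T^*\bar{Q}}) = 0, \qquad \text{i.e.,} \qquad \Omega_{\bar{Q}}(\bar{X}_\nh,\xi_{T^*\bar{Q}}) = \bigl(\overline{\langle {\mathcal J}, {\mathcal K}_\subW \rangle} + \bar{B}\bigr)(\bar{X}_\nh,\xi_{T^*\bar{Q}}).
\]
The right-hand side vanishes by the basic (semi-basic) property used in $(i)$. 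By Proposition \ref{Prop:g_xi}, $dg_\xi = d\langle \bar{J},\xi\rangle = {\bf i}_{\xi_{T^*\bar{Q}}}\Omega_{\bar{Q}}$, so
\[
dg_\xi(\bar{X}_\nh) = \Omega_{\bar{Q}}(\xi_{T^*\bar{Q}},\bar{X}_\nh) = -\Omega_{\bar{Q}}(\bar{X}_\nh,\xi_{T^*\bar{Q}}) = 0,
\]
which shows that $g_\xi$ is conserved by $\bar{X}_\nh$. Equivalently, once $(i)$ is established, $(ii)$ can be obtained by applying Proposition \ref{Prop:JK=0} with $\overline{\langle {\mathcal J}, {\mathcal K}_\subW \rangle}$ replaced by $\overline{\langle {\mathcal J}, {\mathcal K}_\subW \rangle} + \bar{B}$, the hypothesis ${\bf i}_{\bar{X}_\nh}\bar{B}=0$ being precisely what allows the substitution to go through.

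I do not anticipate any real obstacle: the only subtlety is the definitional point that ``basic with respect to $\rho_H$'' entails vanishing upon contraction with any vertical vector $\xi_{T^*\bar{Q}}$; everything else reduces to a one-line computation. The hypothesis ${\bf i}_{\bar{X}_\nh}\bar{B}=0$ is exactly what decouples the conservation statement $(ii)$ from $\bar{B}$, while the basic hypothesis on $\overline{\langle {\mathcal J}, {\mathcal K}_\subW \rangle} + \bar{B}$ is what makes $\bar{J}$ an honest momentum map in $(i)$.
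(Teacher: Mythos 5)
Your proposal is correct and follows essentially the same route as the paper: part $(i)$ is the one-line observation that a basic (hence semi-basic) form is annihilated by contraction with the vertical generators $\xi_{T^*\bar{Q}}$, and part $(ii)$ is obtained by contracting the gauged compressed equation of motion with $\xi_{T^*\bar{Q}}$, using the $H$-invariance of $\Ham_{\mbox{\tiny{$T^*\bar{Q}$}}}$ and Proposition \ref{Prop:g_xi}. The paper's proof is just a more compressed version of the same computation.
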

\begin{proof} $(i)$ is straightforward. To show $(ii)$, if $\bar{B}$ is a 2-form on $T^*\bar{Q}$ for which $\bar{J}:T^*\bar{Q} \to \mathfrak{h}^*$ is a momentum map of $\bar{\Omega} -  \bar{B}$, then  $ d  \bar{J}_\xi(\bar{X}_\nh) = (\bar{\Omega}-\bar B)( \xi_{\mbox{\tiny{$T^*\bar{Q}$}}}, \bar{X}_\nh)= d\Ham_{\mbox{\tiny{$T^*\bar{Q}$}}} (\xi_{\mbox{\tiny{$T^*\bar{Q}$}}}) = 0 $ by the $H$-invariance of the Hamiltonian.  Using Prop.~\ref{Prop:g_xi} we obtain that $g_\xi$ is conserved.
\end{proof}

Assume now that there is a 2-form $\bar{B}$ on $T^*\bar{Q}$ such that
${\bf i}_{\bar{X}_\nh} \bar{B} =0$ and $\overline{\langle {\mathcal J}, \mathcal{K}_\subW \rangle} + \bar{B}$ is basic with respect to $\rho_H:T^*\bar{Q} \to T^*\bar{Q}/H$. Then the subsequent steps are equivalent to those carried out in Section \ref{S:2steps}, but now using the 2-form $\overline{\langle {\mathcal J}, \mathcal{K}_\subW \rangle} + \bar{B}$ instead of simply $\overline{\langle {\mathcal J}, {\mathcal K}_\subW \rangle}$. That is, since the canonical momentum map $\bar{J}:T^*\bar{Q}\to \mathfrak{h}^*$ is a momentum map for the 2-form \eqref{Eq:Gauge-Chap2form}, we can perform an (almost) symplectic reduction on the (almost) symplectic manifold \eqref{Eq:Gauge-Chap2form}. For each $\bar{\mu} \in \mathfrak{h}^*$, we obtain an almost symplectic form $\bar{\omega}_{\bar{\mu}}^{\bar\B}$ on $\bar{J}^{-1}(\bar{\mu})/H_{\bar{\mu}}$ such that
\begin{equation} \nonumber
\pi^*_{\bar{\mu}}(\bar{\omega}_{\bar{\mu}}^{\bar{\B}}) = \iota_{\bar{\mu}}^* (\Omega_{\bar{Q}} - (\overline{\langle {\mathcal J}, \mathcal{K}_\subW \rangle} + \bar{B})),
\end{equation}
where $\pi_{\bar{\mu}}:\bar{J}^{-1}(\bar{\mu}) \to \bar{J}^{-1}(\bar{\mu})/H_{\bar{\mu}}$ is the projection to the quotient and $\iota_{\bar{\mu}}: \bar{J}^{-1}(\bar{\mu}) \to \M/G \simeq T^*\bar{Q}$ is the inclusion.
Analogously to Prop.~\ref{P:Reduced2-form}, if we denote by $\mathfrak{B}$ the 2-form on $T^*\bar{Q}/H$ such that $\rho_H^*\mathfrak{B}= \overline{\langle {\mathcal J}, \mathcal{K}_\subW \rangle} + \bar{B}$, then
\begin{equation} \label{Eq:Gauge:Reduced2form}
\bar{\omega}_{\bar\mu}^{\bar\B} = {\Omega}_{\bar\mu} - \mathfrak{B}_{\bar\mu},
\end{equation}
where  ${\Omega}_{\bar\mu}$ is the 2-form defined in \eqref{Eq:Proof:equivalence} and $\mathfrak{B}_{\bar\mu}$ is the pullback of $\mathfrak{B}$ to the submanifold $\bar{J}^{-1}(\bar{\mu})/H_{\bar{\mu}}$.
As in \eqref{Eq:Hamilt-MWreduction}, the reduced dynamics is given, on  each leaf, by
\begin{equation} \label{Eq:Gauge:Hamilt-MWreduction}
{\bf i}_{X_\red^{\bar{\mu}}} ({\Omega}_{\bar\mu} - \mathfrak{B}_{\bar\mu}) = d (\Ham_{\bar\mu} ).
\end{equation}
Moreover, following Prop.~\ref{C:AlmostReducedMFLD}, the almost symplectic manifold $( \bar{J}^{-1}(\bar{\mu})/H_{\bar{\mu}},  \bar{\omega}_{\bar\mu}^{\bar\B})$ is symplectomorphic to
\begin{equation} \label{Eq:Gauge-Ham-Red_Sympl_mfld}
\left(T^*(\bar{Q}/H) \times_{\mbox{\tiny{$\bar{Q}/H$}}} \tilde{\mathcal{O}}_{\bar\mu}, \Omega_{\mbox{\tiny{$\bar{Q}/H$}}} - \beta_{\bar\mu} - \Upsilon^*\mathfrak{B}_{\bar\mu}\right).
\end{equation}
Therefore, following the same steps as the proof of Theorem~\ref{T:Equivalence} we obtain the following Theorem.

%
%
%
%
%
%

\begin{theorem} \label{T:Gauge:equivalence}
Suppose that a 2-form $\bar{B}$ on $T^*\bar{Q}$ satisfies
${\bf i}_{\bar{X}_{\emph\nh}} \bar{B} =0$ and that $\overline{\langle {\mathcal J}, \mathcal{K}_\subW \rangle} + \bar{B}$ is basic with respect to $\rho_H:T^*\bar{Q} \to T^*\bar{Q}/H$. Then the connected components of the almost symplectic manifolds $(\bar{J}^{-1}(\bar{\mu})/H_{\bar{\mu}}, \bar{\omega}_{\bar{\mu}}^{\bar\B})$ given in \eqref{Eq:Gauge:Reduced2form} are the almost symplectic leaves of the twisted Poisson bracket $\{ \cdot, \cdot \}_{\emph\red}^\B$ on $\M/G$ given in Theorem \ref{T:Reduced-Dyn}' for $B$ the 2-form on $\M$ such that $\rho_\subW^*(\bar{B})= B$.
\end{theorem}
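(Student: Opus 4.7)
The plan is to mirror the two-step argument used to prove Theorem~\ref{T:Equivalence}, applied now to the gauge-transformed bracket $\{\cdot,\cdot\}_\B$ on $\M$ and to its compressed counterpart $\bar{\Omega}-\bar{B}$ on $T^*\bar{Q}$. As a preliminary step, I would observe that the hypothesis that $\overline{\langle\mathcal{J},\mathcal{K}_\subW\rangle}+\bar{B}$ is $\rho_H$-basic on $T^*\bar{Q}$ implies, via Notation~\ref{Notation} and the $G_\subW$-invariance and the condition \eqref{Eq:Bsection} on $B$, that $\langle\mathcal{J},\mathcal{K}_\subW\rangle+B$ is $\rho$-basic on $\M$. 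Theorem~\ref{T:Reduced-Dyn}' then yields that $\{\cdot,\cdot\}^\B_\red$ is twisted Poisson and hence admits an almost symplectic foliation, so the task reduces to identifying those leaves with the connected components of $(\bar{J}^{-1}(\bar{\mu})/H_{\bar{\mu}},\bar{\omega}^{\bar\B}_{\bar\mu})$.

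Next I would compress by $G_\subW$. The lemma in Section~\ref{Ss:Gauge2Setps} provides that the $G_\subW$-reduction of $\{\cdot,\cdot\}_\B$ is the non-degenerate bracket on $T^*\bar{Q}$ described by $\bar{\Omega}-\bar{B}=\Omega_{\mbox{\tiny{$\bar{Q}$}}}-\overline{\langle\mathcal{J},\mathcal{K}_\subW\rangle}-\bar{B}$. By Corollary~\ref{C:B+JK-Basic}, the basicness assumption ensures that the canonical momentum map $\bar{J}:T^*\bar{Q}\to\mathfrak{h}^*$ is a genuine momentum map for the $H$-action on $(T^*\bar{Q},\bar{\Omega}-\bar{B})$. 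An almost symplectic Marsden--Weinstein reduction at each level $\bar\mu\in\mathfrak{h}^*$ then produces exactly the manifolds $(\bar{J}^{-1}(\bar{\mu})/H_{\bar{\mu}},\bar{\omega}^{\bar\B}_{\bar\mu})$ with $\bar{\omega}^{\bar\B}_{\bar\mu}=\Omega_{\bar\mu}-\mathfrak{B}_{\bar\mu}$ as in \eqref{Eq:Gauge:Reduced2form}, and the identification with \eqref{Eq:Gauge-Ham-Red_Sympl_mfld} follows from Lemma~\ref{L:identifications} in the same way as in Proposition~\ref{C:AlmostReducedMFLD}.

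The remaining task, and in my view the main technical obstacle, is a reduction-by-stages argument ensuring that the two-step reduction of $\{\cdot,\cdot\}_\B$ (first by $G_\subW$, then by $H$) agrees as an almost Poisson manifold with the one-step reduction $(\M/G,\{\cdot,\cdot\}^\B_\red)$. Following the closing part of the proof of Theorem~\ref{T:Equivalence}, I would use the $G$-invariance of $\{\cdot,\cdot\}_\B$, the normality of $G_\subW$ in $G$, and the definition \eqref{Eq:RedNHbracket} applied stage by stage to pullbacks of functions from $(\M/G_\subW)/H\simeq \M/G$, to verify that the resulting reduced brackets coincide. Once this identification is in place, the almost symplectic foliations of the two brackets necessarily coincide, and the description of each leaf obtained above finishes the proof. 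A subtlety to track throughout is that the condition ${\bf i}_{\bar{X}_\nh}\bar{B}=0$ must propagate to each stage so that $\bar{\Omega}-\bar{B}$ genuinely describes the compressed dynamics via \eqref{Eq:Gauge:Chap}, and then that the resulting $H$-action on $(T^*\bar{Q},\bar{\Omega}-\bar{B})$ satisfies the hypotheses required by the almost symplectic reduction step; once these checks are carried out, the remaining content is formal.
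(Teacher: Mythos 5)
Your proposal is correct and follows essentially the same route as the paper: the authors also prove this theorem by compressing the gauge-transformed bracket via the lemma of Section~\ref{Ss:Gauge2Setps}, invoking Corollary~\ref{C:B+JK-Basic} to get the momentum map for $\bar\Omega-\bar B$, performing the almost symplectic reduction to obtain \eqref{Eq:Gauge:Reduced2form}, and then repeating verbatim the reduction-by-stages identification from the proof of Theorem~\ref{T:Equivalence} with $\overline{\langle\mathcal{J},\mathcal{K}_\subW\rangle}+\bar B$ in place of $\overline{\langle\mathcal{J},\mathcal{K}_\subW\rangle}$. Your preliminary observation that $\rho_H$-basicness of $\overline{\langle\mathcal{J},\mathcal{K}_\subW\rangle}+\bar B$ gives $\rho$-basicness of $\langle\mathcal{J},\mathcal{K}_\subW\rangle+B$ (so that Theorem~\ref{T:Reduced-Dyn}' applies) is exactly the analogue of Notation~\ref{Notation} that the paper uses implicitly.
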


Observe that under the hypotheses of Theorem \ref{T:Gauge:equivalence} the almost symplectic leaves associated to the twisted Poisson bracket $\{ \cdot, \cdot \}_{\red}^\B$ are identified with (the connected components of) the almost symplectic manifolds \eqref{Eq:Gauge-Ham-Red_Sympl_mfld}.

\begin{remark} If there is a $\bar{B}$ satisfying the hypotheses of Corollary \ref{C:B+JK-Basic} then, for each $\xi \in \mathfrak{h}^*$, by Prop.~\ref{Prop:JK=0},   $\overline{\langle {\mathcal J}, \mathcal{K}_\subW \rangle} (\bar{X}_{\nh} , \xi_{\mbox{\tiny{$T^*\bar{Q}$}}} ) = 0$. Therefore, the fact that ${\bf i}_{\bar{X}_\nh} \overline{\langle {\mathcal J}, \mathcal{K}_\subW \rangle}$ is basic with respect to $\rho_H:T^*\bar{Q} \to T^*\bar{Q}/H$ is a necessary condition for the existence of $B$.
\end{remark}

\medskip

\noindent {\bf The canonical bracket.} Let $\{\cdot, \cdot\}_{\Lambda}$ be the Poisson bracket on $T^*\bar{Q}/H$ induced by the reduction (in the sense of \eqref{Eq:RedNHbracket}) of the canonical Poisson bracket on $T^*\bar{Q}$ given by $\Omega_{\mbox{\tiny{$\bar{Q}$}}}$.  That is, for $f,g \in C^\infty(T^*\bar{Q}/H)$,
\begin{equation} \label{Eq:LambdaBracket}
\{f, g\}_{\Lambda} \circ \rho_H = \Omega_{\mbox{\tiny{$\bar{Q}$}}} (X_{\rho_H^*f}, X_{\rho_H^*g}),
\end{equation}
where $X_{\rho_H^*f}, X_{\rho_H^*g}$ are the hamiltonian vector fields associated to the 2-form $\Omega_{\mbox{\tiny{$\bar{Q}$}}}$.

The symplectic leaves of the Poisson bracket $\{\cdot, \cdot \}_{\Lambda}$ are (the connected components of) the Marsden-Weinstein quotients $(J^{-1}(\bar\mu) / H_{\bar\mu}, \Omega_{\bar{\mu}})$.

\begin{proposition}
Under the hypothesis of Theorem \ref{T:Gauge:equivalence} the reduced bracket $\{\cdot, \cdot \}_{\emph\red}^\B$ is gauge related with the Poisson bracket $\{\cdot , \cdot\}_{\Lambda}$ by the 2-form $\mathfrak{B}$, where $\mathfrak{B}$ is the 2-form on $T^*\bar{Q}/H$ such that $\rho_H^*\mathfrak{B}= \overline{\langle {\mathcal J}, \mathcal{K}_\subW \rangle} + \bar{B}$.
\end{proposition}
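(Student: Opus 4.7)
The plan is to compare the almost symplectic foliations of the two brackets on $T^*\bar{Q}/H$ and invoke the foliation-level characterization of gauge transformations (Remark \ref{R:Gauge:Examples}). First, I would identify the ambient manifolds: by reduction in two stages (first by the normal subgroup $G_\subW$, then by $H=G/G_\subW$), we have a canonical diffeomorphism $\M/G \simeq T^*\bar{Q}/H$. Under this identification, both $\{\cdot,\cdot\}_{\red}^\B$ and $\{\cdot,\cdot\}_\Lambda$ are brackets on the same manifold, so the statement that they are gauge related by $\mathfrak{B}$ makes sense.

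Next, I would compare the almost symplectic leaves. On the one hand, $\{\cdot,\cdot\}_\Lambda$ is defined as the Poisson reduction of the canonical symplectic form $\Omega_{\bar{Q}}$ by the Hamiltonian $H$-action, so by Marsden--Weinstein its symplectic leaves are the connected components of $(\bar{J}^{-1}(\bar\mu)/H_{\bar\mu},\,\Omega_{\bar\mu})$. On the other hand, Theorem \ref{T:Gauge:equivalence} identifies the almost symplectic leaves of $\{\cdot,\cdot\}_{\red}^\B$ with the connected components of $(\bar{J}^{-1}(\bar\mu)/H_{\bar\mu},\,\bar\omega^{\bar\B}_{\bar\mu})$, where, by \eqref{Eq:Gauge:Reduced2form},
\begin{equation*}
\bar\omega^{\bar\B}_{\bar\mu} \;=\; \Omega_{\bar\mu} - \mathfrak{B}_{\bar\mu},
\end{equation*}
with $\mathfrak{B}_{\bar\mu}$ the pullback of $\mathfrak{B}$ to the submanifold $\bar{J}^{-1}(\bar\mu)/H_{\bar\mu}$. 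Thus the two foliations share the same underlying leaves, and on each leaf the almost symplectic form of $\{\cdot,\cdot\}_{\red}^\B$ differs from the symplectic form of $\{\cdot,\cdot\}_\Lambda$ precisely by the restriction of $\mathfrak{B}$.

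Finally, I would invoke the characterization of gauge transformations on foliated almost Poisson manifolds recalled in Remark \ref{R:Gauge:Examples}: if two almost Poisson brackets have the same characteristic foliation and the (almost) symplectic form on each leaf of one bracket is obtained from that of the other by subtracting the restriction of a globally defined 2-form, then the brackets are gauge related by that 2-form. Applying this with the 2-form $\mathfrak{B}$ gives that $\{\cdot,\cdot\}_{\red}^\B$ is the gauge transformation of $\{\cdot,\cdot\}_\Lambda$ by $\mathfrak{B}$. The only subtle point, and thus the main obstacle, is to verify that the leafwise identification coming from Theorem~\ref{T:Gauge:equivalence} is consistent with the identification of $\M/G$ with $T^*\bar{Q}/H$ used to view $\{\cdot,\cdot\}_\Lambda$ on $\M/G$; this is essentially the reduction-by-stages argument used in the proof of Theorem \ref{T:Equivalence}, and it carries over verbatim because the dynamical gauge transformation by $B$ preserves the characteristic distribution $\C$ and commutes with both reduction steps (it descends along $\rho_\subW$ to the gauge transformation by $\bar{B}$ on $T^*\bar{Q}$, which in turn descends to the gauge transformation by $\mathfrak{B}$ after reducing by $H$).
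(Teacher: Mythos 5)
Your argument is correct, but it takes a genuinely different route from the paper's. The paper works one level up, on $T^*\bar{Q}$: it observes that the canonical form $\Omega_{\mbox{\tiny{$\bar{Q}$}}}$ and the compressed form $\bar{\Omega}-\bar{B}$ are gauge related by the 2-form $\overline{\langle {\mathcal J},{\mathcal K}_\subW\rangle}+\bar{B}$, which by hypothesis is basic with respect to $\rho_H$, and then invokes \cite[Prop.~4.8]{paula} (gauge transformations by basic 2-forms commute with reduction) to conclude that the $H$-reductions of these two brackets, namely $\{\cdot,\cdot\}_\Lambda$ and $\{\cdot,\cdot\}_\red^\B$, are gauge related by $\mathfrak{B}$. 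You instead compare the two brackets leaf by leaf on $T^*\bar{Q}/H$, using Theorem \ref{T:Gauge:equivalence} to identify the leaves of $\{\cdot,\cdot\}_\red^\B$ with $(\bar{J}^{-1}(\bar\mu)/H_{\bar\mu},\,\Omega_{\bar\mu}-\mathfrak{B}_{\bar\mu})$ and Marsden--Weinstein reduction to identify those of $\{\cdot,\cdot\}_\Lambda$ with $(\bar{J}^{-1}(\bar\mu)/H_{\bar\mu},\,\Omega_{\bar\mu})$. This works, but note that Remark \ref{R:Gauge:Examples} only records the forward implication (a gauge transformation shifts the leafwise forms by the restriction of the 2-form); to argue in your direction you additionally need the fact --- true and elementary, but worth stating --- that an almost Poisson bracket with integrable characteristic distribution is uniquely determined by its leaves together with the leafwise almost symplectic forms, so that $\{\cdot,\cdot\}_\red^\B$ must \emph{coincide} with the gauge transform of $\{\cdot,\cdot\}_\Lambda$ by $\mathfrak{B}$. (The nondegeneracy of $\Omega_{\bar\mu}-\mathfrak{B}_{\bar\mu}$ on each leaf, which is what makes that gauge transform well defined, is supplied by Theorem \ref{T:Gauge:equivalence}.) The paper's argument buys a proof that never passes through the foliation and so needs neither this uniqueness lemma nor Theorem \ref{T:Gauge:equivalence} as input; yours makes the leafwise picture explicit, at the cost of leaning on that theorem and on the consistency of the two-stage identification $\M/G\simeq T^*\bar{Q}/H$, which you correctly flag as the point requiring the reduction-by-stages argument.
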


\begin{proof}
 The reduction of the almost Poisson bracket $\{\cdot, \cdot \}_\B$ by the $G_\subW$-action gives the almost symplectic manifold \eqref{Eq:Gauge-Chap2form}.  Observe also that $\Omega_{\mbox{\tiny{$\bar{Q}$}}}$ is gauge related with $\bar{\Omega} - \bar{B}$ by the 2-form $\overline{\langle {\mathcal J}, \mathcal{K}_\subW \rangle} + \bar{B}$ which is basic with respect to $\rho_H:T^*\bar{Q} \to T^*\bar{Q}/H$. Therefore, by \cite[Prop~4.8]{paula}, the reduced brackets $\{\cdot, \cdot\}_{\Lambda}$ and $\{\cdot, \cdot\}_\red^\B$ are gauge related by $\mathfrak{B}$.
\end{proof}

\begin{remark}
 The Poisson bracket $\{\cdot, \cdot\}_{\Lambda}$ is the Poisson bracket $\Lambda$ considered in \cite{paula}. In fact, $\{\cdot, \cdot\}_{\Lambda}$ can be seen as the reduction by the $G$-action of the almost Poisson bracket given by the gauge transformation of $\{\cdot, \cdot\}_\nh$ by the 2-form $\langle {\mathcal J}, \mathcal{K}_\subW \rangle$.
\end{remark}

\section{Nonholonomic Lagrange-Routh equations} \label{S:RouthEquations}

In this section we study a version of the Lagrange-Routh equations for nonholonomic systems that depends on the form $\langle {\mathcal J}, \mathcal{K}_\subW \rangle$. We will see that if $\overline{\langle {\mathcal J}, \mathcal{K}_\subW \rangle}$ is basic with respect to the projection $\rho_H:T^*\bar{Q} \to T^*\bar{Q}/H$, then the Lagrangian equations of motion on each almost symplectic leaf \eqref{Eq:Ham-Red_Sympl_mfld} are a nonholonomic version of the Lagrange-Routh equations. Therefore, the equations of motion of the twisted Poisson bracket $\{\cdot, \cdot \}_\red$ are related to the {\it nonholonomic Lagrange-Routh equations}.

We consider a nonholonomic system on the manifold $Q$ given by a Lagrangian $L$ and a distribution $D$ with a $G$-symmetry.  Let $W$ be a $G$-invariant vertical complement of $D$ on $TQ$ as in \eqref{Eq:D+W} and assume that it satisfies the vertical-symmetry condition \eqref{Eq:VerticalSymmetries}. Thus the nonholonomic system is $G_\subW$-Chaplygin and the compressed equations of motion take place in $T^*\bar{Q}$ or $T\bar{Q}$, depending on whether we work in the Hamiltonian or Lagrangian formalism. The Lagrange-Routh equations are defined on the tangent bundle (i.e., using the Lagrangian formalism) and thus we start by discussing the Lagrangian formalism of nonholonomic systems.

\subsection{Lagrangian formulation and reduction} \label{Ss:LagragianFormulation}

As we have seen in Section \ref{Ss:NH}, the Legendre transformation $(\mathbb{F}L):TQ \to T^*Q$ is an isomorphism linear on the fibers, and thus the moment map $\mathcal{J}:\M \to \mathfrak{g}^*$ induces a function ${\mathcal J}_L: D \to \mathfrak{g}^*$ given by
$$
\mathcal{J}_L := \mathcal{J} \circ (\mathbb{F}L)|_D.
$$
On the other hand, we denote also by ${\mathcal K}_\subW$ the $\mathfrak{g}$-valued 2-form on $D$ given by ${\mathcal K}_\subW = \tau_D^*({\bf K}_{\mbox{\tiny{$W$}}})$, where $\tau_D :D \to Q$ is the canonical projection and ${\bf K}_{\mbox{\tiny{$W$}}}$ is the $\mathfrak{g}$-valued 2-form on $Q$ defined in \eqref{Eq:Def_K}. Therefore, we have a well defined 2-form on $D$ given by $\langle {\mathcal J}_L, {\mathcal K}_\subW \rangle$ (analogous to \eqref{Eq:JK}) that is basic with respect to the orbit projection $\phi_\subW:D \to D/G_\subW \simeq T\bar{Q}$ (as in Lemma \ref{L:JKbasic}).
As before, we denote by $\overline{\langle {\mathcal J}_L, {\mathcal K}_\subW \rangle}$ the 2-form on $T\bar{Q}$ such that $\phi_\subW^*\overline{\langle {\mathcal J}_L, {\mathcal K}_\subW \rangle} = \langle {\mathcal J}_L, {\mathcal K}_\subW \rangle.$

On the manifold $\bar{Q}:= Q/G_\subW$, we have the  compressed Lagrangian $l:T\bar{Q} \to \R$ induced by $L$, i.e., $\phi_\subW^* l = L|_D$, that is also of mechanical type: $l = \frac{1}{2} \bar\kappa - V$ where $\bar\kappa$ is the (partially reduced) kinetic energy and $V$ is the (partially reduced) potential on $\bar{Q}$.  Denote by $(\mathbb{F}l): T\bar{Q} \to T^*\bar{Q}$ the associated Legendre transformation.
 Observe also that $\overline{\langle {\mathcal J}_L, {\mathcal K}_\subW \rangle} = (\mathbb{F} l)^*\overline{\langle {\mathcal J}, {\mathcal K}_\subW \rangle}$. It follows that the compressed equations of motion on $T\bar{Q}$ are given by the integral curves of the vector field $\bar{X}_\nh^l$ on $T\bar{Q}$ such that
\begin{equation}\label{Eq:Lagr-AlmostSymplectic}
{\bf i}_{\bar{X}^l_\nh} (\Omega_{l} - \overline{\langle {\mathcal J}_L, {\mathcal K}_\subW \rangle} ) = dE_l,
\end{equation}
where $\Omega_{l} = (\mathbb{F}l)^* \Omega_{\mbox{\tiny{$\bar{Q}$}}}$ and $E_{l}:T\bar{Q} \to \R$ is the Lagrangian energy associated to the Lagrangian $l:T\bar{Q} \to \R$.
\bigskip

Analogous to what we did in Section \ref{Ss:2S-H_action}, the action of the Lie group $H:= G/G_\subW$ is a symmetry of the nonholonomic system \eqref{Eq:Lagr-AlmostSymplectic}. Then the $H$-action on $\bar{Q}$ defines the canonical momentum map $\bar{J}_l:T\bar{Q} \to \mathfrak{h}^*$  such that 
\begin{equation}\label{Eq:LagMoment}
\bar{J}_l \circ (\mathbb{F}l) = \bar{J}.
\end{equation}

Observe that we have the same ingredients as in Section \ref{Ss:2S-Reduction}, but now on the Lagrangian side: the compressed dynamics is defined on the almost symplectic manifold $(T\bar{Q}, \Omega_l-\overline{\langle \mathcal{J}_L, \mathcal{K}_\subW\rangle} )$ by the (Lagrangian) energy $E_l$. The $H$-action on $\bar{Q}$ induces the canonical momentum map $\bar{J}_l:T\bar{Q}\to \mathfrak{h}^*$ which is a momentum map for $\Omega_l-\overline{\langle \mathcal{J}_L, \mathcal{K}_\subW\rangle}$  if and only if $\overline{\langle {\mathcal J}_L, \mathcal{K}_\subW \rangle}$ is basic with respect to $\phi_H:T\bar{Q} \to T\bar{Q}/H$, (see Corollary \ref{C:JKbasic=conservationLaws} and \eqref{Eq:LagMoment}). 
If $\overline{\langle {\mathcal J}_L, \mathcal{K}_\subW \rangle}$ is basic we denote by $\overline{\langle {\mathcal J_L}, {\mathcal K}_\subW \rangle}_{\red}$ the 2-form on $T\bar{Q}/H$ such that $\phi_H^* \overline{\langle {\mathcal J_L}, {\mathcal K}_\subW \rangle}_{\red} = \overline{\langle {\mathcal J_L}, {\mathcal K}_\subW \rangle}$. Then following Prop. \ref{P:Reduced2-form}, we obtain the reduced almost symplectic manifold 
\begin{equation}\label{Eq:Leaf}
({\bar J}_l^{-1}(\bar{\mu})/H_{\bar{\mu}} , \bar{\omega}^l_{\bar\mu}-[\overline{\langle {\mathcal J_L}, {\mathcal K}_\subW \rangle}_{\red}]_{\bar{\mu}}),
\end{equation}
such that $\iota^*_l \Omega_l = \pi_l^*\omega_{\bar\mu}^l$, where $\iota_l : \bar{J}^{-1}_l({\bar\mu}) \to T\bar{Q}$ is the inclusion, $\pi_l: \bar{J}^{-1}_l({\bar\mu}) \to \bar{J}^{-1}_l({\bar\mu})/H_{\bar\mu} $ is the projection to the quotient and $[\overline{\langle {\mathcal J_L}, {\mathcal K}_\subW \rangle}_{\red}]_{\bar\mu}$ is the restriction of $ \overline{\langle {\mathcal J_L}, {\mathcal K}_\subW \rangle}_{\red}$ to the leaf.

\subsection{Intrinsic version of the Lagrange-Routh equations.}  \label{Ss:TheRouthian}
The classical Routhian is a function on $T\bar{Q}$ that involves the Lagrangian $l:T\bar{Q}\to \R$ and whose definition depends on the conservation of the canonical momentum map.

Observe that $\overline{\langle \mathcal{J}, \mathcal{K}_\subW\rangle}$ is basic with respect to $\rho_H:T^*\bar{Q} \to T^*\bar{Q}/H$ if and only if $\overline{\langle \mathcal{J}_L, \mathcal{K}_\subW\rangle}$ is basic with respect to $\phi_H:T\bar{Q} \to T\bar{Q}/H$. Then, if we assume that $\overline{\langle {\mathcal J}_L, \mathcal{K}_\subW \rangle}$ is basic, by Corollary \ref{C:JKbasic=conservationLaws} and \eqref{Eq:LagMoment} the canonical moment map $\bar{J}_l:T\bar{Q} \to \mathfrak{h}^*$ is a moment map for $\Omega_{l} - \overline{\langle {\mathcal J}_L, {\mathcal K}_\subW \rangle} $ (and thus it is conserved by the compressed motion \eqref{Eq:Lagr-AlmostSymplectic}).
Therefore, we can restrict the system to the level set $\bar{J}_l^{-1}(\bar\mu)$ for $\bar{\mu} \in \mathfrak{h}^*$.

For each $\bar{\mu} \in \mathfrak{h}^*$, the {\em Routhian} $R^{\bar\mu} : T\bar{Q} \to \R$ is defined, at each $v_{\bar{q}} \in T_{\bar{q}}\bar{Q}$ by
\begin{equation}\label{re}
R^{\bar\mu}(v_{\bar{q}}) = l(v_{\bar{q}}) - \langle \bar\mu, \bar{\mathcal{A}}(v_{\bar{q}})\rangle,
\end{equation}
where $\bar{\mathcal{A}} : T\bar{Q} \to \mathfrak{h}$ is the mechanical (principal) connection with respect to the metric given by the kinetic energy $\bar{\kappa}$ of the reduced Lagrangian $l:T\bar{Q}\to \R$ (i.e,. the horizontal space $H =\textup{Ker}\bar{\mathcal A}$ is $\bar{\kappa}$-orthogonal to the vertical space). 
It can be checked that the Euler-Lagrange equations coincide for the Lagrangian $l$ and $R^{\bar{\mu}}$, \cite{Bavo,LC,Marsden2000}.

Analogously as in Lemma \ref{L:identifications}, let us consider the bundle $\bar{J}_l^{-1}(\bar\mu) /H_{\bar\mu} \to \bar{Q}/H$ and denote by \begin{equation} \label{Eq:Tangent-Diffeo}
\Upsilon_l^{\bar\mu}: T(\bar{Q}/H) \times_{\mbox{\tiny{$\bar{Q}/H$}}} \bar{Q}/H_{\bar\mu} \to {\bar J}_l^{-1}(\bar\mu)/H_{\bar\mu}
\end{equation}
the bundle diffeomorphism over the identity. Again, if the level $\bar\mu$ is clear, we write $\Upsilon_l$ to denote $\Upsilon_l^{\bar\mu}$.

Since the Routhian $R^{\bar\mu}$ on $T\bar{Q}$ is $H_{\bar\mu}$-invariant \cite{Marsden2000}, by restriction and reduction there is a well-defined function
$$
\mathfrak{R}^{\bar\mu}: T(\bar{Q}/H) \times_{\mbox{\tiny{$\bar{Q}/H$}}} \bar{Q}/H_{\bar\mu} \to \R
$$
called the {\it reduced Routhian}, such that $R_c^{\bar\mu} = (\Upsilon^{-1}_l \circ \pi_l)^* \mathfrak{R}^{\bar\mu}$, where $R_c^{\bar\mu}$ is the restriction of the Routhian $R^{\bar\mu}$ to $\bar{J}_l^{-1}({\bar\mu})$ and $\pi_l: \bar{J}_l^{-1}(\bar\mu) \to \bar{J}_l^{-1}(\bar\mu)/H_{\bar\mu}$ is the orbit projection.
In order to study the reduced Routhian $\mathfrak{R}^{\bar\mu}$, in \cite{CM1,Marsden2000} was introduced the $H_{\bar\mu}$-invariant function on $\bar{Q}$ given by
\begin{equation} \label{Eq:Def:C_mu}
C_{\bar\mu}(\bar{q}) = \frac{1}{2} \langle \bar\mu , I^{-1}(\bar{q}) \bar\mu\rangle,
\end{equation}
where, for each $\bar{q} \in \bar{Q}$, the map $I(\bar{q}) :\mathfrak{h} \to \mathfrak{h}^*$ is an isomorphism such that $\langle I(\bar{q}) \xi , \eta \rangle = \bar\kappa(\xi_{\bar{Q}}, \eta_{\bar{Q}})$.
On the other hand, the kinetic energy $\bar{\kappa}$ on $\bar{Q}$ induces a metric $|| \cdot ||$ on $\bar{Q}/H$ so that, for $v_x \in T_x(\bar{Q}/H)$, $||v_{x}|| = \bar\kappa((v_x)_{\bar{q}}^h,
(v_x)_{\bar{q}}^h)$ where $(v_x)_{\bar{q}}^h \in T_{\bar{q}}\bar{Q}$ is the horizontal lift of $v_x$ at $\bar{q}$ associated to the mechanical connection $\bar{\mathcal{A}}$. Therefore, if we denote by $\bar{\tau}_{\bar\mu}:\bar{Q}\to \bar{Q}/H_{\bar\mu}$ the orbit projection, then the reduced Routhian can be expressed as
\begin{equation} \label{Eq:RedRouthian}
\mathfrak{R}^{\bar\mu} (x, \dot x, \bar{\tau}_{\bar\mu}(\bar{q})) = \frac{1}{2} ||\dot x||^2 - \mathcal{V}_{\bar\mu}(\bar{\tau}_{\bar\mu}(\bar{q})),
\end{equation}
where $\mathcal{V}_{\bar\mu}\in C^\infty(\bar{Q}/H_{\bar\mu})$ such that $\bar{\tau}_{\bar\mu}^* \mathcal{V}_{\bar\mu} = V(\bar{q}) + C_{\bar\mu}$.
Since the potential $V$ is $H$-invariant, then there is a Lagrangian function $\mathfrak{L}$ defined on $\bar{Q}/H$:
\begin{equation} \label{Def:Red-Lagragian}
\mathfrak{L}(x,\dot x) = \frac{1}{2}||\dot x||^2 - \mathcal{V}(x),
\end{equation}
where $\bar{\tau}^*\mathcal{V} = V$ for $\bar{\tau} :\bar{Q}\to \bar{Q}/H$. However, the function $C_{\bar\mu}$ is only $H_{\bar{\mu}}$-invariant and thus, using Lemma \ref{L:identifications}$(ii)$, we define the function $\mathfrak{C}_{\bar\mu}$ on $\tilde{\mathcal O}_{\bar\mu}$ such that $C_{\bar\mu} = (\psi_{\bar\mu} \circ \bar{\tau}_{\bar\mu})^* \mathfrak{C}_{\bar\mu}$.
Therefore,
\begin{equation} \label{Eq:L-C=R}
\mathfrak{L}  - \mathfrak{C}_{\bar\mu}= \mathfrak{R}^{\bar\mu}.
\end{equation}


%
%
%

\noindent {\bf The intrinsic (free) Lagrange-Routh equations.}
At each $\bar\mu \in \mathfrak{h}^*$, the Lagrangian function $\mathfrak{L}:T(\bar{Q}/H) \to \R$ given in \eqref{Def:Red-Lagragian} induces the isomorphism
\begin{equation} \label{Eq:Reduced-Legendre}
 (\mathbb{F}\mathfrak{L})_{\bar\mu}: T(\bar{Q}/H) \times_{\mbox{\tiny{$\bar{Q}/H$}}} \tilde{\mathcal O}_{\bar\mu} \to T^*(\bar{Q}/H) \times_{\mbox{\tiny{$\bar{Q}/H$}}} \tilde{\mathcal O}_{\bar\mu},
 \end{equation}
given by $(\mathbb{F}\mathfrak{L})_{\bar\mu} = (\mathbb{F}\mathfrak{L}) \times \textup{Id}_{\bar\mu}$, where $(\mathbb{F}\mathfrak{L}) : T(\bar{Q}/H) \to T^*(\bar{Q}/H)$ is the Legendre transformation associated to $\mathfrak{L}$ and $\textup{Id}_{\bar\mu}$ is the identity over $\tilde{\mathcal O}_{\bar\mu}$.
The following result is inspired by the work of \cite{LC} but we decided to develop it here for completeness. 

\begin{lemma} \label{Ap:L:Symplectomorphism} Let $\Omega_{\mbox{\tiny{$\mathfrak{L}$}}}$ be  the Lagrangian 2-form on $T(\bar{Q}/H)$ associated to the Lagrangian $\mathfrak{L}$.  Then
 \begin{enumerate}
\item[$(i)$] $(\mathbb{F}\mathfrak{L})_{{\bar\mu}}$ is a symplectomorphism between $( T(\bar{Q}/H) \times_{\mbox{\tiny{$\bar{Q}/H$}}} \tilde{\mathcal O}_{\bar\mu} , \Omega_{\mbox{\tiny{$\mathfrak{L}$}}} - \beta_{\bar\mu})$ and $( T^*(\bar{Q}/H) \times_{\mbox{\tiny{$\bar{Q}/H$}}} \tilde{\mathcal O}_{\bar\mu}, \Omega_{\mbox{\tiny{$\bar{Q}/H$}}} - \beta_{\bar\mu} )$.
  \item[$(ii)$] The manifold $(T(\bar{Q}/H) \times \tilde{\mathcal O}, \Omega_{\mbox{\tiny{$\mathfrak{L}$}}} - \beta_{\bar\mu})$ is symplectormophic to the Lagrangian Marsden-Weinstein quotients   $(\bar{J}^{-1}_l({\bar\mu})/H_{\bar\mu}, \omega^l_{\bar\mu})$ such that $\iota^*_l \Omega_l = \pi_l^*\omega_{\bar\mu}^l$, where $\iota_l : \bar{J}^{-1}_l({\bar\mu}) \to T\bar{Q}$ is the inclusion and $\pi_l: \bar{J}^{-1}_l({\bar\mu}) \to \bar{J}^{-1}_l({\bar\mu})/H_{\bar\mu} $ is the projection to the quotient (see \eqref{Eq:Leaf}).
 \end{enumerate}

\end{lemma}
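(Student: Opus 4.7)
The proof naturally splits along the two items. For \textbf{part (i)}, my plan is to exploit the product structure of the fiberwise map $(\mathbb{F}\mathfrak{L})_{\bar\mu} = (\mathbb{F}\mathfrak{L}) \times \text{Id}_{\bar\mu}$, which acts by the Legendre transform on the $T(\bar{Q}/H)$ factor and as the identity on $\tilde{\mathcal{O}}_{\bar\mu}$ (and which preserves the fiber product structure since $\mathbb{F}\mathfrak{L}$ covers the identity on $\bar{Q}/H$). Since $\mathfrak{L}$ is of mechanical type, the classical identity $(\mathbb{F}\mathfrak{L})^*\Omega_{\bar{Q}/H} = \Omega_{\mathfrak{L}}$ applies. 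On the fiber product, $\beta_{\bar\mu}$ is understood as the pullback of the 2-form on $\tilde{\mathcal{O}}_{\bar\mu}$ via the second projection; since $(\mathbb{F}\mathfrak{L})_{\bar\mu}$ commutes with this projection and is the identity on the target, the pullback of $\beta_{\bar\mu}$ is itself. Subtracting gives $(\mathbb{F}\mathfrak{L})_{\bar\mu}^*(\Omega_{\bar{Q}/H} - \beta_{\bar\mu}) = \Omega_{\mathfrak{L}} - \beta_{\bar\mu}$, and since both factors of $(\mathbb{F}\mathfrak{L})_{\bar\mu}$ are diffeomorphisms, we obtain a symplectomorphism.

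For \textbf{part (ii)}, the strategy is to pass between Lagrangian and Hamiltonian reductions via the Legendre transformation, and then invoke the Hamiltonian identification already available from Lemma \ref{L:identifications}(i). Concretely, the global Legendre transform $\mathbb{F}l : T\bar{Q} \to T^*\bar{Q}$ is $H$-equivariant (since $l$ is $H$-invariant) and by \eqref{Eq:LagMoment} sends $\bar{J}_l^{-1}(\bar\mu)$ to $\bar{J}^{-1}(\bar\mu)$. It therefore descends to an $H_{\bar\mu}$-equivariant diffeomorphism between the level sets and induces a diffeomorphism $\bar{J}_l^{-1}(\bar\mu)/H_{\bar\mu} \to \bar{J}^{-1}(\bar\mu)/H_{\bar\mu}$ covering the identity on $\bar{Q}/H$. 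The relation $\Omega_l = (\mathbb{F}l)^*\Omega_{\bar{Q}}$, combined with the defining equations $\iota_l^*\Omega_l = \pi_l^*\omega_{\bar\mu}^l$ and $\iota_{\bar\mu}^*\Omega_{\bar{Q}} = \pi_{\bar\mu}^*\Omega_{\bar\mu}$, shows that this induced map is a symplectomorphism between $(\bar{J}_l^{-1}(\bar\mu)/H_{\bar\mu}, \omega_{\bar\mu}^l)$ and $(\bar{J}^{-1}(\bar\mu)/H_{\bar\mu}, \Omega_{\bar\mu})$.

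Finally, I would assemble the chain of symplectomorphisms: starting from $(\bar{J}_l^{-1}(\bar\mu)/H_{\bar\mu}, \omega_{\bar\mu}^l)$, apply the reduced Legendre diffeomorphism just constructed to land on $(\bar{J}^{-1}(\bar\mu)/H_{\bar\mu}, \Omega_{\bar\mu})$, then apply $(\Upsilon^{\bar\mu})^{-1}$ from Lemma \ref{L:identifications}(i) to reach $(T^*(\bar{Q}/H)\times_{\bar{Q}/H}\tilde{\mathcal{O}}_{\bar\mu}, \Omega_{\bar{Q}/H} - \beta_{\bar\mu})$, and then apply $(\mathbb{F}\mathfrak{L})_{\bar\mu}^{-1}$ from part (i) to land on $(T(\bar{Q}/H)\times_{\bar{Q}/H}\tilde{\mathcal{O}}_{\bar\mu}, \Omega_{\mathfrak{L}} - \beta_{\bar\mu})$.

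I expect the only delicate point to be the bookkeeping across the fiber product: one must verify that all the maps in the chain cover the identity on $\bar{Q}/H$ and that the pullbacks of $\beta_{\bar\mu}$ behave uniformly, which reduces to the fact that $\beta_{\bar\mu}$ is defined on $\tilde{\mathcal{O}}_{\bar\mu}$ independently of whether the first factor is $T$ or $T^*$. Since this is genuinely just a diagram-chase once the Legendre and Marsden--Weinstein pieces are aligned, no new analytical difficulty should arise.
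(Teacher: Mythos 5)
Your proposal is correct and follows essentially the same route as the paper: part (i) is the identity $(\mathbb{F}\mathfrak{L})^*\Omega_{\mbox{\tiny{$\bar{Q}/H$}}}=\Omega_{\mbox{\tiny{$\mathfrak{L}$}}}$ together with the observation that $\beta_{\bar\mu}$ lives on $\tilde{\mathcal O}_{\bar\mu}$ and is untouched, and part (ii) is the same chain (reduced Legendre map, then $\Upsilon^{-1}$, then $(\mathbb{F}\mathfrak{L})_{\bar\mu}^{-1}$) that the paper organizes into a single pullback computation through the commuting diagram \eqref{Ap:commuting_diagr} borrowed from \cite{Bavo}. The only nuance is that the paper's diagram also identifies the resulting symplectomorphism with the specific map $\Upsilon_l$, which is what is used later (e.g.\ in Lemma \ref{L:J_LK&JK}), whereas your composite would still need that diagram to be recognized as $\Upsilon_l^{-1}$ --- but for the lemma as stated this is immaterial.
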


\begin{proof}
$(i)$ First, observe that $(\mathbb{F}\mathfrak{L})^* \Omega_{\mbox{\tiny{$\bar{Q}/H$}}}=\Omega_{\mbox{\tiny{$\mathfrak{L}$}}}$ and, since $\beta_{\bar\mu}$ is a 2-form on $\tilde{\mathcal O}$, it is not affected by the Legendre transform.  Therefore,
$$
(\mathbb{F}\mathfrak{L})_{{\bar\mu}} ^*(\Omega_{\mbox{\tiny{$\bar{Q}/H$}}} - \beta_{\bar\mu} ) = \Omega_{\mbox{\tiny{$\mathfrak{L}$}}} - \beta_{\bar\mu}.
$$
$(ii)$ Using the diffeomorphisms $\Upsilon^{-1}=:\Phi: \bar{J}^{-1}({\bar\mu})/H_{\bar\mu} \to T^*(\bar{Q}/H) \times \tilde{\mathcal O}_{\bar\mu}$ and $\Upsilon_l^{-1}=: \Phi_l: \bar{J}_{l}^{-1} ({\bar\mu})/H_{{\bar\mu}} \to T(\bar{Q}/H) \times \tilde{\mathcal O}_{\bar\mu}$ defined in Lemma \ref{L:identifications} and \eqref{Eq:Tangent-Diffeo} and denoting also by $(\mathbb{F}l) : \bar{J}^{-1}_{l}({\bar\mu}) \to \bar{J}^{-1}({\bar\mu})$ the restriction of the Legendre transform, the following diagram commutes \cite{Bavo}:
\begin{equation} \label{Ap:commuting_diagr}
\xymatrix{\bar{J}^{-1}({\bar\mu}) \ar[r]^{\pi} \ar[d]^{(\mathbb{F}l)} & \bar{J}^{-1} ({\bar\mu})/H_{{\bar\mu}} \ar[r]^{\Phi} & T^*(\bar{Q}/H) \times \tilde{\mathcal O}_{\bar\mu} \ar[d]^{(\mathbb{F}\mathfrak{L})_{{\bar\mu}}}  \\
\bar{J}^{-1}_{l}({\bar\mu}) \ar[r]^{\pi_{l}} & \bar{J}_{l}^{-1} ({\bar\mu})/H_{{\bar\mu}} \ar[r]^{\Phi_l}  &
 T(\bar{Q}/H) \times \tilde{\mathcal O}_{\bar\mu}.
}
\end{equation}
Now, we see that $\Phi_l^*(\Omega_{\mbox{\tiny{$\mathfrak{L}$}}} - \beta_{\bar\mu})= \omega^l_{\bar\mu}$, which is equivalent to  $\pi_l^* \Phi_l^*(\Omega_{\mbox{\tiny{$\mathfrak{L}$}}} - \beta_{\bar\mu}) = \iota_l^*\Omega_l$.
Observe that $\pi_l^* \Phi_l^*(\Omega_{\mbox{\tiny{$\mathfrak{L}$}}} - \beta_{\bar\mu}) = ((\mathbb{F}\mathfrak{L}) \circ\Phi_l \circ \pi_l)^*(\Omega_{\mbox{\tiny{$\bar{Q}/H$}}} - \beta_{\bar\mu})$.
The commuting diagram then gives
$((\mathbb{F}\mathfrak{L}) \circ\Phi_l \circ \pi_l)^*(\Omega_{\mbox{\tiny{$\bar{Q}/H$}}} - \beta_{\bar\mu})= (\Phi \circ \pi \circ (\mathbb{F}l))^*(\Omega_{\mbox{\tiny{$\bar{Q}/H$}}} - \beta_{\bar\mu}) = (\mathbb{F}l)^*( \pi^*  \omega_{\bar\mu}) = (\mathbb{F}l))^*(\iota^*\Omega_{\bar{Q}}) = \iota^*_l \Omega_l$.
\end{proof}

Now we study the dynamics on the manifold $( T(\bar{Q}/H) \times_{\mbox{\tiny{$\bar{Q}/H$}}} \tilde{\mathcal O}_{\bar\mu}, \Omega_{\mbox{\tiny{$\mathfrak{L}$}}} - \beta_{\bar\mu})$.
Recall that the Euler-Lagrange equations are given by the integral curves $\bar{X}^l \in \mathfrak{X}(T\bar{Q})$ such that ${\bf i}_{\bar{X}^l}\Omega_l = dE_l$, where $E_l$ is the lagrangian energy. Using Lemma \ref{Ap:L:Symplectomorphism}, we have 

\begin{lemma} \label{L:FreeRouth} The reduced dynamics at the $\bar\mu$-level is described by the integral curves of $\mathcal{X}\in \mathfrak{X}( T(\bar{Q}/H) \times_{\mbox{\tiny{$\bar{Q}/H$}}} \tilde{\mathcal O}_{\bar\mu})$ given by
\begin{equation} \label{Ap:Eq:IntrinsicEL} {\bf i}_{\mathcal X} (\Omega_{\mbox{\tiny{$\mathfrak{L}$}}} - \beta_{\bar\mu}) = d \mathcal{E}_{\bar\mu},
\end{equation}
where $\mathcal{E}_{\bar\mu}\in C^\infty (T(\bar{Q}/H) \times \tilde{\mathcal O}_{\bar\mu}))$ satisfies $\iota^*_l E_l = ((\Upsilon_l)^{-1} \circ \pi_l)^* \mathcal{E}_{\bar\mu}$.  Moreover, $\mathcal{E}_{\bar\mu} = E_{\mathfrak{L}} + \mathfrak{C}_{\bar\mu}$,
where $E_{\mathfrak{L}} \in C^\infty(T(\bar{Q}/H))$ is the Lagrangian energy associated to the Lagrangian $\mathfrak{L}$ and $\mathfrak{C}_{\bar\mu} \in C^\infty(\bar{Q}/H_{\bar\mu})$ is defined in \eqref{Eq:L-C=R}.
\end{lemma}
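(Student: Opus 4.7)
The plan is to reduce the Euler--Lagrange equations ${\bf i}_{\bar{X}^l}\Omega_l = dE_l$ on $T\bar{Q}$ in two stages, then transport the resulting equation via the symplectomorphism of Lemma \ref{Ap:L:Symplectomorphism}(ii). First I would check that $\bar{X}^l$ is tangent to the level set $\bar{J}_l^{-1}(\bar\mu)$: since $\bar{J}_l$ is a canonical momentum map for $\Omega_l$ and $E_l$ is $H$-invariant, for each $\xi\in\mathfrak{h}$,
$$d\bar{J}_l^{\xi}(\bar{X}^l) \;=\; \Omega_l\bigl(\xi_{T\bar{Q}},\bar{X}^l\bigr) \;=\; -dE_l(\xi_{T\bar{Q}}) \;=\; 0.$$
Hence $\bar{X}^l$ restricts to a vector field $\bar{X}^l_c$ on $\bar{J}_l^{-1}(\bar\mu)$ with ${\bf i}_{\bar{X}^l_c}\iota_l^*\Omega_l = d(\iota_l^*E_l)$. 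Next, by $H_{\bar\mu}$-invariance, $\iota_l^*\Omega_l = \pi_l^*\omega^l_{\bar\mu}$ and $\iota_l^*E_l$ descends to a function $\mathcal{E}^0_{\bar\mu}$ on $\bar{J}_l^{-1}(\bar\mu)/H_{\bar\mu}$, so the projected vector field $X_{\bar\mu}$ satisfies ${\bf i}_{X_{\bar\mu}}\omega^l_{\bar\mu}=d\mathcal{E}^0_{\bar\mu}$. Setting $\mathcal{X}=T\Upsilon_l^{-1}(X_{\bar\mu})$ and $\mathcal{E}_{\bar\mu} = (\Upsilon_l)_* \mathcal{E}^0_{\bar\mu}$, Lemma~\ref{Ap:L:Symplectomorphism}(ii) immediately gives \eqref{Ap:Eq:IntrinsicEL} and the characterization $\iota_l^*E_l = ((\Upsilon_l)^{-1}\circ\pi_l)^*\mathcal{E}_{\bar\mu}$.

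It remains to identify $\mathcal{E}_{\bar\mu}$ with $E_{\mathfrak{L}} + \mathfrak{C}_{\bar\mu}$. I would use the mechanical connection $\bar{\mathcal{A}}$ to decompose each $v_{\bar{q}}\in T_{\bar{q}}\bar{Q}$ into horizontal and vertical parts. Because $\bar{J}_l$ is the canonical momentum map associated to $\bar\kappa$, the condition $\bar{J}_l(v_{\bar{q}})=\bar\mu$ reads $\bar\kappa(v_{\bar{q}},\eta_{\bar{Q}}) = \langle\bar\mu,\eta\rangle$ for $\eta\in\mathfrak{h}$, i.e.\ $I(\bar{q})\bar{\mathcal{A}}(v_{\bar{q}})=\bar\mu$ on $\bar{J}_l^{-1}(\bar\mu)$. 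Since the horizontal and vertical distributions are $\bar\kappa$-orthogonal,
$$\tfrac{1}{2}\bar\kappa(v_{\bar{q}},v_{\bar{q}}) \;=\; \tfrac{1}{2}\|x\|^{2} + \tfrac{1}{2}\langle \bar\mu, I^{-1}(\bar{q})\bar\mu\rangle \;=\; \tfrac{1}{2}\|x\|^{2} + C_{\bar\mu}(\bar{q}),$$
where $x \in T(\bar{Q}/H)$ is the $H$-projection of $v_{\bar{q}}$. Therefore $\iota_l^*E_l = \tfrac{1}{2}\|x\|^{2} + V(\bar{q}) + C_{\bar\mu}(\bar{q})$. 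Using $V = \bar\tau^*\mathcal{V}$ from \eqref{Def:Red-Lagragian} and $C_{\bar\mu} = (\psi_{\bar\mu}\circ\bar\tau_{\bar\mu})^*\mathfrak{C}_{\bar\mu}$ from \eqref{Eq:L-C=R}, the first two terms descend to $E_{\mathfrak{L}}\in C^{\infty}(T(\bar{Q}/H))$ while $C_{\bar\mu}$ descends through $\psi_{\bar\mu}$ to $\mathfrak{C}_{\bar\mu}\in C^{\infty}(\tilde{\mathcal{O}}_{\bar\mu})$, proving $\mathcal{E}_{\bar\mu} = E_{\mathfrak{L}} + \mathfrak{C}_{\bar\mu}$.

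The only genuinely delicate point is organising the identifications so that the descent from $\bar{J}_l^{-1}(\bar\mu)/H_{\bar\mu}$ to $T(\bar{Q}/H)\times_{\bar{Q}/H}\tilde{\mathcal{O}}_{\bar\mu}$ matches on the three pieces (horizontal kinetic energy, potential, and the $C_{\bar\mu}$ term) the factors of \eqref{Eq:Ham-Red_Sympl_mfld}. This is essentially the commuting diagram \eqref{Ap:commuting_diagr} together with the fact that the horizontal norm $\|\cdot\|$ on $\bar{Q}/H$ is built precisely from $\bar\kappa$ via $\bar{\mathcal{A}}$, so the match is forced and I do not expect serious technical difficulty; the main care needed is notational bookkeeping between $\bar\tau$ and $\bar\tau_{\bar\mu}$ (and the associated functions $\mathcal{V}$ versus $\mathcal{V}_{\bar\mu}$) when invoking \eqref{Eq:RedRouthian}--\eqref{Eq:L-C=R}.
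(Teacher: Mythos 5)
Your proposal is correct and follows essentially the same route as the paper: restrict the Euler--Lagrange dynamics to the level set $\bar{J}_l^{-1}(\bar\mu)$, descend by $H_{\bar\mu}$ and transport through the symplectomorphism of Lemma~\ref{Ap:L:Symplectomorphism}(ii), then identify the energy by splitting the kinetic term into horizontal and vertical parts with respect to the mechanical connection so that the vertical contribution equals $\tfrac{1}{2}\langle\bar\mu, I^{-1}(\bar{q})\bar\mu\rangle = C_{\bar\mu}$. The only difference is that you spell out the tangency of $\bar{X}^l$ to the momentum level set and the identity $I(\bar{q})\bar{\mathcal{A}}(v_{\bar{q}})=\bar\mu$, which the paper leaves implicit.
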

\begin{proof}
Since the free dynamics is restricted to $\bar{J}_l^{-1}(\bar\mu)$ then the reduced dynamics is described by \eqref{Ap:Eq:IntrinsicEL} as a consequence of Lemma \ref{Ap:L:Symplectomorphism}.
To show that $\mathcal{E}_{\bar\mu} = E_{\mathfrak{L}} + \mathfrak{C}_{\bar\mu}$, we recall that $l$ is of mechanical type and observe that for $v_q \in \bar{J}^{-1}_l({\bar\mu})$ the energy associated to the Lagrangian $l$ is
$$
 E_l(v_{\bar{q}}) = \frac{1}{2}\bar \kappa(P_H (v_{\bar{q}}), P_H (v_{\bar{q}})) + \frac{1}{2} \kappa(P_V(v_{\bar{q}}), P_V(v_{\bar{q}})) +V({\bar{q}}).
 $$
 where $P_H:T\bar{Q} \to H$ and $P_V:T\bar{Q}\to V$ are the horizontal and vertical projections with respect to the mechanical connection $\bar{\mathcal A}$.
Since $\bar\kappa(P_V(v_{\bar{q}}), P_V(v_{\bar{q}})) = \langle {\bar\mu} , \bar{\mathcal A}(v_{\bar{q}}) \rangle =  \langle {\bar\mu} , I^{-1}({\bar{q}}){\bar\mu} \rangle= 2 C_{\bar\mu}$ we obtain that the energy $E_l$ restricted to $\bar{J}^{-1}_l({\bar\mu})$ is $H_{\bar\mu}$-invariant. Moreover, as we saw in Sec.\ref{Ss:TheRouthian}, the kinetic energy metric $\bar\kappa$ and the potential $V$ drop to a metric $|| \cdot || $ and a potential $\mathcal{V}$ on $\bar{Q}/H$. However, the function $C_{\bar\mu} \in C^\infty(\bar{Q})$ drops to a function on $\bar{Q}/H_{\bar\mu}$.
Therefore, we see that the reduced Lagrangian energy $\mathcal{E}_{\bar\mu}$ is a function on $T(\bar{Q}/H) \times \tilde{\mathcal O}_{\bar\mu}$ given by
$$
\mathcal{E}_{\bar\mu} = \frac{1}{2}||\cdot ||^2 + \mathcal{V} + \mathfrak{C}_{\bar\mu} = E_{\mathfrak{L}} + \mathfrak{C}_{\bar\mu}.
$$
The energy $\mathcal{E}_{\bar\mu}$ satisfies $\iota^*_l E_l = (\Phi_l \circ \pi_l)^* \mathcal{E}_{\bar\mu}$, and thus the Lemma is proven. \ \ \ \ \
\end{proof}

\begin{remark} The reduced Hamiltonian $h_{\bar\mu} \in C^\infty(T^*(\bar{Q}/H) \times \tilde{\mathcal O})$ verifies that $[(\mathbb{F}\mathfrak{L})_{\bar\mu} ]^*h_{\bar\mu} = \mathcal{E}_{\bar\mu}$.
\end{remark}

We will call the equations \eqref{Ap:Eq:IntrinsicEL} the {\it intrinsic  Lagrange-Routh equations}.

Finally we can observe that

\begin{proposition} The equations of motion on each leaf of the Poisson bracket $\{ \cdot, \cdot \}_{\Lambda}$ given in  \eqref{Eq:LambdaBracket} are related to the (intrinsic) Lagrange-Routh equations \eqref{Ap:Eq:IntrinsicEL} via the isomorphism $\Upsilon^{\bar\mu} \circ (\mathbb{F}\mathfrak{L})_{\bar\mu}$ defined in \eqref{Eq:Reduced-Legendre}.
\end{proposition}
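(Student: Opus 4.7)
The plan is to chase the relevant Hamiltonians through the two symplectomorphisms that have already been established, reducing the statement to matching dynamical equations. The starting point is that, by general Marsden–Weinstein theory for the canonical Poisson bracket, the symplectic leaves of $\{\cdot,\cdot\}_\Lambda$ on $T^*\bar{Q}/H$ are precisely (the connected components of) the reduced spaces $(\bar{J}^{-1}(\bar\mu)/H_{\bar\mu},\Omega_{\bar\mu})$ defined in \eqref{Eq:Proof:equivalence}. The equations of motion on such a leaf are the Hamiltonian equations with respect to $\Omega_{\bar\mu}$ and the reduced Hamiltonian $h_{\bar\mu}$ obtained from $\Ham_{T^*\bar{Q}}$ (which is $H$-invariant because the kinetic metric and potential are).

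Next I would transport this dynamics across the bundle symplectomorphism $\Upsilon^{\bar\mu}$ of Lemma \ref{L:identifications}(i). Since $\Upsilon^{\bar\mu}$ intertwines $\Omega_{\bar{Q}/H}-\beta_{\bar\mu}$ with $\Omega_{\bar\mu}$, the leaf dynamics becomes ${\bf i}_{Z_{\bar\mu}}(\Omega_{\bar{Q}/H}-\beta_{\bar\mu})=d\bigl((\Upsilon^{\bar\mu})^*h_{\bar\mu}\bigr)$ on $T^*(\bar{Q}/H)\times_{\bar{Q}/H}\tilde{\mathcal O}_{\bar\mu}$. Then applying Lemma \ref{Ap:L:Symplectomorphism}(i), the fibrewise Legendre transform $(\mathbb{F}\mathfrak{L})_{\bar\mu}$ is a symplectomorphism from the Lagrangian side $(T(\bar{Q}/H)\times_{\bar{Q}/H}\tilde{\mathcal O}_{\bar\mu},\Omega_{\mathfrak{L}}-\beta_{\bar\mu})$ onto the Hamiltonian side, so pulling back once more turns the leaf dynamics into ${\bf i}_{\mathcal X}(\Omega_{\mathfrak{L}}-\beta_{\bar\mu})=d\bigl((\Upsilon^{\bar\mu}\circ(\mathbb{F}\mathfrak{L})_{\bar\mu})^*h_{\bar\mu}\bigr)$.

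The remaining point, and the only one that requires a little care, is that the pulled-back Hamiltonian coincides with the reduced Lagrangian energy $\mathcal{E}_{\bar\mu}$ appearing in \eqref{Ap:Eq:IntrinsicEL}. This is exactly the content of the remark following Lemma \ref{L:FreeRouth}: $[(\mathbb{F}\mathfrak{L})_{\bar\mu}]^*h_{\bar\mu}=\mathcal{E}_{\bar\mu}$. Alternatively, one can re-derive this identity by observing that $h_{\bar\mu}$ comes from reducing $\Ham_{T^*\bar{Q}}$ on $\bar{J}^{-1}(\bar\mu)$, that $\mathbb{F}l$ intertwines $\Ham_{T^*\bar{Q}}$ with $E_l$, and that the energy $E_l$ restricted to $\bar{J}^{-1}_l(\bar\mu)$ is $H_{\bar\mu}$-invariant and drops—via $\Upsilon_l^{\bar\mu}$—to $E_{\mathfrak L}+\mathfrak{C}_{\bar\mu}=\mathcal{E}_{\bar\mu}$. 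Combined with the commutative diagram \eqref{Ap:commuting_diagr}, this gives the required identification of Hamiltonians.

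Putting everything together, the composed map $\Upsilon^{\bar\mu}\circ(\mathbb{F}\mathfrak{L})_{\bar\mu}$ carries the intrinsic Lagrange–Routh equation ${\bf i}_{\mathcal X}(\Omega_{\mathfrak L}-\beta_{\bar\mu})=d\mathcal{E}_{\bar\mu}$ onto the Hamilton equation on the leaf $(\bar{J}^{-1}(\bar\mu)/H_{\bar\mu},\Omega_{\bar\mu})$, which is exactly the equation of motion of the Poisson bracket $\{\cdot,\cdot\}_\Lambda$ restricted to that leaf. I expect the only mildly delicate step to be the matching of Hamiltonians under the Legendre transform on a product of the form $T^*(\bar{Q}/H)\times_{\bar{Q}/H}\tilde{\mathcal O}_{\bar\mu}$; once the commutativity of diagram \eqref{Ap:commuting_diagr} is invoked this is immediate, so no serious obstacle remains.
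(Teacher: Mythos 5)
Your argument is correct and follows essentially the same route as the paper: both identify the leaves of $\{\cdot,\cdot\}_\Lambda$ with the Marsden--Weinstein quotients $(\bar{J}^{-1}(\bar\mu)/H_{\bar\mu},\Omega_{\bar\mu})$ and then invoke Lemma \ref{L:identifications}(i) together with Lemma \ref{Ap:L:Symplectomorphism}(i) to transport the dynamics to $(T(\bar{Q}/H)\times_{\bar{Q}/H}\tilde{\mathcal O}_{\bar\mu},\Omega_{\mathfrak L}-\beta_{\bar\mu})$. The only difference is that you spell out the matching of Hamiltonians, $[(\mathbb{F}\mathfrak{L})_{\bar\mu}]^*h_{\bar\mu}=\mathcal{E}_{\bar\mu}$, which the paper leaves implicit by relying on Lemma \ref{L:FreeRouth} and the remark following it.
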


\begin{proof} By Lemma \ref{L:identifications}(i) and Lemma \ref{Ap:L:Symplectomorphism}(i) we have that the leaves of $\{\cdot, \cdot \}_\Lambda$ given by the connected components of $({\bar J}^{-1}(\bar{\mu})/H_{\bar{\mu}} , {\Omega}_{\bar{\mu}})$ (eq.\eqref{Eq:Proof:equivalence}) are symplectomorphic to $(T(\bar{Q}/H) \times_{\mbox{\tiny{$\bar{Q}/H$}}} \tilde{\mathcal{O}}_{\bar\mu}, \Omega_{\mbox{\tiny{$\mathfrak{L}$}}} -\beta_{\bar\mu})$. 
\end{proof}

\noindent {\bf Intrinsic nonholonomic Lagrange-Routh equations.}
From Section \ref{Ss:2S-H_action} and by \eqref{Eq:Lagr-AlmostSymplectic} we see that the Lie group $H= G/G_\subW$ is a symmetry of the nonholonomic system given on the almost symplectic manifold $(T\bar Q, \Omega_{l} - \overline{\langle {\mathcal J}_L, {\mathcal K}_\subW \rangle} )$ with Lagrangian energy $E_l$.
  In our case, assuming that $\overline{\langle {\mathcal J}_L, \mathcal{K}_\subW \rangle}$ is basic with respect to the orbit projection $\phi_H:T\bar{Q} \to T\bar{Q}/H$ the Lagrange-Routh equations are modified by the presence of the $H$-invariant 2-form $\overline{\langle\mathcal{J}_L, \mathcal{K}_\subW\rangle}_\red$ (recall that $\overline{\langle {\mathcal J}_L, \mathcal{K}_\subW \rangle} = \phi_H^* \overline{\langle {\mathcal J}_L, \mathcal{K}_\subW \rangle}_\red$).

\begin{lemma}\label{L:J_LK&JK} For each $\bar\mu \in \mathfrak{h}^*$,
 $$
 (\mathbb{F}\mathfrak{L})_{\!\bar\mu}^* \left[ \Upsilon^*[\overline{\langle {\mathcal J}, \mathcal{K}_\subW\rangle}_{\emph\red}]_{\bar\mu} \right]=  \Upsilon_l^* [\overline{\langle {\mathcal J}_L, \mathcal{K}_\subW\rangle}_{\emph\red}]_{\bar\mu}.
 $$
 where $\Upsilon$ and $\Upsilon_l$ are the diffeomorphisms defined in Lemma \ref{L:identifications} and \eqref{Eq:Tangent-Diffeo} respectively.
 \end{lemma}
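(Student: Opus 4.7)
The plan is to exploit two key ingredients established earlier. First, the identity
$$\overline{\langle {\mathcal J}_L, {\mathcal K}_\subW \rangle} = (\mathbb{F}l)^* \overline{\langle {\mathcal J}, {\mathcal K}_\subW \rangle}$$
noted in Section \ref{Ss:LagragianFormulation}, which relates the Lagrangian and Hamiltonian 2-forms. Second, the commuting diagram \eqref{Ap:commuting_diagr} from the proof of Lemma \ref{Ap:L:Symplectomorphism}, which, writing $\Phi = \Upsilon^{-1}$ and $\Phi_l = \Upsilon_l^{-1}$, yields the relation
$$\Upsilon^{-1} \circ \pi \circ (\mathbb{F}l) = (\mathbb{F}\mathfrak{L})_{\bar\mu} \circ \Upsilon_l^{-1} \circ \pi_l$$
as maps $\bar{J}_l^{-1}(\bar\mu) \to T^*(\bar{Q}/H) \times_{\bar{Q}/H} \tilde{\mathcal{O}}_{\bar\mu}$, where $(\mathbb{F}l)$ denotes the restriction of the Legendre transform, which maps $\bar{J}_l^{-1}(\bar\mu)$ into $\bar{J}^{-1}(\bar\mu)$ by \eqref{Eq:LagMoment}.

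The strategy is to show that the two sides of the claimed identity agree after pulling back via the composition $\Upsilon_l^{-1} \circ \pi_l : \bar{J}_l^{-1}(\bar\mu) \to T(\bar{Q}/H) \times_{\bar{Q}/H} \tilde{\mathcal{O}}_{\bar\mu}$. Since $\pi_l$ is a surjective submersion (quotient by the free $H_{\bar\mu}$-action on $\bar{J}_l^{-1}(\bar\mu)$) and $\Upsilon_l^{-1}$ is a diffeomorphism, this composition is a surjective submersion, so pullback by it is injective on differential forms, and equality there implies the desired equality on $T(\bar{Q}/H) \times_{\bar{Q}/H} \tilde{\mathcal{O}}_{\bar\mu}$.

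On the right-hand side, $(\Upsilon_l^{-1} \circ \pi_l)^* \, \Upsilon_l^*\,[\overline{\langle {\mathcal J}_L, {\mathcal K}_\subW \rangle}_{\red}]_{\bar\mu}$ simplifies (using $(\Upsilon_l^{-1})^*\,\Upsilon_l^* = \mathrm{id}$) to $\pi_l^* [\overline{\langle {\mathcal J}_L, {\mathcal K}_\subW \rangle}_{\red}]_{\bar\mu}$, which by the defining property of the restricted reduced 2-form equals $\iota_l^* \overline{\langle {\mathcal J}_L, {\mathcal K}_\subW \rangle}$. For the left-hand side, apply $(\Upsilon_l^{-1} \circ \pi_l)^*$ and invoke the commuting diagram to rewrite the pullback via $\Upsilon^{-1} \circ \pi \circ (\mathbb{F}l)$; then
$$ (\mathbb{F}l)^* \pi^* [\overline{\langle {\mathcal J}, {\mathcal K}_\subW \rangle}_{\red}]_{\bar\mu} = (\mathbb{F}l)^* \iota^* \overline{\langle {\mathcal J}, {\mathcal K}_\subW \rangle} = \iota_l^* (\mathbb{F}l)^* \overline{\langle {\mathcal J}, {\mathcal K}_\subW \rangle} = \iota_l^* \overline{\langle {\mathcal J}_L, {\mathcal K}_\subW \rangle},$$
where the penultimate step uses $\iota \circ (\mathbb{F}l)|_{\bar{J}_l^{-1}(\bar\mu)} = (\mathbb{F}l) \circ \iota_l$ (the Legendre transform intertwines the inclusions of the level sets), and the last step is the key identity.

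Both sides thus reduce to $\iota_l^* \overline{\langle {\mathcal J}_L, {\mathcal K}_\subW \rangle}$, and by injectivity of pullback under a surjective submersion the lemma follows. The proof is essentially diagram-chasing; there is no substantive analytic obstacle. The only subtle point is bookkeeping---correctly interpreting the subscripted restriction $[\cdot]_{\bar\mu}$ through its defining relation $\pi_{\bar\mu}^*[\cdot]_{\bar\mu} = \iota_{\bar\mu}^*(\cdot)$, and keeping careful track of which moment-map level set each map respects---once that is done, the commuting diagram from Lemma \ref{Ap:L:Symplectomorphism} does all the work.
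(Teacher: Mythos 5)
Your proof is correct. It rests on the same central ingredient as the paper's own (one-line) proof, namely the commuting diagram \eqref{Ap:commuting_diagr}: both arguments reduce the claim to chasing the identity $\Upsilon\circ(\mathbb{F}\mathfrak{L})_{\bar\mu}\circ\Upsilon_l^{-1}\circ\pi_l=\pi\circ(\mathbb{F}l)$ on the level set $\bar{J}_l^{-1}(\bar\mu)$ and then using injectivity of pullback along the surjective submersion $\Upsilon_l^{-1}\circ\pi_l$. Where you diverge is in the second ingredient: the paper invokes Lemma \ref{L:JK-Coord}, i.e.\ the explicit coordinate expression of $\overline{\langle {\mathcal J}_L, \mathcal{K}_\subW\rangle}_{\red}$ (which, incidentally, only appears later in the text, so the paper's proof involves a forward reference), whereas you use the intrinsic relation $\overline{\langle {\mathcal J}_L, {\mathcal K}_\subW \rangle} = (\mathbb{F}l)^*\overline{\langle {\mathcal J}, {\mathcal K}_\subW \rangle}$ already recorded in Section \ref{Ss:LagragianFormulation}, together with the defining relations $\pi_{\bar\mu}^*[\cdot]_{\bar\mu}=\iota_{\bar\mu}^*(\cdot)$ and $\pi_l^*[\cdot]_{\bar\mu}=\iota_l^*(\cdot)$ for the restricted reduced forms. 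This buys you a coordinate-free argument that is self-contained at this point of the paper and makes explicit exactly which structural facts are needed (that $\mathbb{F}l$ maps $\bar{J}_l^{-1}(\bar\mu)$ into $\bar{J}^{-1}(\bar\mu)$, and that it intertwines the two $\langle{\mathcal J},\mathcal{K}_\subW\rangle$-type forms); the paper's coordinate route, by contrast, yields the explicit formulas in Lemma \ref{L:JK-Coord}(ii) as a by-product, which are needed anyway for the local Lagrange--Routh equations. No gaps.
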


\begin{proof}
This Lemma is a consequence of the commuting diagrams in \eqref{Ap:commuting_diagr} and Lemma \ref{L:JK-Coord}.
\end{proof}


Using the isomorphism  $(\mathbb{F}\mathfrak{L})_{\bar\mu}$ given in \eqref{Eq:Reduced-Legendre} and Lemmas \ref{Ap:L:Symplectomorphism}(i) and \ref{L:J_LK&JK}, we obtain that the almost symplectic manifold \eqref{Eq:Ham-Red_Sympl_mfld} is symplectomorphic to
\begin{equation} \label{Eq:Lag-Red_Sympl_mfld}
\left(T(\bar{Q}/H) \times_{\mbox{\tiny{$\bar{Q}/H$}}} \tilde{\mathcal O}_{\bar\mu} , \Omega_{\mbox{\tiny{$\mathfrak{L}$}}} -\beta_{\bar\mu} - \Upsilon_l^*[\overline{\langle {\mathcal J}_L, \mathcal{K}_\subW\rangle}_\red]_{\bar\mu}\right).
\end{equation}
where  $\Omega_{\mbox{\tiny{$\mathfrak{L}$}}}$ is the Lagrangian 2-form on $T(\bar{Q}/H)$ associated to $\mathfrak{L}$.
Recall that the reduced nonholonomic dynamics is described by the reduced nonholonomic bracket $\{\cdot, \cdot \}_\red$  given in Theorem \ref{T:Reduced-Dyn}. Then we conclude that

\begin{theorem} \label{T:RouthIntrinsic}
If the 2-form $\overline{\langle {\mathcal J}_L, \mathcal{K}_\subW\rangle}$ is basic with respect to $\phi_H:T\bar{Q} \to (T\bar{Q})/H$ then 
\begin{enumerate}
 \item[$(i)$] the leaves of the twisted Poisson bracket $\{\cdot, \cdot \}_{\emph\red}$ are identified to (the connected components of) the almost symplectic manifolds \eqref{Eq:Lag-Red_Sympl_mfld} via the isomorphism $\Upsilon^{\bar\mu} \circ (\mathbb{F}\mathfrak{L})_{\bar\mu}$ defined in \eqref{Eq:Reduced-Legendre}.
\item[$(ii)$] The equations of motion on each leaf of the reduced nonholonomic bracket $\{ \cdot, \cdot \}_{\emph\red}$ are  $\Upsilon^{\bar\mu} \circ (\mathbb{F}\mathfrak{L})_{\bar\mu}$-related to the nonholonomic Lagrange-Routh equations
\begin{equation} \label{Eq:NH-Routh} 
{\bf i}_{{\mathcal X}_{\emph\nh}} (\Omega_{\mbox{\tiny{$\mathfrak{L}$}}} - \beta_{\bar\mu}- \Upsilon_l^*[\overline{\langle {\mathcal J}_L, \mathcal{K}_\subW\rangle}_{\emph\red}]_{\bar\mu}) = d \mathcal{E}_{\bar\mu},
\end{equation} for $\mathcal{X}_{\emph\nh} \in \mathfrak{X}( T(\bar{Q}/H) \times_{\mbox{\tiny{$\bar{Q}/H$}}} \tilde{\mathcal O}_{\bar\mu})$ with $\mathcal{E}_{\bar\mu} = E_{\mathfrak{L}}+\mathfrak{C}_{\bar\mu}$, where $E_{\mathfrak{L}}$ is the energy on $T(\bar{Q}/H)$ induced by the Lagrangian $\mathfrak{L}$.
\end{enumerate}
\end{theorem}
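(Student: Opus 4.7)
The plan is to assemble the theorem from the pieces already available: Theorem~\ref{T:Equivalence} and Proposition~\ref{C:AlmostReducedMFLD} on the Hamiltonian side, together with Lemmas~\ref{Ap:L:Symplectomorphism} and \ref{L:J_LK&JK} to transport everything through the Legendre transform. Virtually no new computation is needed; the task is to verify that the various diffeomorphisms compose coherently.

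For part~$(i)$, I would argue in two steps. First, since $\overline{\langle\mathcal{J}_L,\mathcal{K}_\subW\rangle}$ being basic with respect to $\phi_H$ is equivalent to $\overline{\langle\mathcal{J},\mathcal{K}_\subW\rangle}$ being basic with respect to $\rho_H$ (via the Legendre isomorphism $\mathbb{F}l\colon T\bar{Q}\to T^*\bar{Q}$, as noted in Sec.~\ref{Ss:LagragianFormulation}), Theorem~\ref{T:Equivalence} applies: the almost symplectic leaves of $\{\cdot,\cdot\}_{\red}$ are the connected components of $(\bar{J}^{-1}(\bar{\mu})/H_{\bar{\mu}},\bar\omega_{\bar\mu})$ with $\bar\omega_{\bar\mu}=\Omega_{\bar\mu}-[\overline{\langle\mathcal{J},\mathcal{K}_\subW\rangle}_{\red}]_{\bar\mu}$, and Proposition~\ref{C:AlmostReducedMFLD} identifies this, via $\Upsilon^{\bar\mu}$, with \eqref{Eq:Ham-Red_Sympl_mfld}. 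Second, I would apply $(\mathbb{F}\mathfrak{L})_{\bar\mu}$: Lemma~\ref{Ap:L:Symplectomorphism}$(i)$ says it pulls $\Omega_{\bar{Q}/H}-\beta_{\bar\mu}$ back to $\Omega_{\mathfrak{L}}-\beta_{\bar\mu}$, and Lemma~\ref{L:J_LK&JK} says it pulls $\Upsilon^{*}[\overline{\langle\mathcal{J},\mathcal{K}_\subW\rangle}_{\red}]_{\bar\mu}$ back to $\Upsilon_l^{*}[\overline{\langle\mathcal{J}_L,\mathcal{K}_\subW\rangle}_{\red}]_{\bar\mu}$. Composing the two symplectomorphisms yields the identification of \eqref{Eq:Lag-Red_Sympl_mfld} with the $\bar\mu$-leaf via $\Upsilon^{\bar\mu}\circ(\mathbb{F}\mathfrak{L})_{\bar\mu}$.

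For part~$(ii)$, I would transport the nonholonomic dynamics along this symplectomorphism. On the Hamiltonian side the restricted reduced dynamics on the leaf is $X_{\red}^{\bar\mu}$ determined by \eqref{Eq:Hamilt-MWreduction} with $\bar\omega_{\bar\mu}$ and Hamiltonian $\mathcal{H}_{\bar\mu}$. Pulling $\mathcal{H}_{\bar\mu}$ back along $\Upsilon^{\bar\mu}\circ(\mathbb{F}\mathfrak{L})_{\bar\mu}$ gives a function $\mathcal{E}_{\bar\mu}$ on $T(\bar{Q}/H)\times_{\bar{Q}/H}\tilde{\mathcal{O}}_{\bar\mu}$, and the corresponding equation of motion becomes exactly \eqref{Eq:NH-Routh}. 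To identify $\mathcal{E}_{\bar\mu}=E_{\mathfrak{L}}+\mathfrak{C}_{\bar\mu}$, I would invoke the computation already performed in Lemma~\ref{L:FreeRouth}: the mechanical-type Lagrangian $l$ splits its kinetic energy into horizontal and vertical parts with respect to the mechanical connection $\bar{\mathcal{A}}$, and on $\bar{J}_l^{-1}(\bar\mu)$ the vertical part equals $2C_{\bar\mu}=2\bar{\tau}_{\bar\mu}^{*}\mathfrak{C}_{\bar\mu}\circ\psi_{\bar\mu}$, so the reduction drops to $\tfrac12\|\cdot\|^{2}+\mathcal{V}+\mathfrak{C}_{\bar\mu}=E_{\mathfrak{L}}+\mathfrak{C}_{\bar\mu}$.

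The only delicate point---and what I expect to be the main obstacle---is bookkeeping: one must verify that the diagram \eqref{Ap:commuting_diagr} intertwines the inclusions $\iota_l,\iota$ with $\iota_{\bar\mu},\iota_{\bar\mu,l}$ into the reduced leaves, so that the pullbacks of $\overline{\langle\mathcal{J},\mathcal{K}_\subW\rangle}$ and of the Hamiltonian are consistent with the Legendre-transform identification. This coherence, however, is precisely the content of Lemma~\ref{L:J_LK&JK} combined with the naturality of the Legendre transform (it is the identity on the fiber coordinates), so once those are invoked the theorem follows by assembling the pullbacks.
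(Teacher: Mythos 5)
Your proposal is correct and follows essentially the same route as the paper: part $(i)$ is obtained by combining Theorem~\ref{T:Equivalence} with the symplectomorphism $\Upsilon^{\bar\mu}\circ(\mathbb{F}\mathfrak{L})_{\bar\mu}$ built from Lemmas~\ref{Ap:L:Symplectomorphism}$(i)$ and \ref{L:J_LK&JK}, and part $(ii)$ by transporting the dynamics through that symplectomorphism using Lemma~\ref{L:FreeRouth} to identify $\mathcal{E}_{\bar\mu}=E_{\mathfrak{L}}+\mathfrak{C}_{\bar\mu}$. The ``bookkeeping'' you flag is indeed exactly what the commuting diagram \eqref{Ap:commuting_diagr} and Lemma~\ref{L:J_LK&JK} resolve, so no gap remains.
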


\begin{proof} 
$(i)$ The connected components of $(\bar{J}^{-1}(\bar{\mu})/H_{\bar{\mu}}, \bar{\omega}_{\bar{\mu}})$ given in \eqref{Eq:Proof:equivalence} are the leaves of $\{\cdot, \cdot \}_\red$ (Theorem \ref{T:Equivalence}) and on the other hand $(\bar{J}^{-1}(\bar{\mu})/H_{\bar{\mu}}, \bar{\omega}_{\bar{\mu}})$ is symplectomorphic to \eqref{Eq:Lag-Red_Sympl_mfld}.

$(ii)$ It is a direct consequence of Lemmas \ref{L:FreeRouth}, \ref{L:J_LK&JK} and \eqref{Eq:Leaf}.
\end{proof}

\begin{remark}
It is worth mentioning that in \cite[Sec.~4]{LC} the authors also discuss Routh reduction for non-conservative systems. 
\end{remark}

\noindent {\bf Intrinsic formulation after a gauge transformation.}
If $\overline{\langle \mathcal{J}, \mathcal{K}_\subW\rangle}$ is not basic with respect to $\rho_H:T^*\bar{Q} \to T^*\bar{Q}/H$ then we consider a dynamical gauge transformation of the 2-form $\bar{\Omega}$ by a 2-form $\bar{B}$ (i.e., ${\bf i}_{\bar{X}_\nh} \bar{B}=0$ and  $\bar\Omega + \bar{B}$ nondegenerate) such that $\overline{\langle \mathcal{J}, \mathcal{K}_\subW\rangle} + \bar{B}$ is basic. Recall that we denote by $\mathfrak{B}$ the 2-form on $T^*\bar{Q}/H$ such that $\rho_H^* \mathfrak{B} = \overline{\langle \mathcal{J}, \mathcal{K}_\subW\rangle} + \bar{B}$.
Then the almost symplectic manifolds $(\bar{J}^{-1}(\bar{\mu})/H_{\bar{\mu}}, \bar{\omega}_{\bar{\mu}}^{\bar\B})$ given in \eqref{Eq:Gauge:Reduced2form} are symplectomorphic to
\begin{equation} \label{Eq:Lag-Red_Sympl_mfld-Gauge}
\left(T(\bar{Q}/H) \times_{\mbox{\tiny{$\bar{Q}/H$}}} \tilde{\mathcal O}_{\bar\mu} , \Omega_{\mbox{\tiny{$\mathfrak{L}$}}} -\beta_{\bar\mu} - (\Upsilon_l \circ (\mathbb{F}\mathfrak{L})_{\bar\mu}) ^*\mathfrak{B}_{\bar\mu}\right),
\end{equation}
where $\mathfrak{B}_{\bar\mu}$ is the pull back of $\mathfrak{B}$ to $\bar{J}^{-1}(\bar{\mu})/H_{\bar{\mu}}$.
Then the {\it intrinsic nonholonomic Lagrange-Routh} equations are the equations of motion defined on the almost symplectic manifold \eqref{Eq:Lag-Red_Sympl_mfld-Gauge} with energy $\mathcal{E}_{\bar\mu}$ given in Lemma \ref{L:FreeRouth}. Finally we see that the almost symplectic leaves of the reduced bracket $\{\cdot , \cdot \}_\red^\B$ are $\Upsilon^{\bar\mu} \circ (\mathbb{F}\mathfrak{L})_{\bar\mu}$-related to \eqref{Eq:Lag-Red_Sympl_mfld-Gauge}.

\subsection{Local version of the Lagrange-Routh equations}

\noindent {\bf Expression in coordinates.}  \ Following \cite{Co}, let us consider local coordinates $(r^\a, s^A)$ on the manifold $Q$ adapted to a local trivialization $U \times G_\subW$ of the principal bundle $Q \to Q/G_\subW$ for $\a = 1, ..., m=n- \textup{dim}\, G_\subW$ and $A=1,..., \textup{dim}\, G_\subW$.
Let $\{ e_A\}$ be a basis of $\mathfrak{g}_\subW$; then (by left trivialization $TG_\subW \simeq G_\subW \times \mathfrak{g}_\subW$) an element $v \in T_{(r,s)}(U \times G_\subW)$ can be represented by $v_{(r,s)} = (\dot r, \eta)$, where $\eta = \eta^A e_A \in \mathfrak{g}_\subW$.

Recall that $TQ=D \oplus W$ and $W$ is also the vertical space associated to the $G_\subW$-action.  Let us first denote by ${\mathcal A}_\subW:TQ \to \mathfrak{g}_\subW$ the principal connection for which the horizontal space is the distribution $D$.
If we denote by $Z_A$ the infinitesimal generator of the element $e_A \in \mathfrak{g}_\subW$, i.e., $Z_A:=(e_A)_Q$, then ${\mathcal A}_\subW(\frac{\partial}{\partial r^\a}) = {\mathcal A}_\a^A(r,s)e_A$ and ${\mathcal A}(Z_A) = e_A$, and so
\begin{equation}\label{Eq:D+W-Coord}
 D = \textup{span} \{ X_\a:= \frac{\partial}{\partial r^\a} - {\mathcal A}_\a^A (r,s) Z_A \} \qquad \mbox{and} \qquad W = \textup{span} \{ Z_A \}.
 \end{equation}
If $(\dot r^\a, v^A)$ are the coordinates associated to local the basis $\{X_\a,Z_A\}$ then the orbit projection $\phi_\subW: D \to D/G_\subW \simeq T\bar{Q}$ is given by $\phi_\subW(r^\a, s^A, \dot r^\a ) = (r^\a, \dot r^\a)$, and thus the reduced 
equations of motion on $T\bar{Q}$ given in \eqref{Eq:Lagr-AlmostSymplectic} are expressed by:
\begin{equation}\label{rcdyn}
\frac{\partial l}{\partial r^{\a}} -  \frac{d}{dt}\frac{\partial l}{\partial \dot{r}^{\a}}=\left(\frac{\partial L}{\partial v^B}\right)^{\!*}\!K^{B}_{\alpha\beta}\dot{r}^{\beta},
\end{equation}
where the star indicates that we have substituted the constraints $v^B = 0$ after differentiation (or equivalently, we can write $\left(\frac{\partial L}{\partial \eta^B}\right)^{\!*}$ and substitute $\eta^A = {\mathcal A}_\a^A \dot r^\a$), and where
\begin{equation} \label{Eq:Coef_K-C}
K^{B}_{\alpha\beta}= C_{\a\beta}^B -C_{\beta\a}^B,  \qquad \mbox{for} \quad C_{\a\beta}^B =\frac{\partial {\mathcal A}^{B}_{\a}}{\partial r^{\beta}}+{\mathcal A}^{A}_{\a}\frac{\partial {\mathcal A}^{B}_{\beta}}{\partial s^A}
\end{equation}
are the components of the $\W$-curvature (or equivalently the curvature associated to the principal connection $\mathcal{A}_\subW$) in the basis $\{X_\a, Z_A\}$, i.e.,  $\mathcal{K}_\subW(X_\a,X_\beta)= d{\mathcal A}_\subW(X_\a,X_\beta)=K_{\a\beta}^Ae_A$ (see \cite{Bloch:Book} and \cite{Co}).

\begin{remark} The functions $\left(\frac{\partial L}{\partial v^B}\right)^{\!*}\!K^{B}_{\alpha\beta}$ are well defined on $T\bar Q$ because they are the coefficients of the 2-form $\overline{\langle {\mathcal J}_L, \mathcal{K}_\subW \rangle}$.
\end{remark}



Now, we will choose coordinates on $\bar{Q}$ and on $T\bar{Q}$ adapted to the $H$-action. That is, we endow the manifold $\bar{Q}$ with local coordinates $(x^i, y^a)$ for $i=1,..., m-\textup{dim}\, H$ and $a=1,..., \textup{dim}\,H$, associated to the local trivialization $\bar U \times H$ of the principal bundle $\bar{\tau}:\bar Q \to \bar Q /H$. Let $\{ \bar{e}_a\}$ be a basis of the Lie algebra $\mathfrak{h}$; then (by left trivialization $TH \simeq H \times \mathfrak{h}$) an element $v \in T_{(x,y)}(\bar{U} \times H)$ is represented by $v_{(x,y)} = (\dot x, w)$, where $w = w^a \bar{e}_a \in \mathfrak{h}$. A principal connection $\bar{\mathcal{A}}:T\bar{Q} \to \mathfrak{h}$ induces a (local) basis on $T\bar{Q}$ given by
\begin{equation} \label{Eq:BasisTBarQ}
\mathcal{B}= \{ \bar{X}_i := \frac{\partial}{\partial x^i} - \bar{\mathcal{A}}_i^a \bar{Z}_a \ , \ \bar{Z}_a \},
 \end{equation}
 where $\bar{Z}_a$ is the infinitesimal generator of the element $\bar{e}_a \in \mathfrak{h}$, i.e., $\bar{Z}_a:=(\bar{e}_a)_{\bar{Q}}$ and $\bar{\mathcal{A}}_i^a = \bar{\mathcal{A}}_i^a(x^i, y^a)$ such that $\bar{\mathcal{A}}(\frac{\partial}{\partial x^i}) = \bar{\mathcal{A}}_i^a \bar{e}_a$. We denote by $(\dot x^i, \xi^a)$ the (local) coordinates on $T\bar{Q}$ associated to the basis \eqref{Eq:BasisTBarQ}. Therefore in local coordinates, the  orbit projection $\phi_H:T\bar{Q} \to T\bar{Q}/H$ is given by $\phi_H(x^i, y^a; \dot x^i, \xi^a) = (x^i; \dot x^i, \xi^a)$.

Following \cite{Marsden2000} we consider $\bar{\mathcal{A}}$ to be the mechanical connection with respect to the metric given by the kinetic energy $\bar{\kappa}$ of the reduced Lagrangian $l: T\bar{Q} \to \R$. In this case, the Lagrangian written in the local coordinates $(x^i, y^a, \dot x^i, \xi^a)$ is
\begin{equation} \label{Eq:RedLagrangian}
l(x,y,\dot x, \xi) = \frac{1}{2}\bar\kappa_{ij}(x,y)\dot x^i \dot x^j + \frac{1}{2}\bar\kappa_{ab}(x,y) \xi^a \xi^b - V(x,y).
\end{equation}

The (Lagrangian) momentum map $\bar{J}_l:T\bar{Q} \to \mathfrak{h}^*$ given in \eqref{Eq:LagMoment} is expressed in local coordinates by
$$
\langle \bar{J}_l(v_{\bar{q}}), \bar{e}_a \rangle = \frac{\partial l}{\partial \xi^a} (v_{\bar{q}}),
$$ 
for each element $\bar{e}_a$ of the basis of $\mathfrak{h}$ and $v_{\bar{q}} \in T_{\bar{q}}\bar{Q}$.
  In other words, the function $g_{\eta}$ on $T^*\bar{Q}$ defined in \eqref{Eq:Defg_xi} verifies $\frac{\partial l}{\partial \xi^a} \eta^a  = g_{\eta} \circ (\mathbb{F}l),$  where $\eta^a$ are the coordinates of $\eta \in \mathfrak{h}$ in the basis $\bar{e}_a$.

\noindent {\bf Routh reduction for the free system.}\ Since $\bar{J}_l:T\bar{Q} \to \mathfrak{h}^*$ is the canonical momentum map for $\Omega_l$ we can restrict the system to the level set $\bar{\mu} \in \mathfrak{h}^*$ given by
\begin{equation}\label{pAeqn}
\mu_{a}=\frac{\partial l}{\partial \xi^{a}}(v_{\bar{q}}),
\end{equation}
where $\bar\mu = \mu_a \bar{e}^a$.
For each $\bar{\mu} \in \mathfrak{h}^*$, the {\em Routhian} $R^{\bar\mu} : T\bar{Q} \to \R$ given in \eqref{re} can be seen as a partial Legendre transformation in the variables $\xi^{a}$:
$$
R^{\bar\mu}(x,y,\dot{x})=\left.\left[l(x,y,\dot{x},\xi)-\mu_{a}\xi^{a}\right]\right|_{\xi^a=\xi^a(x,y)},
$$
where $\xi(x,y)$ is the unique solution for $\xi$ in \eqref{pAeqn} for a particular $\mu_a$. It can be checked that the Routhian does not depend on $\xi^a$ \cite{Marsden2000,Bavo}.

\begin{remark}
 The functions $\xi^a(x,y)$ do not depend on the velocities $(\dot x^i)$ since we are considering coordinates adapted to the mechanical connection $\bar{\mathcal A}$, in this case, $\mu_a = \bar\kappa_{ab}(x,y) \xi^b$.
\end{remark}

The next step is to express the intrinsic Lagrange-Routh equations \eqref{Ap:Eq:IntrinsicEL} in local coordinates.  
To do that, we set a basis of vector fields on $T(\bar{Q}/H) \times_{\mbox{\tiny{$\bar{Q}/H$}}} \tilde{\mathcal O}_{\bar\mu}$ as follows.  First consider our local coordinates $(x^i, y^a; \dot x^i, \xi^a)$ on $T\bar{Q}$ and observe that $\bar{J}_l^{-1}(\bar{\mu}) \subset T\bar{Q}$ is represented by $(x^i, y^a; \dot x^i)$ (and  $\bar{J}_l^{-1}(\bar{\mu}) \simeq \bar{Q} \times_{\bar{Q}/H} T(\bar{Q}/H)$, see \cite{CM1}). Following \cite{CM1}, we set the (local) basis of $\mathfrak{X}(\bar{J}_l^{-1}(\bar{\mu}))$ given by $\{ \bar{X}^\cc_i, \bar{X}^\vv_i, \bar{Z}_a^\cc\}$ where $\bar{X}^\cc_i$, $\bar{Z}_a^\cc$ and $\bar{X}_i^\vv$ are the complete and vertical lifts of $\bar{X}_i$ and $\bar{Z}_a$ given in \eqref{Eq:BasisTBarQ} to $T(\bar{J}_l^{-1}(\bar{\mu}))$. In local coordinates we have
$$
\bar{X}_i^\cc = \frac{\partial}{\partial x^i} - \bar{\mathcal{A}}_i^a \bar{Z}_a \ , \qquad \bar{Z}^\cc_a= B_a^b\frac{\partial}{\partial y^b} \ , \qquad \bar{X}^\vv_i= \frac{\partial}{\partial \dot x^i},
$$
for $B_a^b$ functions on $\bar{Q}$. 
Here we have to be careful because the coordinate expression of $\bar{X}_i$ and $\bar{X}^\cc_i$ coincide but they mean different vector fields: $\bar{X}^\cc_i \in \mathfrak{X}(\bar{J}_l^{-1}(\bar{\mu}))$ such that $T\bar\tau_{\bar\mu} (\bar{X}^\cc_i ) = \bar{X}_i \in \mathfrak{X}(\bar{Q})$ for $\bar{\tau}_{\bar\mu} : \bar{J}_l^{-1}(\bar{\mu})\to \bar{Q}$. 

Observe that $\bar{X}_i^\cc$ and $\bar{X}_i^\vv$ are $H_{\bar\mu}$-invariant vector fields and so they define a set of independent vector fields $\tilde{X}_i^\cc$ and $\tilde{X}_i^\vv$ on $\bar{J}_l^{-1}(\bar{\mu})/H_{\bar\mu}$. However, $\bar{Z}_a^\cc$ are not necessarily $H_{\bar\mu}$-invariant. Using a local trivialization of $\bar{Q}/H_{\bar\mu}$ associated to the principal bundle $\bar{Q} / H_{\bar\mu} \to \bar{Q}/H$ we obtain local coordinates $(x^i, y^{\tilde a})$ on $\bar{Q}/H_{\bar\mu}$ where $y^{\tilde a}$ denote the coordinates on the fiber $H/H_{\bar\mu}$.  Again, considering the principal bundle $\bar{Q}\to \bar{Q}/H_{\bar\mu}$ and a local trivialization, we may consider local coordinates $(x^i, y^{\tilde a}, y^{\bar{a}})$ for $\bar{Q}$. In this case, $(y^{\tilde a}, y^{\bar{a}})$ are local coordinates on the fiber of $\bar{Q} \to \bar{Q}/H$ and also we can see them as coordinates on $H$ associated to a local trivialization of $H \to H/H_{\bar\mu}$. Therefore, the vertical space $V_{\bar\mu}$ with respect to $\pi_{\bar\mu} : \bar{Q} \to \bar{Q}/H_{\bar\mu}$ is given by $ V_{\bar\mu} = \{\tilde{Z}_{\bar a} := \bar{B}_{\bar a}^{\bar b} \frac{\partial}{\partial y^{\bar b}} \} $. Then, if we denote by $\mathfrak{h}_{\bar\mu}$ the Lie algebra associated to $H_{\bar\mu}$, we define the principal connection ${\mathcal A}_{\bar\mu}:T\bar{Q}\to \mathfrak{h}_{\bar\mu}$  such that the $G_{\bar\mu}$-invariant horizontal space $H_{\bar\mu}$ is given by 
\begin{equation}\label{Eq:Connection-Hmu}
H_\mu := \{ \bar{X}_i\ ,\  Y_{\tilde a} := \bar{B}_{\tilde a}^{\tilde b} \frac{\partial}{\partial y^{\tilde b} }+ \bar{B}_{\tilde a}^{\bar b} \frac{\partial}{\partial y^{\bar{b}}} \}.
\end{equation}
Then, by construction, $\tilde{Z}_{\tilde a} := \bar{B}_{\tilde a}^{\tilde b} \frac{\partial}{\partial y^{\tilde b} } \in \mathfrak{X}(\bar{Q}/H_{\bar\mu}) =  T\pi_{\bar\mu} ( Y_{\tilde a}  ) $.
Finally we have a local basis of vector fields on $T(\bar{Q}/H) \times_{\mbox{\tiny{$\bar{Q}/H$}}} \tilde{\mathcal O}_{\bar\mu}$ given by 
$$
\{ \tilde{X}_ i^\cc =  \frac{\partial}{\partial x^i} - \bar{\mathcal{A}}_i^{\tilde a} \bar{Z}_{\tilde a}, \  \tilde{X}^\vv_i= \frac{\partial}{\partial \dot x^i}, \  \tilde{Z}_{\tilde a} = {\bar B}_{\tilde a}^{\tilde b} \frac{\partial}{\partial y^{\tilde b}} \}.
$$
Following again \cite{CM1}, a vector field $\mathcal{X}(x^i, \dot x^i, y^{\tilde a}) = a_i \tilde{X}_i^\cc + b_i \tilde{X}_i^\vv+ c_{\tilde a} \tilde{Z}_{\tilde a}$ on $T(\bar{Q}/H) \times_{\mbox{\tiny{$\bar{Q}/H$}}} \tilde{\mathcal O}_{\bar\mu}$ is a {\it second order equation} if $\dot x^i = a_i$, $\ddot x^i = b_i$ and 
$$
\dot y^{\tilde a} = c_{\tilde a} {\bar B}_{\tilde a}^{\tilde b} - \dot x^i \bar{\mathcal A}_i^{\tilde c} \bar{B}_{\tilde c}^{\tilde b} .
$$

%
%
%
%
%
%
%
%

\begin{proposition} \label{Ap:Prop}
The intrinsic Lagrange-Routh equations \eqref{Ap:Eq:IntrinsicEL} are expressed in local coordinates $(x^i, \dot x^i, y^{\tilde a})$ as
  \begin{subequations}
\begin{eqnarray}
\dot y^{\tilde a}& =& c_{\tilde a} {\bar B}_{\tilde a}^{\tilde b} - \dot x^i \bar{\mathcal A}_i^{\tilde c} \bar{B}_{\tilde c}^{\tilde b}  \label{Eq:Classic:Splitted-RedLag1} \\
\frac{\partial {\mathfrak{L}}}{\partial x^{i}} -  \frac{d}{dt}\frac{\partial \mathfrak{L}}{\partial \dot{x}^{i}} & = &   \mu_b F^b_{ij} \, \dot x^j + (d\mathfrak{C}_{\bar\mu})_i \label{Eq:Classic:Splitted-RedLag2}
\end{eqnarray}
\end{subequations}
where $F^b_{ij}$ are the components of the curvature ${\mathcal F}$  associated to the mechanical connection $\bar{\mathcal A}:T\bar{Q}\to\mathfrak{h}$ (i.e., ${\mathcal F}(\bar{X}_i, \bar{X}_j) = F^b_{ij}\bar{e}_b$),  and $(d\mathfrak{C}_{\bar\mu})_i = d \mathfrak{C}_{\bar\mu}(\tilde X_i^\cc)$. 
\end{proposition}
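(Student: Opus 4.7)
The strategy is to evaluate the intrinsic equation \eqref{Ap:Eq:IntrinsicEL} componentwise in the local basis $\{\tilde X_i^\cc, \tilde X_i^\vv, \tilde Z_{\tilde a}\}$ of $T(\bar{Q}/H) \times_{\bar{Q}/H} \tilde{\mathcal{O}}_{\bar\mu}$ introduced just before the statement, and to match coefficients against the right-hand side $d\mathcal{E}_{\bar\mu} = dE_{\mathfrak{L}} + d\mathfrak{C}_{\bar\mu}$ provided by Lemma \ref{L:FreeRouth}. Writing $\mathcal{X} = a_i \tilde X_i^\cc + b_i \tilde X_i^\vv + c_{\tilde a}\tilde Z_{\tilde a}$, the contractions of \eqref{Ap:Eq:IntrinsicEL} with the vertical lifts $\tilde X_i^\vv$ enforce the usual second-order equation condition $a_i = \dot x^i$, $b_i = \ddot x^i$, exactly as in the standard Lagrangian formalism (the vertical lifts annihilate $dE_{\mathfrak{L}}$-free terms and see only the $\dot x$-structure of $\Omega_{\mathfrak{L}}$). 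Equation \eqref{Eq:Classic:Splitted-RedLag1} is then immediate: projecting the integral curve of $\mathcal{X}$ down to $\bar{Q}/H_{\bar\mu}$ via the coordinate formulas $\tilde X_i^\cc = \partial_{x^i} - \bar{\mathcal A}_i^{\tilde a}\bar Z_{\tilde a}$ and $\tilde Z_{\tilde a} = \bar B_{\tilde a}^{\tilde b}\partial_{y^{\tilde b}}$ gives $\dot y^{\tilde b} = c_{\tilde a}\bar B_{\tilde a}^{\tilde b} - a_i\bar{\mathcal A}_i^{\tilde c}\bar B_{\tilde c}^{\tilde b}$, which is the claimed relation.

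For \eqref{Eq:Classic:Splitted-RedLag2} I contract \eqref{Ap:Eq:IntrinsicEL} with $\tilde X_j^\cc$. The Lagrangian 2-form part together with $dE_{\mathfrak L}$ reproduces, along a second-order curve, the classical Euler-Lagrange coordinate identity $(i_{\mathcal X}\Omega_{\mathfrak L} - dE_{\mathfrak L})(\tilde X_j^\cc) = \tfrac{d}{dt}\tfrac{\partial\mathfrak L}{\partial\dot x^j} - \tfrac{\partial\mathfrak L}{\partial x^j}$. The remaining contribution from the right-hand side is simply $d\mathfrak{C}_{\bar\mu}(\tilde X_j^\cc) =: (d\mathfrak{C}_{\bar\mu})_j$. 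It then suffices to identify $\beta_{\bar\mu}(\mathcal X, \tilde X_j^\cc)$ with $-\mu_b F^b_{ij}\dot x^i$; combined with the sign in $\Omega_{\mathfrak L} - \beta_{\bar\mu}$, this yields the magnetic term $\mu_b F^b_{ij}\dot x^j$ on the right-hand side of \eqref{Eq:Classic:Splitted-RedLag2}.

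To evaluate $\beta_{\bar\mu}$ on horizontal lifts I use its defining formula \eqref{Ap:DefBeta}: $\pi_H^*\beta_{\bar\mu} = d\alpha_{\bar{\mathcal A}} + \pi_2^*\omega_{\mathcal O_{\bar\mu}}$ with $\alpha_{\bar{\mathcal A}}(\bar q, \bar\nu) = \langle \bar\nu, \bar{\mathcal A}(\bar q)\rangle$. Evaluated at a point $(\bar q, \bar\mu)$ on two horizontal lifts of $\bar X_i, \bar X_j$ (on which $\bar{\mathcal A}$ vanishes), the Cartan structure equation for the mechanical connection gives $d\alpha_{\bar{\mathcal A}}(\bar X_i, \bar X_j) = \langle \bar\mu, \mathcal{F}(\bar X_i, \bar X_j)\rangle = \mu_b F^b_{ij}$, while the coadjoint orbit piece $\pi_2^*\omega_{\mathcal O_{\bar\mu}}$ is purely tangent to the $\mathcal O_{\bar\mu}$-fiber and hence contributes nothing to $\beta_{\bar\mu}(\tilde X_i^\cc, \tilde X_j^\cc)$. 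Contracting with $\mathcal X = \dot x^i \tilde X_i^\cc + \cdots$ then gives the desired magnetic term.

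The main technical point is checking that this computation passes correctly from $\bar Q \times \mathcal{O}_{\bar\mu}$ down to $\tilde{\mathcal O}_{\bar\mu}$ via $\pi_H$, and that the ensuing identification between vector fields on $T(\bar{Q}/H) \times_{\bar{Q}/H} \tilde{\mathcal O}_{\bar\mu}$ and their lifts to $\bar{J}^{-1}_l(\bar\mu)/H_{\bar\mu}$ (through $\Upsilon_l$ of Lemma \ref{L:identifications}) is consistent with the chosen splitting adapted to the mechanical connection. Both facts are straightforward consequences of the $H$-equivariance of $\alpha_{\bar{\mathcal A}}$ and of the construction of the basis adapted to $\bar{\mathcal A}$; once these bookkeeping steps are carried out the two identities \eqref{Eq:Classic:Splitted-RedLag1}--\eqref{Eq:Classic:Splitted-RedLag2} follow.
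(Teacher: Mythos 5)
Your overall strategy coincides with the paper's: split \eqref{Ap:Eq:IntrinsicEL} into the two factors via the projections onto $T(\bar{Q}/H)$ and $\tilde{\mathcal O}_{\bar\mu}$, recover the Euler--Lagrange identity for $\mathfrak{L}$ from the contraction with $\tilde X_i^\cc$, read off the second-order condition, and reduce the magnetic term to the curvature of $\bar{\mathcal A}$ through the defining formula \eqref{Ap:DefBeta} for $\beta_{\bar\mu}$. Equation \eqref{Eq:Classic:Splitted-RedLag1} and the Euler--Lagrange part of \eqref{Eq:Classic:Splitted-RedLag2} are handled exactly as in the paper.

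There is, however, a genuine gap in your evaluation of the $\beta_{\bar\mu}$ term. You compute only $\beta_{\bar\mu}(\tilde X_i^\cc,\tilde X_j^\cc)$ on a pair of horizontal lifts and then contract with ``$\mathcal X=\dot x^i\tilde X_i^\cc+\cdots$''. But the dynamics vector field carries a component $c_{\tilde a}\tilde Z_{\tilde a}$ along the fiber of $\tilde{\mathcal O}_{\bar\mu}\to\bar Q/H$, and this component projects nontrivially under ${\bf p}_2$, so the contraction you actually need is $\beta_{\bar\mu}(T{\bf p}_2(\mathcal X),T{\bf p}_2(\tilde X_j^\cc))$ including the cross-pairing $\beta_{\bar\mu}(\tilde Z_{\tilde a},\tilde X_j^\cc)$. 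Your dismissal of the Kostant--Kirillov--Souriau piece as ``purely tangent to the $\mathcal O_{\bar\mu}$-fiber'' does not dispose of this: the 1-form $\alpha_{\bar{\mathcal A}}(\bar q,\bar\nu)=\langle\bar\nu,\bar{\mathcal A}\rangle$ depends on the orbit variable as well as on $\bar q$, so $d\alpha_{\bar{\mathcal A}}$ has cross terms pairing orbit directions with base directions, and the lift $Y_i$ of $T{\bf p}_2(\tilde X_i^\cc)$ with horizontal first component generally has a nonzero $\mathcal O_{\bar\mu}$-component $T\pi_2(Y_i)$. The paper closes this by computing
$d\alpha_{\bar{\mathcal A}}(X,Y_i)=\langle\bar\mu,\mathcal F(T\pi_1(X),\bar X_i)\rangle-\langle\bar\mu,[\bar{\mathcal A}(T\pi_1(X)),T\pi_2(Y_i)]\rangle$
and observing that, since $\bar{\mathcal A}(T\pi_1(X))=T\pi_2(X)$, the bracket term equals $\omega_{\mathcal O_{\bar\mu}}(T\pi_2(X),T\pi_2(Y_i))$ and cancels the KKS contribution, leaving only the curvature term $\mu_bF^b_{ij}\dot x^j$. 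You need to supply this cancellation (or an argument that $\beta_{\bar\mu}(\tilde Z_{\tilde a},\tilde X_j^\cc)=0$, which amounts to the same identity); without it the claim that only the horizontal--horizontal evaluation survives is unjustified, even though the final formula is correct.
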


 \begin{proof} 
 Eq. \eqref{Eq:Classic:Splitted-RedLag1} represents the fact that the vector field $\mathcal{X}$ is a second order equation.

 To prove \eqref{Eq:Classic:Splitted-RedLag2}, we call ${\bf p}_1$ and ${\bf p}_2$ the projection to the first and second factor respectively of $T(\bar{Q}/H) \times_{\mbox{\tiny{$\bar{Q}/H$}}} \tilde{\mathcal O}_{\bar\mu}$ and observe that $\Omega_{\bar\mu} = {\bf p}_1^* \Omega_{\mbox{\tiny{$\mathfrak{L}$}}} - {\bf p}_2^*\beta_{\bar\mu}$ and $\mathcal{E}_{\bar\mu} = {\bf p}_1^*E_{\mathfrak{L}} - {\bf p}_2^* \mathfrak{C}_{\bar\mu}$.
 
Consider the element $\tilde{X}_i^\cc \in \mathfrak{X}(T(\bar{Q}/H) \times_{\mbox{\tiny{$\bar{Q}/H$}}} \tilde{\mathcal O}_{\bar\mu})$ and observe that $T{\bf p}_2 (\tilde{X}_i^\cc) = T\pi_H (\bar X_i,  Y )$ for $\bar{X}_i$ given in \eqref{Eq:BasisTBarQ}.
Then, for  $\mathcal{X} \in \mathfrak{X}(T(\bar{Q}/H) \times_{\mbox{\tiny{$\bar{Q}/H$}}} \tilde{\mathcal O}_{\bar\mu})$, the evaluation of \eqref{Ap:Eq:IntrinsicEL}  in $\tilde{X}_i^\cc$ gives 
 \begin{equation}\label{Ap:Eq:proofProp1}
 \Omega_{\mbox{\tiny{$\mathfrak{L}$}}}(T{\bf p}_1({\mathcal X}), T{\bf p}_1(\tilde{X}_i^\cc) ) - d E_{\mathfrak{L}}(T{\bf p}_1(\tilde{X}_i^\cc) ) =  \beta_{\bar\mu}(T{\bf p}_2(\mathcal{X}), T{\bf p}_2(\tilde{X}_i^\cc) )+ d\mathfrak{C}_{\bar\mu}(T{\bf p}_2(\tilde{X}_i^\cc) ).
 \end{equation}
Since $T{\bf p}_1(\tilde{X}_i^\cc) = \partial/\partial x^i$ and $T{\bf p}_1(\mathcal{X}) = \dot x^i \partial/\partial x^i +v^i \partial/ \partial \dot x^i$ then the left hand side of \eqref{Ap:Eq:proofProp1} gives the Euler-Lagrange equations for $\mathfrak{L}$, which is the left hand side of \eqref{Eq:Classic:Splitted-RedLag2}.  In order to analyze the right hand side of \eqref{Ap:Eq:proofProp1} consider $X$ and $Y_i \in T(\bar{Q} \times \mathcal{O}_{\bar\mu})$ such that $T\pi_H(X) = T{\bf p}_2(\mathcal{X})$ and $T\pi_H(Y_i) = T{\bf p}_2(\tilde{X}_i^\cc)$ for $\pi_H:\bar{Q}\times \mathcal{O}_{\bar\mu} \to \tilde{\mathcal{O}}_{\bar\mu}$. Then, using the definition of $\beta_{\bar\mu}$ in \eqref{Ap:DefBeta} and denoting, as usual,  $\pi_1:\bar{Q} \times \mathcal{O} \to\bar{Q}$ and  $\pi_2:\bar{Q} \times \mathcal{O} \to \mathcal{O}$ the projections to the first and  second factors respectively, we obtain
  \begin{equation*}
   \begin{split}
    \beta_\mu(T{\bf p}_2({\mathcal X}), T{\bf p}_2(\tilde{X}^\cc_i)) & = d\alpha_{\bar{\mathcal A}}(X, Y_i) + \omega_{\mathcal O}(T\pi_2(X), T\pi_2(Y_i))   \\
    & = \langle \bar\mu, {\mathcal F} (T\pi_1 (X) , \bar{X}_i) \rangle - \langle \mu , [\bar{\mathcal A}(T\pi_1(X)), T\pi_2(Y_i)] \rangle + \omega_{\mathcal{O}}(T\pi_2(X), T\pi_2(Y_i)),
   \end{split}
  \end{equation*}
  where we follow \cite{Marsden2000} and \cite{MMORR} to compute $d\alpha_{\bar{\mathcal A}}(\tilde X, Y_i)$ taking into account that $T\pi_1(Y_i) = \bar{X}_i$ is horizontal. Moreover, since $\bar{\mathcal A}(T\pi_1(X)) = T\pi_2(X)$ then $\langle \mu , [\bar{\mathcal A}(T\pi_1(X)), T\pi_2(Y_i)] \rangle = \omega_{\mathcal{O}}(T\pi_2(X), T\pi_2(Y_i))$. Finally, denoting by $(d\mathfrak{C}_{\mu})_i := d\mathfrak{C}_\mu (T{\bf p}_2(\tilde{X}^\cc_i)) =  (\frac{\partial}{\partial x^i} - \bar{\mathcal{A}}_i^{\tilde a} \bar{Z}_{\tilde a})(\mathfrak{C}_{\mu})$, we obtain that 
  $$
   \beta_\mu(T{\bf p}_2({\mathcal X}), T{\bf p}_2(\tilde{X}^\cc_i))+ d\mathfrak{C}_\mu (T{\bf p}_2(\tilde{X}^\cc_i)) =  \mu_b F^b_{ij} \, \dot x^j + (d\mathfrak{C}_{\bar\mu})_i.
  $$
 \end{proof}

\begin{remark}
 Observe that \eqref{Eq:Classic:Splitted-RedLag2} is the classical Lagrange-Routh equations given in \cite{Marsden2000} while  \eqref{Eq:Classic:Splitted-RedLag1} is the second order equation condition as it was pointed out in \cite{CM1}.
\end{remark}

\noindent {\bf Routh reduction for nonholonomic systems with conserved momentum.} 
Recall from Section \ref{Ss:TheRouthian} that if $\overline{\langle {\mathcal J}_L, \mathcal{K}_\subW \rangle}$ is basic with respect to the projection $\phi_H:T\bar{Q} \to T\bar{Q}/H$ then the intrinsic nonholonomic Lagrange-Routh equations are given by \eqref{Eq:NH-Routh}.

\begin{lemma} \label{L:JK-Coord} If $\overline{\langle {\mathcal J}_L, \mathcal{K}_\subW\rangle}$ is basic with respect to $\phi_H:T\bar{Q} \to T\bar{Q}/H$, then, in the local coordinates $(x^i, \dot x^i, \xi^a) $ on $T\bar{Q}/H$
\begin{enumerate}
 \item[$(i)$] the 2-form $\overline{\langle {\mathcal J}_L, \mathcal{K}_\subW \rangle}_{\emph\red}$ has the form
$$
\overline{\langle {\mathcal J}_L, \mathcal{K}_\subW \rangle}_{\emph\red} = \frac{\partial l}{\partial \dot x^i} {\mathcal D}^i_{jk} dx^j\wedge dx^k + \frac{\partial l}{\partial \xi^a} {\mathcal D}^a_{jk} dx^j\wedge dx^k,
$$
where ${\mathcal D}^i_{jk}$ and ${\mathcal D}^a_{jk}$ are functions on $\bar{Q}/H$ defined in \eqref{Eq:Proof:JK-Coord}.
\item[$(ii)$]  For each $\bar\mu=\mu_ae^a \in \mathfrak{h}^*$,
 $$
 \Upsilon_l^* [\overline{\langle {\mathcal J}_L, \mathcal{K}_\subW\rangle}_{\emph\red}]_{\bar\mu} = \frac{\partial \mathfrak{L}}{\partial \dot x^i} {\mathcal D}^i_{jk} dx^j\wedge dx^k + \mu_a {\mathcal D}^a_{jk} dx^j\wedge dx^k,
$$
where $ [\overline{\langle {\mathcal J}_L, \mathcal{K}_\subW\rangle}_{\emph\red}]_{\bar\mu}$ is the pullback of $\overline{\langle {\mathcal J}_L, \mathcal{K}_\subW\rangle}_{\emph\red}$ to ${\bar J}_l^{-1}(\bar\mu)/H_{\bar\mu}$.
\end{enumerate}
\end{lemma}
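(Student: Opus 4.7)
The plan is to compute $\langle \mathcal{J}_L, \mathcal{K}_\subW\rangle$ explicitly in adapted local coordinates on $D$, push it down to $T\bar{Q}$ via the $G_\subW$-basicity from Lemma \ref{L:JKbasic}, and then invoke the hypothesis of $H$-basicity to reduce the expression to the stated form.

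In the coordinates $(r^\alpha, s^A, \dot r^\alpha)$ on $D$ adapted to $Q \to Q/G_\subW$ from \eqref{Eq:D+W-Coord}, we have $\mathcal{K}_\subW(X_\alpha, X_\beta) = K^A_{\alpha\beta}\,e_A$ with $K^A_{\alpha\beta}$ as in \eqref{Eq:Coef_K-C}, and $\langle \mathcal{J}_L(v), e_A\rangle = (\partial L/\partial \eta^A)^{\!*}\!(v)$ for $v \in D$, the star denoting evaluation after substituting $\eta^B = \mathcal{A}^B_\gamma \dot r^\gamma$. Hence on $D$
$$\langle \mathcal{J}_L, \mathcal{K}_\subW\rangle = \tfrac{1}{2}\Bigl(\frac{\partial L}{\partial \eta^A}\Bigr)^{\!*}\!K^A_{\alpha\beta}\, dr^\alpha \wedge dr^\beta,$$
and by Lemma \ref{L:JKbasic} this descends to $T\bar{Q}$ with the same formal expression. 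Passing to the $H$-adapted basis $\{\bar{X}_i, \bar{Z}_a\}$ from \eqref{Eq:BasisTBarQ} and splitting $\alpha = (i, a)$, a direct computation shows that $(\partial L/\partial \eta^A)^{\!*}$ re-expresses as a linear combination of $\partial l/\partial \dot x^i$ and $\partial l/\partial \xi^a$ with coefficients that are functions on $\bar{Q}$ alone (this uses that $l$ is of mechanical type and that $\bar{\mathcal{A}}$ is the mechanical connection), yielding
$$\overline{\langle \mathcal{J}_L, \mathcal{K}_\subW\rangle} = \frac{\partial l}{\partial \dot x^i}\,\tilde{\mathcal D}^{\,i} + \frac{\partial l}{\partial \xi^a}\,\tilde{\mathcal D}^{\,a}$$
for some 2-forms $\tilde{\mathcal D}^{\,i}, \tilde{\mathcal D}^{\,a}$ on $\bar{Q}$. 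The basicity hypothesis then forces each $\tilde{\mathcal D}^{\,i}$ and $\tilde{\mathcal D}^{\,a}$ to descend to $\bar{Q}/H$, since $\partial l/\partial \dot x^i$ and $\partial l/\partial \xi^a$ are linearly independent $H$-invariant velocity-linear functions on $T\bar{Q}$; in particular these forms carry no $dy^a$ components and have $y$-independent coefficients, so that $\tilde{\mathcal D}^{\,i} = \mathcal{D}^i_{jk}(x)\,dx^j\wedge dx^k$ and $\tilde{\mathcal D}^{\,a} = \mathcal{D}^a_{jk}(x)\,dx^j\wedge dx^k$, which I take as the defining equations for $\mathcal{D}^i_{jk}$ and $\mathcal{D}^a_{jk}$, establishing (i).

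For part (ii), I pull back the formula of (i) via $\Upsilon_l$ to $\bar{J}_l^{-1}(\bar{\mu})/H_{\bar{\mu}}$. The momentum level condition \eqref{pAeqn} gives $\partial l/\partial \xi^a \equiv \mu_a$ on $\bar{J}_l^{-1}(\bar{\mu})$, and the Routhian identities \eqref{re}--\eqref{Eq:L-C=R} together with the velocity-independence of $\mathfrak{C}_{\bar{\mu}}$ imply that, under $\Upsilon_l$, the function $\partial l/\partial \dot x^i$ pulls back to $\partial \mathfrak{L}/\partial \dot x^i$ on $T(\bar{Q}/H)$. Substituting gives the claimed expression.

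The main obstacle is the bookkeeping in the intermediate step: showing that $(\partial L/\partial \eta^A)^{\!*}$ cleanly reassembles into a combination of $\partial l/\partial \dot x^i$ and $\partial l/\partial \xi^a$ with coefficients purely on $\bar{Q}$. This requires carefully tracking the change of basis between the $G_\subW$-adapted frame $\{X_\alpha, Z_A\}$ on $TQ$ and the $H$-adapted frame $\{\bar{X}_i, \bar{Z}_a\}$ on $T\bar{Q}$, and depends crucially on $l$ being of mechanical type with a compatible choice of mechanical connection.
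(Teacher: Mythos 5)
Your proposal is correct and follows essentially the same route as the paper's proof: expand $\langle \mathcal{J}_L,\mathcal{K}_\subW\rangle$ in the $G_\subW$-adapted frame as $(\partial L/\partial v^A)^{*}C^A_{\alpha\beta}\,dr^\alpha\wedge dr^\beta$, change to the $H$-adapted coframe $\{dx^i,\epsilon^a\}$, use the mechanical-connection block-diagonality to rewrite the velocity-linear prefactor in terms of $\partial l/\partial\dot x^i$ and $\partial l/\partial\xi^a$, and let basicity kill the $\epsilon^a$-components and force the coefficients down to $\bar{Q}/H$; part (ii) then follows by substituting the momentum level \eqref{pAeqn}. The only difference is that the paper records the coefficients $\mathcal{D}^i_{jk}$, $\mathcal{D}^a_{jk}$ explicitly in \eqref{Eq:Proof:JK-Coord}, which your argument produces implicitly.
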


\begin{proof}
$(i)$ From \eqref{Eq:Coef_K-C} we see that
\begin{equation} \label{Eq:Proof-JKCoord}
\langle {\mathcal J}_L, {\mathcal K}_\subW \rangle = \left(\frac{\partial L}{\partial v^A}\right)^*C_{\a\b}^A dr^\a\wedge dr^\b = \dot r^\gamma \kappa_{\gamma A}C_{\a\b}^A dr^\a\wedge dr^\b,
\end{equation}
where $\kappa_{\gamma A} =  \kappa(X_\gamma,Z_A)$ are the coefficients of the kinetic energy metric $\kappa$ of $L$ in the basis $\{X_\gamma, Z_A\}$ in \eqref{Eq:D+W-Coord}.

In local coordinates $(x^i, y^a)$ on $\bar{Q}$, define the functions $\lambda_i^\a$ and $\lambda _a^\a$ on $\bar{Q}$ such that $dr^\a = \lambda_i^\a dx^i + \lambda_a^\a \epsilon^a$, where $\{dx^i, \epsilon^a := \bar{Z}^a + \bar{\mathcal{A}}_i^a dx^i\}$ is the dual basis of \eqref{Eq:BasisTBarQ}. Therefore,
$$
\overline{\langle {\mathcal J}_L, \mathcal{K}_\subW \rangle} = \dot r^\gamma \kappa_{\gamma A} C_{\a\b}^A \lambda_i^\a\lambda_j^\b \, dx^i \wedge dx^j +  \dot r^\gamma \kappa_{\gamma A} K_{\a\b}^A \lambda_i^\a \lambda_a^\b \, dx^i \wedge \epsilon^a + \dot r^\gamma \kappa_{\gamma A} C_{\a\b}^A \lambda_a^\a\lambda_b^\b \, \epsilon^a \wedge \epsilon^b,
$$
where $\dot r^\gamma = \dot r^\gamma (x,y) = \lambda_i^\gamma \dot x^i  + \lambda_a^\gamma\xi^a$.
Finally, if $\overline{\langle {\mathcal J}_L, \mathcal{K}_\subW \rangle}$ is basic with respect to the orbit projection $\rho_H: T\bar{Q} \to T\bar{Q}/H$ then it does not have the $\epsilon^a$-coordinate, that is,
$$
\overline{\langle {\mathcal J}_L, \mathcal{K}_\subW \rangle}_\red =  (\dot x^k\lambda_k^\gamma  \kappa_{\gamma A} C_{\a\b}^A \lambda_i^\a\lambda_j^\b  + \xi^b \lambda_b^\gamma    \kappa_{\gamma A} C_{\a\b}^A \lambda_i^\a\lambda_j^\b  )  dx^i \wedge dx^j.
$$
Since we are working with the mechanical connection then $\partial l/\partial \dot x^l  = \dot x^k\bar{\kappa}_{kl}$ and $\partial l /\partial \xi^a = \bar{\kappa}_{ab}\xi^b$ . If we denote by $\bar{\kappa}^{ij}$ (respectively $\bar{\kappa}^{ab}$) the elements of the matrix $[\bar{\kappa}]^{-1}$---here $[\bar\kappa]$ is the matrix associated to the kinetic energy $\bar{\kappa}$---then the functions $\mathcal{D}_{ij}^l$ and $\mathcal{D}_{ij}^a$ on $\bar{Q}/H$ are given by
\begin{equation}\label{Eq:Proof:JK-Coord}
\mathcal{D}_{ij}^l = \bar{\kappa}^{kl}\lambda_k^\gamma  \kappa_{\gamma A} C_{\a\b}^A \lambda_i^\a\lambda_j^\b \qquad \mbox{and} \qquad
\mathcal{D}_{ij}^a = \bar{\kappa}^{ab} \lambda_b^\gamma    \kappa_{\gamma A} C_{\a\b}^A \lambda_i^\a\lambda_j^\b .
\end{equation}
Note that, in principle, the functions ${\mathcal D}_{jk}^i$ and ${\mathcal D}_{jk}^a$ are functions on $\bar{Q}$ but since we assume that $\overline{\langle \mathcal{J}_L,\mathcal{K}_\subW \rangle}$ is basic then ${\mathcal D}_{jk}^i$ and ${\mathcal D}_{jk}^a$ are well defined functions on $\bar{Q}/H$.
$(ii)$ It is a direct consequence of $(i)$ and \eqref{pAeqn}.
\end{proof}

If $\overline{\langle {\mathcal J}_L, \mathcal{K}_\subW \rangle}$ is basic with respect to the orbit projection $\phi_H: T\bar{Q} \to T\bar{Q}/H$, we denote by $\Gamma^j_{ki}$ and $\Gamma^a_{ki}$ the functions on $\bar{Q}/H$ given by
$$
\Gamma^j_{ki} = {\mathcal D}^j_{ki}-{\mathcal D}^j_{ik} \qquad \mbox{and} \qquad \Gamma^a_{ki} = {\mathcal D}^a_{ki} - {\mathcal D}^a_{ik}.
$$

Hence, from \eqref{Eq:NH-Routh} we obtain the nonholonomic version of the {\it Lagrange-Routh} equations:

\begin{proposition} \label{P:Lag-Routh}
If the 2-form $\overline{\langle {\mathcal J}_L, \mathcal{K}_\subW\rangle}$ is basic with respect to $\phi_H:T\bar{Q} \to T\bar{Q}/H$, then the coordinate version of the {\it intrinsic nonholonomic Lagrange-Routh equations} \eqref{Eq:NH-Routh}  are given by
\begin{subequations}
\begin{eqnarray}
\dot y^{\tilde a}& =& c_{\tilde a} {\bar B}_{\tilde a}^{\tilde b} - \dot x^i \bar{\mathcal A}_i^{\tilde c} \bar{B}_{\tilde c}^{\tilde b} \label{Eq:Splitted-RedLag1} \\
\frac{\partial {\mathfrak{L}}}{\partial x^{i}} -  \frac{d}{dt}\frac{\partial \mathfrak{L}}{\partial \dot{x}^{i}} & = &   \mu_b F^b_{ij} \, \dot x^j + (d\mathfrak{C}_{\bar\mu})_i + (\frac{\partial \mathfrak{L}}{\partial \dot{x}^{j}}  \Gamma^j_{ki} - \mu_a  \Gamma^a_{ki})\dot{x}^{k}
 \label{Eq:Splitted-RedLag2}
\end{eqnarray}
\end{subequations}
\end{proposition}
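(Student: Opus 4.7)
The plan is to derive the coordinate expression by comparing term-by-term with the free Lagrange-Routh case of Proposition \ref{Ap:Prop} and simply reading off the correction produced by the extra 2-form $\Xi := \Upsilon_l^*[\overline{\langle {\mathcal J}_L, \mathcal{K}_\subW\rangle}_{\red}]_{\bar\mu}$, whose local expression is supplied by Lemma \ref{L:JK-Coord}(ii). Writing \eqref{Eq:NH-Routh} as
$$
{\bf i}_{\mathcal{X}_\nh}(\Omega_{\mbox{\tiny{$\mathfrak{L}$}}} - \beta_{\bar\mu}) - {\bf i}_{\mathcal{X}_\nh}\Xi = d \mathcal{E}_{\bar\mu},
$$
I would test this identity against the local basis $\{\tilde X_i^\cc,\tilde X_i^\vv,\tilde Z_{\tilde a}\}$ of $\mathfrak{X}\bigl(T(\bar{Q}/H)\times_{\bar Q/H}\tilde{\mathcal O}_{\bar\mu}\bigr)$ used in the proof of Proposition~\ref{Ap:Prop}.

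For the first step, I would observe that by Lemma \ref{L:JK-Coord}(ii) the form $\Xi$ is a combination of $dx^j\wedge dx^k$ only. Since $dx^j$ annihilates both $\tilde X_i^\vv=\partial/\partial \dot x^i$ and $\tilde Z_{\tilde a}=\bar{B}_{\tilde a}^{\tilde b}\partial/\partial y^{\tilde b}$, the correction ${\bf i}_{\mathcal X_\nh}\Xi$ contributes nothing upon pairing with either of these vector fields. Hence contraction against $\tilde X_i^\vv$ and $\tilde Z_{\tilde a}$ reproduces exactly the same relations as in the free case, and together they force $\mathcal{X}_\nh$ to be a second-order equation; this delivers \eqref{Eq:Splitted-RedLag1} directly from Proposition~\ref{Ap:Prop}.

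For the second step, I would contract with $\tilde X_i^\cc = \partial/\partial x^i-\bar{\mathcal A}_i^{\tilde a}\bar Z_{\tilde a}$. The $(\Omega_{\mbox{\tiny{$\mathfrak{L}$}}}-\beta_{\bar\mu})$ part of the contraction, balanced against $d\mathcal E_{\bar\mu}$, was already computed in Proposition~\ref{Ap:Prop} and produces the free right-hand side $\mu_b F^b_{ij}\dot x^j+(d\mathfrak C_{\bar\mu})_i$, so the only genuinely new piece is $\Xi(\mathcal X_\nh,\tilde X_i^\cc)$. Using $dx^j(\tilde X_i^\cc)=\delta_i^j$ and, by the second-order condition from step one, $dx^j(\mathcal X_\nh)=\dot x^j$, I would compute
$$
(dx^j\wedge dx^k)(\mathcal X_\nh,\tilde X_i^\cc) = \dot x^j\delta_i^k - \dot x^k\delta_i^j,
$$
and then sum against the coefficients supplied by Lemma~\ref{L:JK-Coord}(ii). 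The antisymmetrization in $j,k$ collapses $\mathcal D^l_{jk}$ and $\mathcal D^a_{jk}$ into the tensors $\Gamma^l_{ki}=\mathcal D^l_{ki}-\mathcal D^l_{ik}$ and $\Gamma^a_{ki}=\mathcal D^a_{ki}-\mathcal D^a_{ik}$, each contracted with $\dot x^k$. Transposing this contribution from the left- to the right-hand side of the equality then yields precisely the extra term in \eqref{Eq:Splitted-RedLag2}.

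The computation is a direct expansion, so I do not anticipate any essential obstacle beyond careful index bookkeeping. The only delicate point is verifying that the correction $-\Xi(\mathcal X_\nh,\tilde X_i^\cc)$, when moved across the equality, produces exactly the signs appearing in \eqref{Eq:Splitted-RedLag2}; this is a consequence of the sign with which the 2-form $\Xi$ is subtracted inside the intrinsic equation \eqref{Eq:NH-Routh}, combined with the antisymmetry $\Gamma^\bullet_{ki}=\mathcal D^\bullet_{ki}-\mathcal D^\bullet_{ik}$.
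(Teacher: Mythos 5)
Your proposal is correct and is exactly the derivation the paper intends: the paper states Proposition~\ref{P:Lag-Routh} as an immediate consequence of the intrinsic equation \eqref{Eq:NH-Routh}, the free coordinate computation of Proposition~\ref{Ap:Prop}, and the local expression of $\Upsilon_l^*[\overline{\langle {\mathcal J}_L, \mathcal{K}_\subW\rangle}_{\red}]_{\bar\mu}$ from Lemma~\ref{L:JK-Coord}(ii), and your contraction against $\{\tilde X_i^\cc,\tilde X_i^\vv,\tilde Z_{\tilde a}\}$ fills in precisely that argument. The one point you rightly flag as delicate is the bookkeeping of the relative sign between the $\frac{\partial \mathfrak{L}}{\partial \dot{x}^{j}}\Gamma^j_{ki}$ and $\mu_a\Gamma^a_{ki}$ contributions; a naive antisymmetrization of the two terms of Lemma~\ref{L:JK-Coord}(ii) produces them with the same sign, so you should verify carefully that your computation reproduces the relative minus sign appearing in \eqref{Eq:Splitted-RedLag2} (which the paper uses consistently, e.g.\ in \eqref{Eq:PDE-condition}).
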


\begin{remark}
Equations \eqref{Eq:Splitted-RedLag1} and \eqref{Eq:Splitted-RedLag2} are a modification of the {\it Lagrange-Routh} equations given in \cite{CM1} by the extra term $\left( \frac{\partial \mathfrak{L}}{\partial \dot{x}^{j}}  \Gamma^j_{ki} - \mu_a  \Gamma^a_{ki}\right)\dot{x}^{k}$. This extra term carries the nonholonomic information of the system after the two reductions: first by $G_\subW$ and then by $H$.
\end{remark}

We conclude that equations \eqref{Eq:Splitted-RedLag1} and \eqref{Eq:Splitted-RedLag2} are the equations of motion on the level set $\bar{J}_l^{-1}(\bar\mu)/H_{\bar\mu}$ when the 2-form $\overline{\langle {\mathcal J}_L, \mathcal{K}_\subW\rangle}$ of the nonholonomic system is basic.
\bigskip

\noindent {\bf Nonholonomic Lagrange-Routh equations after a gauge transformation.} As seen in Section \ref{Ss:TheRouthian}, if we consider a gauge transformation $B$, the intrinsic nonholonomic Lagrange-Routh equations are given by \eqref{Eq:Lag-Red_Sympl_mfld-Gauge}. If we denote $\bar{B}_l = (\mathbb{F}l)^*\bar{B}$ then $ \left( \overline{\langle {\mathcal J}_L, \mathcal{K}_\subW \rangle} + \bar{B}_l \right) = (\mathbb{F}l)^* \left( \overline{\langle {\mathcal J}, \mathcal{K}_\subW \rangle} + \bar{B} \right)$ and thus we denote $\mathfrak{B}_l$ the 2-form on $T\bar{Q}/H$ such that $\phi_H^* \mathfrak{B}_l = \overline{\langle \mathcal{J}_L, \mathcal{K}_\subW\rangle} + \bar{B}_l$. 

%
%

As we did in Section~\ref{Ss:LagragianFormulation}, define the local coordinates $(x^i, y^a)$ associated to the local trivialization of $\bar{Q}\to \bar{Q}/H$ and consider the coordinates $(\dot x^i, \xi^a)$ on $T_{(x,y)}\bar{Q}$ associated to the local basis \eqref{Eq:BasisTBarQ}. If $\overline{\langle {\mathcal J}_L, \mathcal{K}_\subW \rangle} + \bar{B}_l$ is basic with respect to $\phi_H:T\bar{Q}\to T\bar{Q}/H$ then, from Lemma \ref{L:JK-Coord}, we may consider 2-forms $\bar{B}_l$ such that 
$$
\overline{\langle {\mathcal J}_L, \mathcal{K}_\subW \rangle} + \bar{B}_l = \left(\frac{\partial l}{\partial \dot x^i} (D^i_{ji} + \bar{B}^i_{jk}) + \frac{\partial l}{\partial \xi^a} (D_{jk}^a + \bar{B}_{jk}^a ) \right) dx^j \wedge dx^k ,
$$
where $B_{jk}^i, B_{jk}^a \in C^\infty(\bar{Q})$.
Therefore, 
\begin{equation}\label{Eq:Gauge:B-coord}
\Upsilon_l^*[\mathfrak{B}_l]_{\bar\mu} = \left( \frac{\partial \mathfrak{L}}{\partial \dot x^i}( {\mathcal D}^i_{jk} + \bar{B}^i_{jk} )+ \mu_a ({\mathcal D}^a_{jk} +  \bar{B}^a_{jk} ) \right) dx^j\wedge dx^k,
\end{equation}
where $ {\mathcal D}^i_{jk} + \bar{B}^i_{jk}$ and ${\mathcal D}^a_{jk} + \bar{B}^a_{jk}$ are functions on $\bar{Q}/H$.
If we denote by $(\Gamma_{\bar\B})^j_{ki}$ and $(\Gamma_{\bar\B})^a_{ki}$ the functions on $\bar{Q}/H$ given by
\begin{equation} \label{Eq:Gauge:Coef_JK-B}
(\Gamma_{\bar\B})^i_{jk} =({\mathcal D}^i_{jk} +\bar{B}^i_{jk}) - ({\mathcal D}^i_{kj} + \bar{B}^i_{kj})  \qquad \mbox{and} \qquad (\Gamma_{\bar\B})^a_{jk} = ({\mathcal D}^a_{jk} + \bar{B}^a_{jk}) - ({\mathcal D}^a_{kj} + \bar{B}^a_{kj}),
\end{equation}
then the reduced nonholonomic dynamics are restricted to the leaves $T(\bar{Q}/H) \times_{\mbox{\tiny{$\bar{Q}/H$}}} \bar{Q}/H_{\bar\mu}$ and the equations of motion are given by
\begin{subequations}\label{Eq:Gauge:LagrRouth}
\begin{eqnarray}
\dot y^{\tilde a}& =& c_{\tilde a} {\bar B}_{\tilde a}^{\tilde b} - v^i \bar{\mathcal A}_i^{\tilde c} \bar{B}_{\tilde c}^{\tilde b}\label{Eq:Gauge:LagraRouth1} \\
\frac{\partial {\mathfrak{L}}}{\partial x^{i}} -  \frac{d}{dt}\frac{\partial \mathfrak{L}}{\partial \dot{x}^{i}} & = &   \mu_b F^b_{ij} \, \dot x^j + (d\mathfrak{C}_{\bar\mu})_i + (\frac{\partial \mathfrak{L}}{\partial \dot{x}^{j}}  (\Gamma_{\bar\B})^j_{ki} - \mu_a  (\Gamma_{\bar\B})^a_{ki})\dot{x}^{k}.\label{Eq:Gauge:LagraRouth2}
\end{eqnarray}
\end{subequations}


\section{The conformal factor}\label{sec:conf}

In general \eqref{Eq:ChapReduction} and \eqref{Eq:Hamilt-MWreduction} are not Hamiltonian systems. However, there are several techniques that one may employ to attempt to make them Hamiltonian systems. The pursuit of this goal is referred to as {\em  Hamiltonization}. Let us now discuss the two most popular approaches to Hamiltonization for Chaplygin systems.

\subsection{Approaches to Hamiltonization at the compressed level} \label{sec:conf1}

In \cite{Stanchenko} Stanchenko showed that if one can find a function $f \in C^\infty(\bar{Q})$ such that $f\bar{\Omega}$ is closed then \eqref{Eq:ChapReduction} becomes Hamiltonian with respect to the new 2-form $f\bar{\Omega}$ and the new vector field $\bar{X}_\nh/f$. He also showed that a sufficient condition for this to occur is that
\begin{equation}\label{suffcondHam}
df \wedge \Theta_{\bar{Q}}=f\overline{\langle {\cal J},{\cal K}_{\cal W}\rangle}.
\end{equation}
Thus, in this approach to Hamiltonization one obtains dynamics with respect to the non-canonical 2-form $f\bar{\Omega}$.

Much earlier, Chaplygin showed in \cite{Chap} that in some instances a function $f \in C^\infty(\bar{Q}/H)$ may be found such that after the reparameterization of time $d\tau = f(r)\,dt$ the local equations of \eqref{Eq:ChapReduction} become Hamiltonian in the new time $\tau$, with respect to the canonical form $\Omega_{\mbox{\tiny{$\bar{Q}/H$}}}$. In \cite[eq.(2.17)]{Fernandez} the authors derived a set of coupled first-order partial differential equations for $f$ for $G$-Chaplygin systems whose shape space has arbitrary dimension. These conditions, as well as the reparameterization itself, were described from a global perspective in \cite{Oscar}. 

In both the Stanchenko and Chaplygin approaches the function $f$ is called the {\it conformal factor} or {\em multiplier}. However, as described above, in Stanchenko Hamiltonization one generally ends up with a non-canonical 2-form, whereas in Chaplygin Hamiltonization one generally ends up with a canonical 2-form. Lastly, we mention that the non-existence of a conformal factor at a certain level of reduction does not preclude the existence of a conformal factor at a further level of reduction. Indeed, we will discuss in Section \ref{sec:examples} two examples where this happens.

\subsection{Further reduction and Hamiltonization.} \label{sec:conf2}

In order to perform an (almost) symplectic reduction of the compressed system \eqref{Eq:Chap2form}, we assume that there is a 2-form $\bar{B}$ such that ${\bf i}_{\bar{X}_\nh} \bar{B}=0$ and $\overline{\langle {\mathcal J},{\mathcal K}_\subW\rangle} + \bar{B}$ is basic with respect to the orbit projection $\rho_H:T^*\bar{Q} \to T^*\bar{Q}/H$ (see  Section~\ref{Ss:Gauge2Setps}). Then, the (almost) symplectic reduction of  \eqref{Eq:Gauge-Chap2form} gives the reduced almost symplectic manifold $(J^{-1}(\bar\mu)/H_\mu,\bar\omega_{\bar\mu}^{\bar\B})$ or equivalently \eqref{Eq:Gauge-Ham-Red_Sympl_mfld}.

Following \cite{Oscar} let us assume that the Lie group $H=G/G_\subW$ is abelian.
In this case,  \eqref{Eq:Gauge-Ham-Red_Sympl_mfld} is the almost symplectic manifold
\begin{equation} \label{Eq:AbelianLeaf}
(T^*(\bar{Q}/H), \Omega_{\mbox{\tiny{$\bar{Q}/H$}}} - \beta_{\bar\mu} -\Upsilon^*\mathfrak{B}_{\bar\mu})
\end{equation}
where $\beta$ can be viewed as a 2-form on $\bar{Q}/H$ such that $\pi_H^* \beta_{\bar\mu} = \langle \bar\mu , d\bar{\mathcal{A}}\rangle$ for $\pi_G : \bar{Q} \to \bar{Q}/H$ (see \eqref{Ap:DefBeta}) and $\bar{\mathcal A} :T\bar{Q} \to \mathfrak{h}^*$ is the (mechanical) connection. We also remind the reader that $\mathfrak{B}_{\bar\mu}$ is the pullback of $\mathfrak{B}$ to the level $\bar{J}^{-1}(\bar{\mu})/H_{\bar{\mu}}$, where $\mathfrak{B}$ is the 2-form on $T^*\bar{Q}/H$ such that $\rho^*_H\mathfrak{B}=\overline{\langle {\cal J},{\cal K}_\subW\rangle}-\bar{B}$.

Our goal is now to find a conformal factor $f_\mu \in C^\infty(\bar{Q}/H)$  of the 2-form $\Omega_{\mbox{\tiny{$\bar{Q}/H$}}} - \beta_{\bar\mu} -\Upsilon^*\mathfrak{B}_{\bar\mu}$  on $T^*(\bar{Q}/H)$. We can proceed in a similar way as in the Chaplygin case, and now generalize the Stanchenko approach to Hamiltonization.

\begin{proposition}
At each $\bar\mu\in \mathfrak{h}^*$, a sufficient condition for the existence of a $f_{\bar\mu} \in C^{\infty}(\bar{Q}/H)$ such that $f_{\bar\mu} (\Omega_{\mbox{\tiny{$\bar{Q}/H$}}} - \beta_{\bar\mu} - \Upsilon^*\mathfrak{B}_{\bar\mu})$ is closed, is that
\begin{equation} \label{eq:stcond}
df_{\bar\mu} \wedge \Theta_{\mbox{\tiny{$\bar{Q}/H$}}} - f_{\bar\mu} (\beta_\mu + \Upsilon^*\mathfrak{B}_{\bar\mu}) = 0,
\end{equation}
where $\Omega_{\mbox{\tiny{$\bar{Q}/H$}}} = -d \Theta_{\mbox{\tiny{$\bar{Q}/H$}}}$.
\end{proposition}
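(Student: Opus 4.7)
The plan is to show that the displayed sufficient condition is literally equivalent, after applying the exterior derivative, to the closedness of $f_{\bar\mu}(\Omega_{\bar{Q}/H} - \beta_{\bar\mu} - \Upsilon^*\mathfrak{B}_{\bar\mu})$. So the entire argument is a single Leibniz-rule computation combined with the fact that $d^2 = 0$.

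First, I would compute $d$ of the conformally rescaled 2-form. Using the Leibniz rule and the fact that $\Omega_{\bar{Q}/H} = -d\Theta_{\bar{Q}/H}$ is closed, we get
\begin{equation*}
d\bigl[f_{\bar\mu}(\Omega_{\bar{Q}/H} - \beta_{\bar\mu} - \Upsilon^*\mathfrak{B}_{\bar\mu})\bigr] = df_{\bar\mu}\wedge \Omega_{\bar{Q}/H} - df_{\bar\mu}\wedge (\beta_{\bar\mu} + \Upsilon^*\mathfrak{B}_{\bar\mu}) - f_{\bar\mu}\, d(\beta_{\bar\mu} + \Upsilon^*\mathfrak{B}_{\bar\mu}).
\end{equation*}
Second, I would compute $d$ of the left-hand side of \eqref{eq:stcond}. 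Again using Leibniz and $\Omega_{\bar{Q}/H} = -d\Theta_{\bar{Q}/H}$,
\begin{equation*}
d\bigl[df_{\bar\mu}\wedge \Theta_{\bar{Q}/H} - f_{\bar\mu}(\beta_{\bar\mu} + \Upsilon^*\mathfrak{B}_{\bar\mu})\bigr] = df_{\bar\mu}\wedge \Omega_{\bar{Q}/H} - df_{\bar\mu}\wedge (\beta_{\bar\mu} + \Upsilon^*\mathfrak{B}_{\bar\mu}) - f_{\bar\mu}\,d(\beta_{\bar\mu} + \Upsilon^*\mathfrak{B}_{\bar\mu}),
\end{equation*}
which matches the previous display term by term.

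Finally, if the sufficient condition \eqref{eq:stcond} holds, its exterior derivative vanishes identically, and by the second computation this derivative coincides with $d[f_{\bar\mu}(\Omega_{\bar{Q}/H} - \beta_{\bar\mu} - \Upsilon^*\mathfrak{B}_{\bar\mu})]$. Therefore $f_{\bar\mu}(\Omega_{\bar{Q}/H} - \beta_{\bar\mu} - \Upsilon^*\mathfrak{B}_{\bar\mu})$ is closed, proving the claim.

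There is no real obstacle here; the only thing to be careful about is bookkeeping of where each form lives (for example, $\beta_{\bar\mu}$ is viewed as a 2-form on $\bar{Q}/H$ pulled back to $T^*(\bar{Q}/H)$, so $d\beta_{\bar\mu}$ and $df_{\bar\mu}\wedge \beta_{\bar\mu}$ make sense on $T^*(\bar{Q}/H)$), but these pullbacks commute with $d$ and the Leibniz rule works as expected, so the two computations above are entirely mechanical. The proposition is essentially an exact analogue of Stanchenko's sufficient condition \eqref{suffcondHam} at the level of the almost symplectic leaf \eqref{Eq:AbelianLeaf}, and the proof mirrors the standard one.
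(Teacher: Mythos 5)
Your proof is correct and is essentially the paper's own argument: the paper simply differentiates both sides of \eqref{eq:stcond} and observes that this yields $d\left(f_{\bar\mu}(\Omega_{\mbox{\tiny{$\bar{Q}/H$}}} - \beta_{\bar\mu} - \Upsilon^*\mathfrak{B}_{\bar\mu})\right)=0$. Your version just writes out the two Leibniz-rule computations explicitly and checks they agree term by term, which is the same idea in more detail.
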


\begin{proof}If we differentiate both sides of \eqref{eq:stcond}, we obtain that $d\left(f_{\bar\mu} (\Omega_{\mbox{\tiny{$\bar{Q}/H$}}} - \beta_{\bar\mu} - \Upsilon^*\mathfrak{B}_{\bar\mu}) \right) = 0$.
\end{proof}

\noindent When $f$ satisfies \eqref{eq:stcond}, the dynamics \eqref{Eq:Gauge:Hamilt-MWreduction} become the Hamiltonian system
\begin{equation} \label{Eq:ConfFactor:HamiltSystem}
{\bf i}_{(1/f_{\bar\mu} ) X_\red^{\bar{\mu}}} \left( f_{\bar\mu} ( {\Omega}_{\bar\mu} - \mathfrak{B}_{\bar\mu}) \right) = d (\Ham_{\bar\mu} ),
\end{equation}
where $\Omega_{\bar\mu} = \Omega_{\mbox{\tiny{$\bar{Q}/H$}}} - \beta_{\bar\mu}$ given in \eqref{Eq:Proof:equivalence}.
If $\beta_\mu=0$ and there is no need for a gauge transformation---i.e., $\overline{\langle {\mathcal J},{\mathcal K}_\subW\rangle}$ is already basic and thus $\mathfrak{B}= \overline{\langle {\mathcal J},{\mathcal K}_\subW\rangle}_\red$---then \eqref{eq:stcond} and \eqref{Eq:ConfFactor:HamiltSystem} reproduce the results of \cite{Stanchenko}.

Let us now discuss the generalization of Chaplygin Hamiltonization to our present setting. Specifically, for each $\bar\mu \in \mathfrak{h}^*$ we are now interested in transforming \eqref{Eq:Gauge:Hamilt-MWreduction} into a Hamiltonian system relative to the standard symplectic form $\Omega_{\mbox{\tiny{$\bar{Q}/H$}}}$. We will follow the discussion in \cite{Oscar}.

To begin, consider the vector field $(1/f_{\bar\mu}) X_\red^{\bar{\mu}}$ and define the diffeomorphism $\Psi_{f_{\bar\mu}}: T^*\bar{Q} \to T^*\bar{Q}$ such that $\Psi_{f_{\bar\mu}}(\alpha)=f_{\bar\mu}\alpha$.

\begin{proposition}\label{chapcfprop}
The system \eqref{Eq:Gauge:Hamilt-MWreduction} can be transformed into the Hamiltonian system
\begin{equation}\label{eq:heqXC}
{\bf i}_{\bar{X}^{\bar\mu}_C}\Omega_{\mbox{\tiny{$\bar{Q}/H$}}} = d\bar{\Ham}^{\bar\mu}_C,
\end{equation}
where $T\Psi_{f_{\bar\mu}}\circ ((1/f_{\bar\mu}) X_{\emph\red}^{\bar{\mu}} )=\bar{X}^{\bar\mu}_C \circ \Psi_{f_{\bar\mu}}$ and $\bar{\Ham}^{\bar\mu}_C=\Ham_{\bar\mu}\circ \Psi_{1/f_{\bar\mu}}$, if and only if $f_{\bar\mu}$ satisfies
\begin{equation} \label{Eq:SuffNec-Cond}
{\bf i}_{ X_{\emph\red}^{\bar{\mu}}} \left( df_{\bar\mu} \wedge \Theta_{\mbox{\tiny{$\bar{Q}/H$}}} - f_{\bar\mu} (\beta_\mu + \Upsilon^*\mathfrak{B}_{\bar\mu}) \right) = 0.
\end{equation}
\end{proposition}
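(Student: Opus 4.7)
The plan is to use the standard fact that the fibrewise scaling $\Psi_{f_{\bar\mu}}$ on a cotangent bundle pulls back the canonical $1$-form by multiplication, and then pull back the desired Hamiltonian equation on the scaled copy to the original leaf, compare it with the nonholonomic equation \eqref{Eq:Gauge:Hamilt-MWreduction}, and read off the condition \eqref{Eq:SuffNec-Cond}. Since we are in the abelian case (so $H_{\bar\mu}=H$ and $\tilde{\mathcal{O}}_{\bar\mu}$ is a point-bundle over $\bar{Q}/H$), the model leaf \eqref{Eq:Gauge-Ham-Red_Sympl_mfld} is identified with the almost symplectic manifold \eqref{Eq:AbelianLeaf} on $T^*(\bar{Q}/H)$, and all computations are carried out there.

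\noindent\emph{Step 1: the key identity for $\Psi_{f_{\bar\mu}}$.} Since $\pi_{\bar{Q}/H}\circ \Psi_{f_{\bar\mu}}=\pi_{\bar{Q}/H}$ on $T^*(\bar{Q}/H)$ and $f_{\bar\mu}$ is basic, a direct computation from the intrinsic definition of the canonical $1$-form $\Theta_{\mbox{\tiny{$\bar{Q}/H$}}}$ gives
\[
\Psi_{f_{\bar\mu}}^{\,*}\Theta_{\mbox{\tiny{$\bar{Q}/H$}}}=f_{\bar\mu}\,\Theta_{\mbox{\tiny{$\bar{Q}/H$}}},
\qquad\text{hence}\qquad
\Psi_{f_{\bar\mu}}^{\,*}\Omega_{\mbox{\tiny{$\bar{Q}/H$}}}=f_{\bar\mu}\,\Omega_{\mbox{\tiny{$\bar{Q}/H$}}}-df_{\bar\mu}\wedge\Theta_{\mbox{\tiny{$\bar{Q}/H$}}}.
\]

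\noindent\emph{Step 2: translate the target Hamiltonian equation back to the original leaf.} The definitions $T\Psi_{f_{\bar\mu}}\!\circ((1/f_{\bar\mu})X_{\red}^{\bar\mu})=\bar{X}_C^{\bar\mu}\circ\Psi_{f_{\bar\mu}}$ and $\Psi_{f_{\bar\mu}}^{\,*}\bar{\Ham}_C^{\bar\mu}=\Ham_{\bar\mu}$ mean that \eqref{eq:heqXC} is equivalent, after pulling back by $\Psi_{f_{\bar\mu}}$, to
\[
{\bf i}_{(1/f_{\bar\mu})X_{\red}^{\bar\mu}}\bigl(\Psi_{f_{\bar\mu}}^{\,*}\Omega_{\mbox{\tiny{$\bar{Q}/H$}}}\bigr)=d\Ham_{\bar\mu}.
\]
Substituting the Step 1 formula and multiplying through by $f_{\bar\mu}$, this is in turn equivalent to
\begin{equation}\label{Eq:Plan:intermediate}
{\bf i}_{X_{\red}^{\bar\mu}}\Omega_{\mbox{\tiny{$\bar{Q}/H$}}}-\frac{1}{f_{\bar\mu}}{\bf i}_{X_{\red}^{\bar\mu}}\bigl(df_{\bar\mu}\wedge\Theta_{\mbox{\tiny{$\bar{Q}/H$}}}\bigr)=d\Ham_{\bar\mu}.
\end{equation}

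\noindent\emph{Step 3: compare with \eqref{Eq:Gauge:Hamilt-MWreduction} to extract the condition.} By \eqref{Eq:Gauge:Hamilt-MWreduction} and \eqref{Eq:AbelianLeaf}, the nonholonomic dynamics satisfies
\[
{\bf i}_{X_{\red}^{\bar\mu}}\Omega_{\mbox{\tiny{$\bar{Q}/H$}}}=d\Ham_{\bar\mu}+{\bf i}_{X_{\red}^{\bar\mu}}\bigl(\beta_{\bar\mu}+\Upsilon^{*}\mathfrak{B}_{\bar\mu}\bigr).
\]
Inserting this into \eqref{Eq:Plan:intermediate}, the $d\Ham_{\bar\mu}$ terms cancel and, after multiplying by $f_{\bar\mu}$, we obtain that \eqref{Eq:Plan:intermediate} holds if and only if
\[
{\bf i}_{X_{\red}^{\bar\mu}}\bigl(df_{\bar\mu}\wedge\Theta_{\mbox{\tiny{$\bar{Q}/H$}}}-f_{\bar\mu}(\beta_{\bar\mu}+\Upsilon^{*}\mathfrak{B}_{\bar\mu})\bigr)=0,
\]
which is precisely \eqref{Eq:SuffNec-Cond}. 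Tracing the equivalences backwards through Steps 1--2 yields the claim. The only delicate point is verifying that $\Psi_{f_{\bar\mu}}$ is well defined as a diffeomorphism of the almost symplectic leaf \eqref{Eq:AbelianLeaf} (equivalently, of $T^*(\bar{Q}/H)$ in the abelian setting), and that the pullback identity in Step 1 survives restriction to the leaf---both are immediate from $f_{\bar\mu}\in C^{\infty}(\bar{Q}/H)$ and the intrinsic definition of $\Theta_{\mbox{\tiny{$\bar{Q}/H$}}}$, so no real obstacle is expected.
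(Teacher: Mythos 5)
Your proof is correct and follows essentially the same route as the paper: both arguments hinge on the fibrewise scaling $\Psi_{f_{\bar\mu}}$ and the identity ${\bf i}_{\bar{X}^{\bar\mu}_C}\Omega_{\mbox{\tiny{$\bar{Q}/H$}}} + \Psi_{1/f_{\bar\mu}}^* \bigl( {\bf i}_{(1/f_{\bar\mu}) X_\red^{\bar{\mu}}} (df_{\bar\mu} \wedge \Theta_{\mbox{\tiny{$\bar{Q}/H$}}} - f_{\bar\mu} (\beta_{\bar\mu} + \Upsilon^*\mathfrak{B}_{\bar\mu}) ) \bigr) = d\bar{\Ham}^{\bar\mu}_C$, which the paper simply quotes from the calculations in \cite{Oscar} while you derive it explicitly from $\Psi_{f_{\bar\mu}}^{\,*}\Theta_{\mbox{\tiny{$\bar{Q}/H$}}}=f_{\bar\mu}\Theta_{\mbox{\tiny{$\bar{Q}/H$}}}$ and the relatedness of the vector fields. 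Your version is a self-contained expansion of the paper's one-line proof, consistent with its sign convention $\Omega_{\mbox{\tiny{$\bar{Q}/H$}}}=-d\Theta_{\mbox{\tiny{$\bar{Q}/H$}}}$.
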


\begin{proof}
Since $\rho^*_H\mathfrak{B}=\overline{\langle {\mathcal J},{\mathcal K}_\subW \rangle}+\bar{B}$ is merely a shift by $\bar{B}$ of $\overline{\langle {\mathcal J},{\mathcal K}_\subW\rangle}$, then from the calculations in \cite{Oscar} we have
$$ 
{\bf i}_{\bar{X}^{\bar\mu}_C}\Omega_{\mbox{\tiny{$\bar{Q}/H$}}} +  \Psi_{1/f_{\bar\mu}}^* \left( {\bf i}_{(1/f_{\bar\mu}) X_\red^{\bar{\mu}}} (df_{\bar\mu} \wedge \Theta_{\mbox{\tiny{$\bar{Q}/H$}}} - f_{\bar\mu} (\beta_\mu + \Upsilon^*\mathfrak{B}_{\bar\mu}) ) \right)    = d\bar{\Ham}^{\bar\mu}_C.
$$
\end{proof}
Note that condition \eqref{eq:stcond} is a sufficient condition for the dynamics to be described as in \eqref{eq:heqXC}.

\begin{remark} In \cite{Oscar} the necessary and sufficient condition for the Hamiltonization \eqref{eq:heqXC} is expressed as
${\bf i}_{\bar{X}^{\bar\mu}_C} \left(df_{\bar\mu} \wedge \Theta_{\mbox{\tiny{$\bar{Q}/H$}}} - f_{\bar\mu}^2 (\beta_\mu + \Psi_{1/f_{\bar\mu}}^* \Upsilon^*\mathfrak{B}_{\bar\mu}) \right) = 0,$
which is an equivalent condition to \eqref{Eq:SuffNec-Cond}.
\end{remark}

Following \cite{Fernandez} we now give a set of coupled differential equations describing the conformal factor $f_{\bar\mu}$ for each leaf \eqref{Eq:AbelianLeaf}. These equations are the local version of \eqref{Eq:SuffNec-Cond}.

Observe that the equations of motion on each almost symplectic leaf are given by the {\it nonholonomic Lagrange-Routh equations} (Prop.\,\ref{P:Lag-Routh} or equivalently by \eqref{Eq:Gauge:LagrRouth}). If $H$ is abelian, these equations reduce to equation \eqref{Eq:Splitted-RedLag1} (or \eqref{Eq:Gauge:LagraRouth1}).
Moreover, using the $H$-invariance of the Lagrangian $l:T\bar{Q}\to \R$ in \eqref{Eq:RedLagrangian} we can write the reduced Lagrangian $\mathfrak{L} :T^*(\bar{Q}/H) \to \R$ defined in \eqref{Def:Red-Lagragian} in local coordinates as $\mathfrak{L}(x,\dot x) = \frac{1}{2} \bar{\kappa}_{ij}(x) \dot x^i \dot x^j - \mathcal{V}(x)$.
By the same argument as in \cite{Fernandez} and using that $\mathfrak{B}_{\bar\mu}$ is given in \eqref{Eq:Gauge:B-coord}, we get that the conformal factor $f_\mu$ exists if and only
\begin{subequations}\label{Eq:Gauge:PDE}
\begin{equation}\label{Eq:PDEconfFactor}
\frac{\partial f}{\partial x^i}\bar\kappa_{jk} + \frac{\partial f}{\partial x^k}\bar{\kappa}_{ij} -2\frac{\partial f}{\partial x^j}\bar\kappa_{ik}=f\left(\bar\kappa_{lk} (\Gamma_{\bar{\B}})^{l}_{ij} + \bar\kappa_{li} (\Gamma_{\bar{\B}})^{l}_{kj} \right)
\end{equation}
\begin{equation}\label{Eq:PDE-condition}
\mu_b \left( F_{ij}^b - (\Gamma_{\bar\B})_{ji}^b\right) = 0.
\end{equation}
\end{subequations}
for each triple $(i,j,k)$ and where $(\Gamma_{\bar\B})_{ji}^k$ and $(\Gamma_{\bar\B})_{ji}^b$  are defined in
\eqref{Eq:Gauge:Coef_JK-B}. 
Observe that if such a conformal factor exists, then it does not depend on the leaf.

When $\beta_\mu$ and the $(\Gamma_{\bar\B})_{ji}^b$ are all zero these conditions reduce to those found in \cite[Eq. (2.17)]{Fernandez}. Those authors also derive local conditions for $f$ to exist in the general case when $H$ is not abelian (though no gauge transformations are considered); this results in a much more complicated set of first-order partial differential equations in $f$.

Finally, let us discuss Hamiltonization in the context of twisted Poisson brackets.
Suppose that $f$ is a nowhere vanishing function on a manifold $P$ and that $\{\cdot , \cdot \}$ is a twisted Poisson bracket on $P$ with almost symplectic leaves $(\mathcal{O}_\mu, \omega_\mu)$. Then $1/f$ is a conformal factor  $\{\cdot,\cdot\}$ if and only if the restriction of $f$ to $\mathcal{O}_\mu$, denoted by $f|_{\mathcal{O}_\mu}$,  is a conformal factor of $\omega_\mu$.
 In our context, since $f$ is a function on $\bar{Q}/H= Q/G$ then  $\tau_G^* (1/f) \in C^\infty(\M/G)$ is a conformal factor for a twisted Poisson bracket  $\{\cdot,\cdot\}_\red^\B$ if and only if the restriction of  $\tau_G^* f$ to  $T^*(\bar{Q}/H) \times_{\mbox{\tiny{$\bar{Q}/H$}}} \tilde{\mathcal{O}}_{\bar\mu}$ is a conformal factor of \eqref{Eq:Gauge-Ham-Red_Sympl_mfld}, for $\tau_G:\M/G \to Q/G$.


\section{Examples}\label{sec:examples}


\subsection{The Snakeboard} \label{Ss:Ex:Snake}

In this section we discuss the snakeboard example (or skateboard in some literature). We use the formulation of the problem given in \cite{KoMa1997} and subsequently studied in \cite{Bloch:Book,Fernandez,Oscar}.

The system is modeled as a rigid body (the board) with two sets of independent actuated wheels, one on each end of the board. The human rider is modeled as a momentum wheel which sits in the middle of the board and is allowed to spin about the vertical axis. We denote by $m$ the total mass of the board, $J$ the inertia of the board, $J_{0}$ the inertia of the rotor, $J_{1}$ the inertia of each of the wheels, and assume the relation $J + J_{0} + 2 J_{1} = m R^{2}$, where $R$ is the radius from the center of the board to the pivot point of the wheel axle. The configuration space is
\begin{equation*}
Q = SE(2) \times S^{1} \times S^{1}= \{ (\theta, x, y, \varphi, \psi) \}.
\end{equation*}
The Lagrangian $L: TQ \to \R$ is 
\begin{equation}\label{Lsb}
L = \frac{1}{2} \brackets{m \parentheses{ \dot{x}^{2} + \dot{y}^{2} + R^{2}\dot{\theta}^{2} }+ 2J_{0}\,\dot{\theta}\,\dot{\psi}+ 2J_{1}\,\dot{\varphi}^{2}+ J_{0}\,\dot{\psi}^{2}}
\end{equation}
and whenever $\varphi \neq 0,\pi$, the nonholonomic constraints can be represented by the annihilator of the one-forms
\begin{equation}\label{epsbdef}
\epsilon^x=dx + R \cot\varphi\,\cos\theta\,d\theta, \qquad \epsilon^y=dy + R \cot\varphi\,\sin\theta\,d\theta.
\end{equation}

The Lie group $G=\R^2 \times \mathbb{S}^1$ is a symmetry of the nonholonomic system with the action on $Q$ given by
\begin{equation*}
G \times Q \to Q; \quad \parentheses{(a, b, \beta), (\theta, x, y, \varphi, \psi )} \mapsto (\theta, x + a, y + b, \varphi, \psi+\beta).
\end{equation*}
The distribution $D$ and a vertical complement $W$ \eqref{Eq:D+W} of $D$ in $TQ$ are given by
\begin{equation} \label{Ex:Snake:D+W}
 D= \textup{span} \left\{ \frac{\partial}{\partial \theta} - R \cos\theta\cot\varphi \frac{\partial}{\partial x} - R \sin\theta\cot\varphi \frac{\partial}{\partial y} , \frac{\partial}{\partial \varphi}, \frac{\partial}{\partial \psi}\right\} \quad \mbox{and} \quad W=\textup{span}\left\{ \frac{\partial}{\partial x} , \frac{\partial}{\partial y} \right\}.
\end{equation}
As was observed in \cite{paula}, $W$ satisfies the vertical symmetry condition \eqref{Eq:VerticalSymmetries} for the Lie algebra $\mathfrak{g}_\subW = \R^2$.
\bigskip

\noindent {\bf Description of the almost symplectic leaves.}  Let \ $(\theta, x, y, \varphi, \psi ; p_\theta, p_\varphi, p_\psi, p_x, p_y)$ \ be \ the \ coordinates \ associated \ to  \ the \ basis \ $\{ d\theta, d\varphi, d\psi, \epsilon^x, \epsilon^y\}$, which is dual to \eqref{Ex:Snake:D+W}.
The associated constrained momentum space $ \kappa^\sharp(D) = \mathcal{M} \subset T^{*}Q$ is given by
$$
\M = \left \{ p_x= -\frac{m R \cos\theta \sin ^2 \varphi \cot
\varphi}{mR^2 - {J}_0 \sin^2\varphi}  ( p_\theta - p_\psi),
\ \ \ \  p_y = -\frac{mR \sin\theta  \sin ^2 \varphi \cot
\varphi}{mR^2 - {J}_0 \sin^2\varphi}  ( p_\theta - p_\psi)
\right \}. $$



The Lie group $G_\subW = \R^2$ ---defining the translational symmetry--- is a normal subgroup of $G$ and has Lie algebra $\mathfrak{g}_\subW=\R^2$. Thus the system is $G_\subW$-Chaplygin and the compressed nonholonomic motion takes place in $T^*\bar{Q}$, where $\bar{Q} = SO(2) \times \mathbb{S}^1\times \mathbb{S}^1$ with coordinates $(\theta, \varphi,\psi)$. The 2-form on $T^*\bar{Q}$ describing the dynamics is $\bar{\Omega} = \Omega_{\bar{Q}} - \overline{\langle \mathcal{J}, \mathcal{K}_\subW \rangle}$, where
\begin{equation} \label{Ex:Snake:JKRed}
\overline{\langle {\mathcal J}, \mathcal{K}_\subW \rangle}=  - \frac{mR^2 \cot\varphi}{mR^2 - {J}_0 \sin^2\varphi} ( p_\theta - p_\psi)
d\theta \wedge d\varphi.
 \end{equation}

The action of the Lie group $H = G/G_\subW = \mathbb{S}^1$---which represents rotations in the $\psi$-variable---is a symmetry of the compressed system, with the action on $\bar{Q}$ defined as in \eqref{Eq:H-action}:
\begin{equation*}
H \times \bar{Q} \rightarrow \bar{Q};
    \quad
\parentheses{ \beta, (\theta, \varphi, \psi) } \mapsto (\theta, \varphi, \psi + \beta).
\end{equation*}

Since  $\overline{\langle {\mathcal J}, \mathcal{K}_\subW \rangle}$ is basic with respect to the orbit projection $\rho_H:T^*\bar{Q} \to
T^*\bar{Q}/H$, by Corollary \ref{C:JKbasic=conservationLaws} the canonical momentum map $\bar{J}:T^*\bar{Q}\to \R$ is a momentum map for $\Omega_{\mbox{\tiny{$\bar{Q}$}}} - \overline{\langle \mathcal{J}, \mathcal{K}_\subW \rangle}$. Thus, for $\xi = 1 \in \R$, the function $\bar{J}_\xi = g_\xi = p_\psi$ is conserved by the compressed motion. Moreover, recalling that $\varrho: \mathfrak{g}\to \mathfrak{h}$ and since $\varrho(0,0,1) =1$, by Props.\,\ref{Prop:f-conserved} and \ref{Prop:g_xi}, $f_{\mbox{\tiny{$(0,0,1)$}}} = \rho_H^*g_1 = p_\psi$ is conserved by the nonholonomic motion $X_\nh$.

%
%
%

At each $\mu_\psi \in \mathfrak{h}^*$ the Marsden-Weinstein reduction (Section~\ref{Ss:2S-Reduction}) gives the almost symplectic manifolds $(\bar{J}^{-1}(\mu_\psi)/H, \bar\omega_{\mu_\psi})$, where $\bar{J}^{-1}(\mu_\psi)/H$ is represented by the coordinates $(\theta, \varphi,  p_\theta,  p_\varphi, \mu_\psi= p_\psi)$ and the almost symplectic 2-form is given by $\bar\omega_{\mu_\psi} = {\Omega}_{\mu_\psi} - [\overline{\langle {\mathcal J}, \mathcal{K}_\subW \rangle}_\red]_{\mu_\psi}$,
where ${\Omega}_{\mu_\psi}$ is the 2-form obtained by the reduction of the canonical symplectic manifold $(T^*\bar{Q}, \Omega_{\bar{Q}})$ and $[\overline{\langle {\mathcal J}, \mathcal{K}_\subW \rangle}_\red]_{\mu_\psi}$ is the 2-form \eqref{Ex:Snake:JKRed} on $\bar{J}^{-1}(\mu_\psi)/H$.

Now, since $\overline{\langle {\mathcal J}, \mathcal{K}_\subW \rangle}$ is basic with respect to the orbit projection $\rho_H:T^*\bar{Q} \to T^*\bar{Q}/H$ then reduced bracket $\{\cdot , \cdot \}_\red$ is twisted Poisson (Theorem \ref{T:Reduced-Dyn}).
Thus, by Theorem \ref{T:Equivalence}, $(\bar{J}^{-1}(\mu_\psi)/H, {\Omega}_{\mu_\psi} - [\overline{\langle {\mathcal J}, \mathcal{K}_\subW \rangle}_\red]_{\mu_\psi})$ are the almost symplectic leaves of the reduced bracket $\{\cdot , \cdot \}_\red$.
Moreover, using Lemma \ref{L:identifications} and \eqref{Eq:Ham-Red_Sympl_mfld} these almost symplectic leaves are symplectomorphic to
\begin{equation} \label{Ex:Snake:RedMfld}
(T^*(\mathbb{S}^1\times\mathbb{S}^1), \Omega_{\mbox{\tiny{$\mathbb{S}^1\!\!\times\!\mathbb{S}^1$}}} - \beta_{\mu_\psi} -  \Upsilon^*[\overline{\langle {\mathcal J}, \mathcal{K}_\subW \rangle}_\red]_{\mu_\psi}),
\end{equation}
where $\Omega_{\mbox{\tiny{$\mathbb{S}^1\!\!\times\!\mathbb{S}^1$}}}$ is the canonical form on $T^*(\mathbb{S}^1\times \mathbb{S}^1)$ and $\beta_{\mu_\psi} = \langle \mu_\psi, d\tilde{\mathcal{A}}\rangle$ for $\tilde{\mathcal{A}} :T\bar{Q} \to \mathfrak{h}^*$ a principal connection (see \eqref{Ap:DefBeta}).
\bigskip

\noindent {\bf Coordinate version of the Lagrange-Routh equations.}  \ The compressed Lagrangian $l: T\bar{Q} \to \R$, given by the kinetic energy $\bar\kappa$ (see e.g. \cite{KoMa1998}), is
\begin{equation}\label{Ex:sb:redl}
l(\theta,\varphi,\psi, \dot\theta, \dot\varphi,\dot\psi)=\frac{1}{2}mR^2\csc^2\varphi \, \dot{\theta}^2+\frac{1}{2}J_0 \dot{\psi}^2+J_0\dot{\psi}\dot{\theta}+J_1\dot{\varphi}^2.
\end{equation}

\begin{remark}
From Section \ref{Ss:TheRouthian} it follows that $\frac{d}{dt}\frac{\partial l}{\partial \dot{\psi}} = 0$.
Therefore, we can now carry out Routhian reduction to further reduce the dynamics to the level set $p_{\psi}=\mu_{\psi}$. Instead of proceeding with the reduction, we will first change variables on $T\bar{Q}$ using the mechanical connection in order to discuss the reduced Routhian $\mathfrak{R}_{\mu_{\psi}}$ and the function $\mathfrak{C}_{\mu_{\psi}}$.
\end{remark}

Recall that the vertical space $V_H$ associated to the $H$-action on $\bar{Q}$ is generated by $\partial/\partial \psi$. Using \eqref{Ex:sb:redl} the mechanical connection $\bar{\mathcal A}:T\bar{Q} \to \R$ is then given by the 1-form $\bar{\mathcal{A}}= d\psi - d\theta$.
This connection induces a basis on $T\bar{Q}$:
$$
\mathcal{B} = \{ Z_\theta:= \frac{\partial}{\partial \theta} - \frac{\partial}{\partial \psi} , \ Z_\varphi := \frac{\partial}{\partial \varphi} , \ Z_\psi :  = \frac{\partial}{\partial \psi} \},
$$
where $Z_\theta$ and $Z_\varphi$ are both $\bar\kappa$-orthogonal to the vertical vector field $Z_\psi$. Let us denote by $(\dot\theta, \dot\varphi, \eta)$ the coordinates associated to $\mathcal{B}$ and by $(\tilde p_\theta, \tilde p_\varphi ,\tilde p_\eta =  p_\psi)$ the coordinates associated to the dual basis $\{ d\theta, d\varphi, d\psi + d\theta \}$ on $T^*\bar{Q}$.
In the local coordinates $(\theta, \varphi, \psi, \dot\theta, \dot\varphi, \eta)$, \eqref{Ex:sb:redl} becomes
$$ 
l(\theta, \varphi, \psi, \dot\theta, \dot\varphi, \eta)=\left(\frac{mR^2-J_0\sin^2\varphi}{2\sin^2\varphi}\right)\dot{\theta}^2+J_1\dot{\varphi}^2+\frac{J_0}{2}\eta^2,
$$
which now has the form \eqref{Eq:RedLagrangian}, and the (basic) 2-form $\overline{\langle {\mathcal J}_L, \mathcal{K}_\subW \rangle}$ is
\begin{equation} \label{Ex:Snake:JKLine}
\overline{\langle {\mathcal J}_L, \mathcal{K}_\subW \rangle} =  - \frac{mR^2 \cot \varphi}{\sin^2\varphi}  \dot \theta d\theta \wedge d\varphi.
\end{equation}

As we have already observed, since $\overline{\langle {\mathcal J}_L, \mathcal{K}_\subW \rangle}$ is basic with respect to $\phi_H:T\bar{Q}\to T\bar{Q}/H$, then $\partial l/\partial \eta = \mu_\psi$ is a conserved quantity and we can therefore define the Routhian $R^{\mu_\psi}$ on $J_l^{-1}(\mu_\psi)$ as in \eqref{re}:
$$ 
R^{\mu_\psi}(\theta, \varphi, \psi; \dot \theta, \dot \varphi)=\left(\frac{mR^2-J_0\sin^2\varphi}{2\sin^2\varphi}\right)\dot{\theta}^2+J_1\dot{\varphi}^2-\frac{\mu^2_{\psi}}{2J_0}.
$$


The Routhian $R^{\mu_\psi}$ is $H$-invariant and thus, following \eqref{Eq:RedRouthian} and since $\bar{J}_l^{-1}(\mu_\psi)/H \simeq T(\mathbb{S}^1\times \mathbb{S}^1)$,  we obtain the {\it reduced Routhian} $\mathfrak{R}^{\mu_\psi} : T(\mathbb{S}^1 \times \mathbb{S}^1) \to \R$.
Finally, the {\it reduced Lagrangian} $\mathfrak{L}:T(\mathbb{S}^1\times\mathbb{S}^1) \to \R$,
\begin{equation} \label{Ex:Snake:RedLagrangian}
\mathfrak{L}(\theta, \varphi; \dot \theta, \dot \varphi) = \left(\frac{mR^2-J_0\sin^2\varphi}{2\sin^2\varphi}\right)\dot{\theta}^2+J_1\dot{\varphi}^2,
\end{equation}
is defined by the {\it reduced kinetic energy} on $T(\mathbb{S}^1\times\mathbb{S}^1)$, and the function $\mathfrak{C}_{\mu_\psi}$ is the constant function on $\mathbb{S}^1 \times \mathbb{S}^1$ given by
$\displaystyle{\mathfrak{C}_{\mu_\psi} = \frac{\mu_\psi^2}{2J_0} }$, so that $\mathfrak{R}^{\mu_\psi} =  \mathfrak{L} - \mathfrak{C}_{\mu_\psi}$.

To derive the nonholonomic version of the Lagrange-Routh equation, first observe that the 2-form $\overline{\langle {\mathcal J}_L, \mathcal{K}_\subW \rangle}$ does not depend on $\eta$, that is, the local expression of $[\overline{\langle {\mathcal J}_L, \mathcal{K}_\subW \rangle}_\red]_{\mu_\psi}$ on $J^{-1}_l(\mu_\psi)/H$ coincides with \eqref{Ex:Snake:JKLine}. Second, note also that the structure constants of the Lie algebra $\mathfrak{h}$ are zero and that the curvature associated to the mechanical connection $\bar{\mathcal{A}}$ is also zero.  Therefore, from Proposition \ref{P:Lag-Routh} the reduced dynamics becomes
\begin{eqnarray}
\displaystyle \ddot{\theta}\left(mR^2 -J_0\sin^2\varphi \right)- mR^2 \cot\varphi\,\dot{\varphi}\,\dot{\theta} & = &0, \nonumber \\
2 J_1\ddot{\varphi} - mR^2 \frac{\cot \varphi}{\sin^2\varphi} \dot \theta^2 & = & 0, \label{sb3rdeq2} \\
\dot{\mu}_{\psi} & = & 0, \nonumber
\end{eqnarray}
which is the nonholonomic version of the Lagrange-Routh equations.

Next we study the almost symplectic manifolds  $(\bar{J}^{-1}(\mu)/H, \bar\omega_{\bar\mu})$ (as in Section \ref{Ss:TheRouthian}).
\bigskip

\noindent {\bf Intrinsic version of the Lagrange-Routh equations.}  \ Since the mechanical connection $\bar{\mathcal{A}}$ has zero curvature, then $\beta_{\mu_\psi} =d[\mu_\psi \bar{\mathcal{A}}]=0$ and thus the 2-form on $T^*(\mathbb{S}^1\times \mathbb{S}^1)$ defining the dynamics is simply $\Omega_{\mbox{\tiny{$\mathbb{S}^1\!\!\times\!\mathbb{S}^1$}}} -  \Upsilon^*[\overline{\langle {\mathcal J}, \mathcal{K}_\subW \rangle}_\red]_{\mu_\psi}$. where
\begin{equation} \label{Ex:Snake:JKRedMu}
[\overline{\langle {\mathcal J}, \mathcal{K}_\subW \rangle}_\red]_{\mu_\psi} =  - \frac{mR^2 \cot\varphi}{mR^2 - {J}_0 \sin^2\varphi} \tilde p_\theta d\theta \wedge d\varphi.
 \end{equation}

The {\it nonholonomic Lagrange-Routh equations} \eqref{sb3rdeq2} on $T(\mathbb{S}^1 \times \mathbb{S}^1)$ are the equations of motion given by the almost symplectic 2-form
$$
\Omega_{\mathfrak{L}} - \Upsilon_l^*[\overline{\langle {\mathcal J}_L, \mathcal{K}_\subW \rangle}_\red]_{\mu_\psi},$$
where  $\Omega_{\mathfrak{L}}$ is the Lagrangian 2-form associated to the {\it reduced Lagrangian} $\mathfrak{L}:T(\mathbb{S}^1\times\mathbb{S}^1) \to \R$
given in \eqref{Ex:Snake:RedLagrangian}
and $\Upsilon_l^*[\overline{\langle {\mathcal J}_L, \mathcal{K}_\subW \rangle}_\red]_{\mu_\psi}$ coincides with the formulation given in local coordinates in \eqref{Ex:Snake:JKLine}.
The associated energy $\mathcal{E}_{\mu_\psi}$ is given by $\mathcal{E}_{\mu_\psi} =  E_{\mathfrak{L}}+ \mathfrak{C}_{\mu_\psi}$, where $E_{\mathfrak{L}}$ is the Lagrangian energy associated to \eqref{Ex:Snake:RedLagrangian}, and $\mathfrak{C}_{\mu_\psi}$ is the constant function on $\mathbb{S}^1 \times \mathbb{S}^1$ given by
$\displaystyle{\mathfrak{C}_{\mu_\psi} = \mu_\psi^2/2J_0 }$.
\bigskip

\noindent {\bf The conformal factor. } \ Since $\beta_{\mu_{\psi}}=0$, condition \eqref{eq:stcond} is given by
$$\frac{\partial f}{\partial \theta} \tilde p_\varphi - \frac{\partial f}{\partial \varphi}  \tilde p_\theta + f \, mR^2\sin^{-2}\varphi  \cot \varphi \,  \dot \theta = 0.$$ That is,
\begin{equation*}
 0 = \frac{\partial f}{\partial \theta} 2 J_1 \qquad \mbox{and} \qquad  0 =  - \frac{\partial f}{\partial \varphi} (mR^2 - J_0\sin^2\varphi)  + f  mR^2 \cot \varphi.
\end{equation*}
Integrating we obtain
\begin{equation} \label{Ex:Snake:Conf-Factor}
f(\theta,\varphi)=\frac{\sin\varphi}{\sqrt{ m R^{2} - J_{0} \sin^{2}\varphi }}.
\end{equation}

Now, as discussed in Section \ref{sec:conf} the sufficient condition \eqref{eq:stcond} for Hamiltonization via the Stanchenko and Chaplygin approaches are the same. Thus, \eqref{Ex:Snake:Conf-Factor} accomplishes the Hamiltonization of the almost symplectic manifold \eqref{Ex:Snake:RedMfld} (with $\beta_{\mu_\psi} = 0$ and $[\overline{\langle {\mathcal J}, \mathcal{K}_\subW \rangle}_\red]_{\mu_\psi}$ given in \eqref{Ex:Snake:JKRedMu}) in both the Stanchenko approach---since $d(f(\theta,\varphi)(\Omega_{\mbox{\tiny{$\mathbb{S}^1\!\!\times\!\mathbb{S}^1$}}} - \Upsilon^*[\overline{\langle {\mathcal J}, \mathcal{K}_\subW \rangle}_\red]_{\mu_\psi})$ is closed---and in the Chaplygin approach, since the dynamics can be written in terms of the $\Psi_f$-related vector field $\bar{X}^{\bar\mu}_C$ and the canonical 2-form $\Omega_{\mbox{\tiny{$\mathbb{S}^1\!\!\times\!\mathbb{S}^1$}}}$ on $T^*(\mathbb{S}^1\times \mathbb{S}^1)$, as in Proposition \ref{chapcfprop}. The latter result reproduces the findings in \cite{Oscar}. Note also that condition \eqref{Eq:PDE-condition} is satisfied since, in coordinates $(\tilde p_\theta,\tilde p_\varphi, \tilde p_\eta)$ induced by the mechanical connection, the 2-form $\overline{\langle {\mathcal J}, \mathcal{K}_\subW \rangle}_\red$ does not depend on $\mu_\psi$ and the curvature terms $F_{ij}^a$ associated to this connection are zero.


Finally, from Theorem \ref{T:Reduced-Dyn} it follows that the reduced nonholonomic bracket $\{ \cdot, \cdot \}_\red$ is twisted Poisson, since $\langle {\mathcal J}, \mathcal{K}_\subW \rangle$ is basic.
In coordinates $(\theta, \varphi, \tilde p_\theta, \tilde p_\varphi, \tilde p_\psi)$ on $\M/G$, the only nonzero entries of that are
$$
\{ \theta, \tilde p_\theta\}_\red = 1, \quad \{ \varphi, \tilde p_\varphi\}_\red = 1, \quad \{ \tilde p_\varphi, \tilde p_\theta\}_\red =  \frac{mR^2 \cot\varphi}{mR^2 - {J}_0 \sin^2\varphi} \tilde p_\theta.
$$
The function $g$ on $\M/G$ given by
$$
g(\theta,\varphi) = \frac{1}{f(\theta, \varphi)} = \frac{\sqrt{ m R^{2} - J_{0} \sin^{2}\varphi }}{\sin\varphi},
$$
is a conformal factor for the twisted Poisson bracket $\{ \cdot, \cdot \}_\red$.   That is, we have a Hamiltonization in the context of twisted Poisson brackets, since $g \{\cdot, \cdot\}_\red$ is Poisson.  It is worth mentioning that the conformal factor already appear in \cite{Oscar}. However, in this case, we computed the conformal factor as a conformal factor on each leaf of the almost symplectic foliation \eqref{Eq:Lag-Red_Sympl_mfld} which in our case is $T(\mathbb{S}^1 \times \mathbb{S}^1)$ having to deal with less variables as in \cite{Oscar} and thus obtaining simpler calculations. 

\begin{remark}
Without the velocity shift induced by the mechanical connection we could still Hamiltonized the compressed reduced system \eqref{Eq:Hamilt-MWreduction} in certain cases using the results of \cite{BorMa} and \cite{Fernandez}. There the Hamiltonization of Chaplygin systems whose constraint reactions forces contain linear velocity terms was studied, and it was shown that in general after rescaling by the conformal factor one obtains non-canonical {\em Poisson} brackets that depend on the level $\mu$ associated with the $H$ reduction. The particular example of the snakeboard was discussed in \cite{Fernandez}. \end{remark}



\subsection{The Chaplygin Ball} \label{Ss:Ex:Ball}

The Chaplygin ball \cite{ChapB,BorMa2001} is an inhomogeneus sphere of radius $r$ and mass $m$ whose center of mass coincides with the geometric center, and that rolls without sliding on a plane.

The treatment of this example will follow \cite{paula} and \cite{Hoch} (unifying both approaches to Hamiltonization). The configuration manifold is $Q= SO(3) \times \R^2$ and the non-sliding constraints are given by
$$ \dot x = r \omega_2 \qquad \mbox{and} \qquad \dot y = -r \omega_1,
$$
where $(x,y)$ represents the center of mass of the ball and $\vecom = (\omega_1, \omega_2, \omega_3)$ is the angular velocity of the body with respect to an inertial frame.
If $g \in SO(3)$ is the orthogonal matrix that specifies the orientation of the ball by relating the orthogonal frame attached to the body with the one fixed in space, then the angular velocity in space coordinates $\vecom$ and in body coordinates $\vecOm=(\Omega_1, \Omega_2, \Omega_3)$ are related by $\vecom=g\vecOm$.
The Lagrangian is given by the kinetic energy
$$
\mathcal{L}(g,{\bf x}; \vecOm , \dot {\bf x}) = \kappa((\vecOm , \dot {\bf x}), (\vecOm , \dot {\bf x})) =
\frac{1}{2}\langle \mathbb{I} \vecOm,   \vecOm \rangle + \frac{m}{2}||\dot {\bf x}||^2,
$$
where $\mathbb{I}$ is the inertia tensor that is represented as a diagonal $3\times 3$ matrix whose positive entries $\mathbb{I}_1, \mathbb{I}_2, \mathbb{I}_3$, are the principal moments of inertia.

Let $\vecL = (\lambda_1,\lambda_2, \lambda_3)$ and $\vecR=(\rho_1, \rho_2, \rho_3)$ be the left and right Maurer-Cartan 1-forms on $\mathfrak{so}(3)\simeq \R^3$, respectively; if $v_g \in T_gSO(3)$ then $\vecom= \vecR(v_g)$ and $\vecOm=\vecL(v_g)$.  Therefore the constraints 1-forms, in terms of body coordinates, are written as
$$
\epsilon^x= dx - r \langle \vecbeta ,\vecL \rangle = dx - r\beta_i \lambda_i \qquad \mbox{and} \qquad \epsilon^y = dy + r \langle \vecalpha , \vecL \rangle = dy + r \alpha_i \lambda_i,
$$
where $\vecalpha = (\alpha_1, \alpha_2, \alpha_3)$ and $\vecbeta = (\beta_1, \beta_2, \beta_3)$ are the first and second rows of the matrix $g \in SO(3)$.

Following \cite{Naranjo2008}, consider the Lie group $G=\{ (h,a) \in SO(3)\times \R^2 \ : \ h{\bf e}_3 = {\bf e}_3\}$, where ${\bf e}_3$ is the canonical vector ${\bf e}_3=(0,0,1)$, and its action on $Q$ given by
\begin{equation} \label{Ex:Ball:G-action}
(h,a): (g,{\bf x}) \mapsto (hg, h{\bf x} + a) \in \SO(3) \times \R^3.
\end{equation}
Then $G$ is a symmetry of the nonholonomic system, and in this case the vertical space $V$ is given by
$$
V= \textup{span} \left\{ \vecgamma \cdot {\bf X}^{\mbox{\tiny left}}
- y  \frac{\partial}{\partial x} + x  \frac{\partial}{\partial y}\, , \,
\frac{\partial}{\partial x}\, , \, \frac{\partial}{\partial y}  \right\},
$$
where $\vecgamma = (\gamma_1, \gamma_2,\gamma_3)$ is the third row of the matrix $g \in \SO(3)$ and ${\bf X}^{\mbox{\tiny left}} = (X^{\mbox{\tiny left}}_1, X^{\mbox{\tiny left}}_2, X^{\mbox{\tiny left}}_3)$ is the moving frame of $SO(3)$ dual to $\vecL$. Therefore, we have a basis of $TQ$ adapted to $D\oplus W$, where $W$ is a vertical complement of $D$ in $TQ$:
\begin{equation} \label{Ex:Ball:D+W}
D =  \textup{span} \left\{ X_1^{\mbox{\tiny left}} + r\beta_1 \frac{\partial}{\partial x} -r\a_1 \frac{\partial}{\partial y} , X_2^{\mbox{\tiny left}} + r\beta_2 \frac{\partial}{\partial x} -r\a_2 \frac{\partial}{\partial y}, X_3^{\mbox{\tiny left}}  \right\} \quad \mbox{and} \quad W = \textup{span} \left\{ \frac{\partial}{\partial x},  \frac{\partial}{\partial y} \right\}.
\end{equation}
The distribution $W$ satisfies the vertical-symmetry condition for the Lie algebra $\mathfrak{g}_\subW = \R^2$,  as was observed in \cite{paula}.
\bigskip

\noindent {\bf Reduction in two stages and the (almost) symplectic foliation.}
For ${\bf K}=(K_1, K_2, K_3)$, let us denote by $(g,x,y,{\bf K},  p_x,  p_y)$ the coordinates of $T^*Q$ associated the dual basis to $\{\vecL, \epsilon^x, \epsilon^y\}$ of \eqref{Ex:Ball:D+W}.
As was computed in \cite{PL2011} and \cite{Naranjo2008},
$$\M =  \{(g,{\bf x} ; {\bf K},  {\bf p} ) \in T^*Q \ : \ {\bf K} = \mathbb{I}\vecOm + mr^2 \langle \vecgamma , \vecOm \rangle \vecgamma  \quad \mbox{and} \quad  p_x =m r \langle \vecbeta ,  \vecOm \rangle , \ p_y = - m r \langle \vecalpha ,  \vecOm \rangle  \}.$$






The Lie subgroup $G_\subW = \R^2$ of $G$, acting by translation on $Q$, has Lie algebra $\mathfrak{g}_\subW = \R^2$ and it is a symmetry of the nonholonomic system.  As studied in \cite{Hoch}, the compressed motion takes place in $T^*(SO(3))$ and is described by the almost symplectic 2-form
\begin{equation} \label{Ex:Ball:barOmega}
 \bar{\Omega} = \Omega_{SO(3)} - \overline{\langle \mathcal{J}, \mathcal{K}_\subW \rangle},
\end{equation}
where
\begin{equation}\label{Ex:Ball:JK}
\overline{\langle \mathcal{J}, \mathcal{K}_\subW \rangle} = r^2 m \langle \vecOm , \vecL \times \vecL \rangle -  r^2 m \langle \vecgamma , \vecOm \rangle \vecgamma  \cdot d\vecgamma \times d\vecgamma.
\end{equation}

To compute \eqref{Ex:Ball:JK} we follow \cite{paula} where $d\vecgamma = \vecgamma \times \vecL$ and $d\vecL = (\lambda_2\wedge\lambda_3, \lambda_3\wedge\lambda_1, \lambda_1\wedge\lambda_2)$.



The $G$-action \eqref{Ex:Ball:G-action} induces the action of the Lie group $H = G / \R^2 = SO(2)$ on $SO(3)$ as in \eqref{Eq:H-action}, that is, $\varphi_H: H \times SO(3) \to SO(3)$ is given by
$\varphi_H(h, g) = hg$ where $h \in SO(2)$ is viewed as an element of $SO(3)$ such that $h{\bf e}_3 = {\bf e}_3$ for ${\bf e}_3=(0,0,1)$ and $g\in SO(3)$.
The $H$-action lifted to $T^*SO(3)$ is a symmetry of the compressed nonholonomic system \eqref{Eq:ChapReduction} (this is the Lie group used in \cite{Hoch} ad-hoc). The orbit projection $\rho_H:T^*(SO(3)) \to T^*(SO(3))/H$ is given in coordinates by $\rho_H(g, {\bf K}) = (\vecgamma, {\bf K})$, where $\vecgamma$ is the third row of the orthogonal matrix $g$, since ${\bf K}$ are left-invariant coordinates.

Since $W$ satisfies the vertical-symmetry condition, Prop.~\ref{Prop:f-conserved} asserts that, for $\eta=(1;0,0)$ the function $f_\eta = \langle J^\nh, P_{\mathfrak{g}_\subS}(\eta) \rangle  = \langle J^\nh, (1; -y,x) \rangle= \langle \vecgamma , {\bf K}\rangle$ drops to the well-defined function $g_1 = \langle \vecgamma , {\bf K}\rangle$ on $T^*(SO(3))$ (note that $\varrho (1;0;0) = 1 = \xi \in \mathfrak{h}$).
By Prop.~\ref{Prop:g_xi} the function $g_1$ coincides with the canonical momentum map on  $T^*(SO(3))$. Observe also that $\xi_{\mbox{\tiny{$T^*SO(3)$}}} = \vecgamma \cdot {\bf X}^{\mbox{\tiny{left}}}$.
Moreover, one can check that $\overline{\langle \mathcal{J}, \mathcal{K}_\subW \rangle} (\bar{X}_\nh, \vecgamma \cdot {\bf X}^{\mbox{\tiny{left}}} ) = 0$  and thus by Prop. \ref{Prop:JK=0}, $g_1 = \langle \vecgamma , {\bf K}\rangle$ (respectively $f_\eta$) is conserved by the compressed motion \eqref{Eq:NHdynamics} (resp. by the nonholonomic motion  \eqref{Eq:ChapReduction}) as it was observed in \cite{BorMa2001}. However, $\bar{J}: T^*SO(3) \to \R$, given at each $\xi \in \R$ by ${\bar J}_\xi = \xi g_1$, is not a momentum map for $\bar\Omega$ since $\overline{\langle \mathcal{J}, \mathcal{K}_\subW \rangle}$ given in \eqref{Ex:Ball:JK} is not basic with respect to $\rho_H:T^*\bar{Q} \to T^*\bar{Q}/H$.


Next, we look for a dynamical gauge $\bar{B}$ on $T^*\bar{Q}$ so that the 2-form $\bar{\Omega} - \bar{B}$ admits the canonical momentum map $\bar{J}:T^*\bar{Q} \to \mathfrak{h}^*$  as a momentum map (see Section \ref{S:Gauge}).

Following \cite{Hoch}, consider now the dynamical gauge defined by the 2-form
$\bar{B}=-r^2 m \langle \vecOm , \vecL \times \vecL \rangle$ (it can be checked that ${\bf i}_{\bar{X}_\nh} \bar{B} = 0$). By Corollary \ref{C:B+JK-Basic}, since $\overline{\langle \mathcal{J}, \mathcal{K}_\subW \rangle}+\bar{B}$ is basic with respect to the orbit projection $\rho_H:T^*(SO(3)) \to T^*(SO(3))/H$ then the canonical momentum map $\bar{J}: T^*(SO(3)) \to \mathfrak{h}^*$ becomes a momentum map of
$$
\bar{\Omega} - \bar{B} = -d(K_i \lambda_i)  + r^2 m \langle \vecgamma , \vecOm \rangle \vecgamma  \cdot d\vecgamma \times d\vecgamma.
$$

Let us denote by $\mathfrak{B}$ the 2-form on $T^*SO(3)/H$ given by
\begin{equation}\label{Ex:Ball:Bred}
\mathfrak{B} = - r^2 m \langle \vecgamma , \vecOm \rangle \vecgamma  \cdot d\vecgamma \times d\vecgamma.
\end{equation}
Therefore, for each $\bar\mu \in \mathfrak{h}^* \simeq \R$, we can perform a Marsden-Weinstein reduction obtaining the almost symplectic leaves $(\bar{J}^{-1}(\bar\mu)/H, \bar\omega_{\bar\mu}^{\bar\B})$ with $\bar\omega_{\bar\mu}^\B =  \Omega_{\bar\mu} - \mathfrak{B}_{\bar\mu}$ and where $\Omega_{\bar\mu}$ is the reduction of $\Omega_{\mbox{\tiny{$SO(3)$}}} = -d\langle {\bf K} , \vecL\rangle$ and $\mathfrak{B}_{\bar\mu}$ is the restriction of $\mathfrak{B}$ to $\bar{J}^{-1}(\bar\mu)/H$.
 From Lemma \ref{L:identifications} we conclude that the leaves $(\bar{J}^{-1}(\bar\mu)/H, \bar\omega_{\bar\mu}^{\bar\B})$ are symplectomorphic to
\begin{equation} \label{Ex:Ball:ReducedFolation}
(T^*(\mathbb{S}^2), \Omega_{\mbox{\tiny{$\mathbb{S}^2$}}} - \beta_\mu - \Upsilon^*\mathfrak{B}_\mu),
\end{equation}
where $\Omega_{\mbox{\tiny{$\mathbb{S}^2$}}} = -d \langle \vecgamma \times {\bf K} , d\vecgamma \rangle$ is the canonical 2-form on $T^*(\mathbb{S}^2)$ and $\beta_\mu$ defined in \eqref{Ap:DefBeta} is given by $\beta_\mu = \mu \, d \langle \vecgamma , \vecL\rangle = - \mu \langle \vecgamma, d \vecgamma \times d \vecgamma \rangle$.
\bigskip

\noindent {\bf The twisted Poisson bracket $\{\cdot, \cdot \}_\red^\B$.}
Following Section \ref{S:Gauge}, consider the $G$-invariant 2-form $B = \rho_\subW^* \bar{B}$ on $\M$.  The gauge transformation of the nonholonomic bracket $\{\cdot, \cdot\}_\nh$ by the 2-form $B$ gives a $G$-invariant bracket $\{\cdot, \cdot \}_\nh^\B$ (see \cite{PL2011} for an expression in coordinates).
The reduction of $\{\cdot, \cdot \}_\nh^\B$ by the Lie group $G$ induces a reduced bracket $\{ \cdot, \cdot \}_\red^\B$ on $\M/G$.  This bracket $\{ \cdot, \cdot \}_\red^\B$ describes the dynamics and it is twisted Poisson since $\langle \mathcal{J}, \mathcal{K}_\subW \rangle + B$ is basic with respect to the orbit projection $\rho:\M \to \M/G$ (Theorem \ref{T:Reduced-Dyn}').

Theorem \ref{T:Gauge:equivalence} asserts that the twisted Poisson bracket $\{ \cdot, \cdot \}_\red^\B$ has almost symplectic leaves  symplectomorphic to \eqref{Ex:Ball:ReducedFolation}.
\bigskip

\noindent {\bf The conformal factor.}
Since the classical reduced bracket $\{ \cdot, \cdot \}_\red$ has a nonintegrable characteristic distribution \cite{Naranjo2008,PL2011}, it does not make sense to look for a conformal factor (as it was discussed in \cite{PL2011}).  However, it is possible to look for a conformal factor of the twisted Poisson bracket  $\{\cdot, \cdot \}_\red^\B$.

Following Sec.~\ref{sec:conf}, we look for a conformal factor on each almost symplectic leaf associated to the twisted Poisson bracket  $\{\cdot, \cdot \}_\red^\B$. That is, we look for a function $f \in C^\infty(\bar{Q}/H)$ such that $f (\Omega_{\mbox{\tiny{$\mathbb{S}^2$}}} - \beta_\mu - \Upsilon^*\mathfrak{B}_\mu)$ is closed.
Using that $\beta_\mu= - \mu \langle \vecgamma, d\vecgamma \times d\vecgamma \rangle$ is a top form on $\mathbb{S}^2$  we obtain that
$$
d \left( f (\Omega_{\mbox{\tiny{$\mathbb{S}^2$}}} - \beta_\mu - \Upsilon^*\mathfrak{B}_\mu) \right) = df \wedge  \Omega_{\mbox{\tiny{$\mathbb{S}^2$}}}  - f \Upsilon^* (d \mathfrak{B}_\mu) .$$
Therefore, $df \wedge  \Omega_{\mbox{\tiny{$\mathbb{S}^2$}}}  = f \Upsilon^* (d \mathfrak{B}_\mu)$ if and only if
$$
\frac{\partial f}{\partial \gamma_i} = mr^2 f \frac{\partial}{\partial K_i}\langle \vecgamma, \vecOm\rangle,
$$
where we used that $\Omega_{\mbox{\tiny{$\mathbb{S}^2$}}} =  -d\langle {\bf K}, d\vecgamma\times \vecgamma \rangle$ and also that $\Upsilon^*\mathfrak{B}_\mu$  is a multiple of a top form on $\mathbb{S}^2$ (see \eqref{Ex:Ball:Bred}).
Since $\partial \langle \vecgamma, \vecOm\rangle / \partial K_i  = A_i^{-1} \gamma_ i (1-mr^2\langle \gamma,A^{-1}\gamma\rangle)^{-1}$ for $A=\mathbb{I}+mr^2\textup{Id}$ and  $\textup{Id}$ the $3\times 3$ identity matrix (see \cite{Naranjo2008}),  we obtain a set of decoupled differential equations for $f$:
$$
\frac{\partial f}{\partial \gamma_i} = mr^2 f  A_i^{-1} \gamma_ i (1-mr^2\langle \gamma,A^{-1}\gamma\rangle)^{-1}.
$$
Therefore $2\ln (f) =  \ln (1-mr^2\langle \gamma,A^{-1}\gamma\rangle)$ and so $f(\gamma) = (1-mr^2\langle \gamma,A^{-1}\gamma\rangle)^{1/2}$  is a conformal factor for the almost symplectic leaf $(T^*(\mathbb{S}^2), \Omega_{\mbox{\tiny{$\mathbb{S}^2$}}} - \beta_\mu - \Upsilon^*\mathfrak{B}_\mu)$.
Then
$$
g(\gamma)=(1-mr^2\langle \gamma,A^{-1}\gamma\rangle)^{-1/2},
$$
is a conformal factor of $\{\cdot, \cdot \}_\red^\B$, i.e., $f\{\cdot, \cdot \}_\red^\B$ is a Poisson bracket on $T^*\bar{Q}/H$, as was shown in \cite{PL2011,BorMa2001}.
Observe that the conformal factor $f$ was found without using Stanchenko condition \eqref{eq:stcond}.
\bigskip

\noindent {\bf Coordinate version of the Lagrange-Routh equations.}
The (compressed) Lagrangian $l: TSO(3) \to \R$ is given by the reduced kinetic energy $\bar{\kappa}$,
$$l(g, \vecOm) = \frac{1}{2} \bar{\kappa}((g,\vecOm), (g,\vecOm)) = \frac{1}{2} [ \langle A\vecOm,  \vecOm \rangle  -  mr^2 \langle \vecOm , \vecgamma\rangle^2 ],$$
where $A=\mathbb{I} +mr^2\textup{Id}$.
The vertical space $V_H$ associated to the $H$-action is generated by the vector field $Z_v= \vecgamma \cdot {\bf X}^{\mbox{\tiny{left}}} = \gamma_ i X^{\mbox{\tiny{left}}}_i.$ Assuming that $\gamma_3 \neq 0$, we can complete the basis to obtain a (local) basis of $TSO(3)$ such that
$$
\bar{\mathcal{B}} = \{ Z_1:= \langle \mathbb{I} \vecgamma , \vecgamma \rangle X_1^{\mbox{\tiny{left}}} - \gamma_1 \mathbb{I}_1 Z_v, \ Z_2 : = \langle \mathbb{I} \vecgamma , \vecgamma \rangle X_2^{\mbox{\tiny{left}}} - \gamma_2 \mathbb{I}_2 Z_v, \ Z_v\},$$
where $Z_1, Z_2$ are both vector fields $\bar{\kappa}$-orthogonal to $Z_v$ (if $\gamma_3=0$ then $\gamma_1$ or $\gamma_2$ are different from zero and we can therefore complete the basis in an analogous way). Letting
$$
\bar{\mathcal{B}}^* = \left\{\epsilon^1 = \frac{(\vecgamma \times \vecL)_2}{\gamma_3\langle \vecgamma, \mathbb{I}\vecgamma \rangle} , \  \epsilon^2 =  \frac{-(\vecgamma \times \vecL)_1}{\gamma_3\langle \vecgamma, \mathbb{I}\vecgamma \rangle} ,  \ \epsilon^v = \frac{\langle \mathbb{I}\vecgamma, \vecL \rangle}{\langle \mathbb{I} \vecgamma, \vecgamma \rangle } \right\}
$$
be the dual basis of $\bar{\mathcal{B}}$, we then denote by $(\Gamma_ 1, \Gamma_2,\Gamma_v = \frac{\langle \mathbb{I}\vecOm, \vecgamma \rangle}{\langle \mathbb{I} \vecgamma, \vecgamma \rangle } )$ and $(p_1, p_2,p_v = \langle {\bf K} , \vecgamma \rangle)$ the coordinates associated to the bases $\bar{\mathcal{B}}$ and  $\bar{\mathcal{B}}^*$ respectively.
In these coordinates the Lagrangian $l : TSO(3) \to \R$ becomes
$$
l(g; \Gamma_ 1, \Gamma_2,\Gamma_v) = \frac{\langle \mathbb{I} \vecgamma , \vecgamma \rangle}{2} \left[ ( \langle \mathbb{I} \vecgamma , \vecgamma \rangle (A_i - mr^2 \gamma_i^2) - (\mathbb{I}_i\gamma_i)^2 ) \Gamma_i^2 -  \gamma_1 \gamma_2 ( \langle \mathbb{I} \vecgamma , \vecgamma \rangle +   \mathbb{I}_1\mathbb{I}_2 ) \Gamma_1\Gamma_2 +  \Gamma_v^2 \right],
$$
for $i=1,2$.
The manifold $\bar{J}_l^{-1}(\bar\mu)$ is represented in local coordinates by $(g; \Gamma_1, \Gamma_2, \Gamma_v=\frac{\mu}{\langle \mathbb{I}\vecgamma, \vecgamma\rangle} )$, and the Routhian $R^{\bar\mu}$ is given by
$$
R^\mu(g; \Gamma_1, \Gamma_2) = \frac{\langle \mathbb{I} \vecgamma , \vecgamma \rangle}{2} \left[ ( \langle \mathbb{I} \vecgamma , \vecgamma \rangle (A_i - mr^2 \gamma_i^2) - (\mathbb{I}_i\gamma_i)^2 ) \Gamma_i^2 -  \gamma_1 \gamma_2 ( \langle \mathbb{I} \vecgamma , \vecgamma \rangle +   \mathbb{I}_1\mathbb{I}_2 ) \Gamma_1\Gamma_2 \right] - \frac{1}{2}\frac{\mu^2}{\langle \mathbb{I} \vecgamma, \vecgamma \rangle}.
$$

The manifold $\bar{J}_l^{-1}(\bar\mu)/H$ is diffeomorphic to $T\mathbb{S}^2$ and it is represented by the local coordinates $(\vecgamma, \Gamma_1, \Gamma_2)$ which are associated to the local basis $\{-\langle \mathbb{I} \vecgamma , \vecgamma \rangle (\vecgamma \times \partial/\partial \vecgamma)_1, -\langle \mathbb{I} \vecgamma , \vecgamma \rangle (\vecgamma \times \partial/\partial \vecgamma )_2\}$.   Then the reduced Lagrangian $\mathfrak{L}:T\mathbb{S}^2 \to \R$ and the function $\mathfrak{C}_\mu$ on $\mathbb{S}^2$  are given by
$$
\mathfrak{L}(\vecgamma; \Gamma_1, \Gamma_2) = \frac{\langle \mathbb{I} \vecgamma , \vecgamma \rangle}{2}\!\left[ ( \langle \mathbb{I} \vecgamma , \vecgamma \rangle (A_i - mr^2 \gamma_i^2) - (\mathbb{I}_i\gamma_i)^2 ) \Gamma_i^2 -  \gamma_1 \gamma_2 ( \langle \mathbb{I} \vecgamma , \vecgamma \rangle +   \mathbb{I}_1\mathbb{I}_2 ) \Gamma_1\Gamma_2 \right]  \mbox{ and }  \mathfrak{C}_\mu = \frac{\mu^2}{2\langle \mathbb{I} \vecgamma, \vecgamma \rangle}.
$$
The 2-form $\Upsilon_l^* \mathfrak{B}_{\bar\mu}$ on $T\mathbb{S}^2$ is given by
$$
\Upsilon_l^* \mathfrak{B}_{\bar\mu} = m r^2\langle \mathbb{I}\vecgamma, \vecgamma \rangle\left( (\langle \mathbb{I}\vecgamma, \vecgamma \rangle - \mathbb{I}_1)\gamma_1\Gamma_1 + (\langle \mathbb{I}\vecgamma, \vecgamma \rangle - \mathbb{I}_2 )\gamma_2  \Gamma_2  + \frac{\mu}{\langle \mathbb{I}\vecgamma, \vecgamma \rangle} \right)\bar{\epsilon}^1 \wedge \bar{\epsilon}^2,
$$
where
$\displaystyle{ \bar{\epsilon}^1 = \frac{d \gamma_2}{\gamma_3\langle \mathbb{I}\vecgamma, \vecgamma \rangle}, \bar{\epsilon}^2 = -\frac{d\gamma_ 1}{\gamma_3\langle \mathbb{I}\vecgamma, \vecgamma \rangle} }$ is the dual local basis on $T^*\mathbb{S}^2$.

The results thus far show that the Lagrange-Routh equations are the equations of motion on $T\mathbb{S}^2$ described by the almost symplectic 2-form
\begin{equation} \label{Ex:Ball:Reduced2form}
\Omega_{\mathfrak{L}} - \mu\cdot \beta + \Upsilon_l^*\mathfrak{B}_{\bar\mu},
\end{equation}
where $\tau_H^*\beta = d\epsilon^v$ for $\tau_H:SO(3) \to \mathbb{S}^2$ (see \eqref{Ap:DefBeta}),
with energy
$$
\mathcal{E}_\mu =  \frac{1}{2} \langle \mathbb{I} \vecgamma , \vecgamma \rangle^2 [ (\mathbb{I}_i + mr^2)\Gamma_i^2 - mr^2 \gamma_i\gamma_j\Gamma_i\Gamma_j ] - \frac{1}{2} \langle \mathbb{I} \vecgamma , \vecgamma \rangle  \mathbb{I}_i\mathbb{I}_j\gamma_i \gamma_j \Gamma_i\Gamma_j+  \frac{1}{2} \frac{\mu^2}{\langle \mathbb{I} \vecgamma, \vecgamma \rangle}.
$$
Observe that in our local coordinates,
$$
\Omega_{\mathfrak{L}} = d \left(\frac{\partial \mathfrak{L}}{\partial \Gamma_1} {\bar\epsilon}^1 + \frac{\partial \mathfrak{L}}{\partial \Gamma_2} \bar\epsilon^2 \right) \qquad \mbox{and}  \qquad \beta =  \left( Tr(\mathbb{I}) - 2 \frac{\langle\mathbb{I}\vecgamma , \mathbb{I}\vecgamma\rangle}{\langle \vecgamma, \mathbb{I}\vecgamma \rangle} \right) \bar{\epsilon}^1\wedge \bar{\epsilon}^2.
$$
where $Tr(\mathbb{I})$ is the trace of the matrix $\mathbb{I}$.

In other words, the leaves of $\{\cdot, \cdot \}_\red^\B$ are symplectomorphic to $T\mathbb{S}^2$ with the almost symplectic form given in \eqref{Ex:Ball:Reduced2form}. The {\it nonholonomic Lagrange-Routh} equations are the equations of motion on each leaf.

\subsection{A Mathematical Example} \label{Ss:Ex:MM}

In this section we describe a nonholonomic version of one of the unconstrained systems studied in \cite{CM1}. The configuration space is
$$
Q=\mathbb{R}^4 \times SE(2)=\{(u,v,x_1,x_2,y,z,\theta)\}.
$$ 
The Lagrangian $L:TQ \to \mathbb{R}$ is
\begin{equation}\label{MM-Lag}
L=\frac{1}{2}\left(\dot{u}^2+\dot{v}^2+\dot{x}_1^2+\dot{x}_2^2+\dot{y}^2+\dot{z}^2+\dot{\theta}^2\right)+2\left((\sin\theta)\dot{z}+(\cos\theta)\dot{y}\right)\dot{\theta},
\end{equation}
and the nonholonomic constraints can be represented by the annihilator of the one-forms
\begin{equation}\label{MM-1fs}
\epsilon^{u}=du-(1+\cos x_1)dx_2, \qquad \epsilon^{v}=dv-(\sin x_1)dx_2.
\end{equation}
The Lie group $G=\mathbb{R}^2 \times SE(2)$ is a symmetry of the nonholonomic system with the action on $Q$ given by
$$
G \times Q \to Q; \quad ((a,b,c,d,\beta),(u,v,x_1,x_2,y,z,\theta)) \mapsto (u+a,v+b,x_1,x_2,y\cos\beta-x\sin\beta+c,y\sin\beta+z\cos\beta+d,\theta+\beta).
$$
The distribution $D$ and a vertical complement $W$ \eqref{Eq:D+W} of $D$ in $TQ$ are given by
\begin{equation}\label{MM-DW}
D=\textup{span}\left\{\frac{\partial}{\partial \theta}+(1+\cos x_1)\frac{\partial}{\partial u}+(\sin x_1)\frac{\partial}{\partial v},\frac{\partial}{\partial x_1},\frac{\partial}{\partial x_2},\frac{\partial}{\partial y}, \frac{\partial}{\partial z}\right\} \quad \text{and} \quad W=\textup{span}\left\{\frac{\partial}{\partial u},\frac{\partial}{\partial v}\right\}.
\end{equation}

It can be seen that $W$ satisfies the vertical-symmetry condition \eqref{Eq:VerticalSymmetries} for the Lie algebra $\mathfrak{g}_\subW = \R^2$. The Lie group $G_\subW = \R^2$ ---defining the translational symmetry--- is a normal subgroup of $G$ and has Lie algebra $\mathfrak{g}_\subW=\R^2$. Thus the system is $G_\subW$-Chaplygin and the compressed nonholonomic motion takes place in $T^*\bar{Q}$, where $\bar{Q} = \mathbb{R}^2 \times SE(2)$ with coordinates $(x_1,x_2,y,z,\theta)$.  The 2-form on $T^*\bar{Q}$ describing the dynamics is $\bar{\Omega} = \Omega_{\bar{Q}} - \overline{\langle \mathcal{J}, \mathcal{K}_\subW \rangle}$, where
\begin{equation} \label{Ex:MM:JKRed}
\overline{\langle {\mathcal J}, \mathcal{K}_\subW \rangle}=  (\sin x_1)p_{x_2}dx_2 \wedge dx_1.
 \end{equation}
 The action of the Lie group $H = G/G_\subW = SE(2)$ is a symmetry of the compressed system, with the action on $\bar{Q}$ defined as in \eqref{Eq:H-action}:
\begin{equation*}
H \times \bar{Q} \rightarrow \bar{Q};
    \quad
\parentheses{ (c,d,\beta), (x_1,x_2,y,z,\theta) } \mapsto (x_1,x_2,y\cos\beta-z\sin\beta+c,y\sin\beta+z\cos\beta+d,\theta+\beta).
\end{equation*}
Since  $\overline{\langle {\mathcal J}, \mathcal{K}_\subW \rangle}$ is basic with respect to the orbit projection $\rho_H:T^*\bar{Q} \to
T^*\bar{Q}/H$, by Corollary \ref{C:JKbasic=conservationLaws} the canonical momentum map $\bar{J}:T^*\bar{Q}\to \R$ is a momentum map for $\Omega_{\mbox{\tiny{$\bar{Q}$}}} - \overline{\langle \mathcal{J}, \mathcal{K}_\subW \rangle}$.
So now, we proceed to perform a almost symplectic reduction as in Section \ref{Ss:2S-H_action} and \ref{Ss:TheRouthian}. In order to do that, we observe that the mechanical connection $\mathcal{A}:T(\R^2\times SE(2)) \to \mathfrak{h}$ associated to the compressed lagrangian 
 $l: T\bar{Q} \to \R$
\begin{equation}\label{Ex:MM:redl}
l(x_1,x_2,y,z,\theta,\dot{x}_1,\dot{x}_2,\dot{y},\dot{z},\dot\theta)=\frac{1}{2}\left(\dot{x}^2_1+2(2+\cos x_1)\dot{x}^2_2+\dot{y}^2+\dot{z}^2+\dot{\theta}^2\right)+2\left((\sin\theta)\dot{z}+(\cos\theta)\dot{y}\right)\dot{\theta},
\end{equation}
induces the Horizontal space $H=\textup{span} \{ X_1:= {\partial}/{\partial x^1}, X_2:={\partial}/{\partial} x^2\}$ on $T(\R^2\times SE(2))$. 

Now we set coordinates adapted to the projection $H\to H/ H_{\bar\mu}$. A typical element in the isotropy algebra $\mathfrak{h}_{\bar\mu}$ is $\xi=\xi^{1}(e_1+\mu e_2)$ \cite{CM1}, where $\{e_1,e_2,e_3\}$ are a basis for $\mathfrak{se}(2)$, and $\bar\mu \in \mathfrak{h}^*$ has been written as $\bar\mu=(1,\mu,0)$. The corresponding $H_{\bar\mu}$-action on $H$ is given by
$$
H_{\mu} \times SE(2) \to SE(2); \quad ((a),(y,z,\theta)) \mapsto (y+a,z,\theta).
$$
Thus, $H_{\bar\mu} \neq H$---in contrast to the previous two examples (though $H_{\bar\mu}$ is still Abelian).

Let us now follow \cite{CM1} and choose a basis of $\mathfrak{g}$ that contains the basis $e_1+\mu e_2$ of $\mathfrak{g}_{\bar\mu}$:
$$
E_1=e_1+\mu e_2, \quad E_2=e_2, \quad E_3=e_3.
$$
It follows that, with respect to this basis, $v \in T(SE(2))$ can be written as 
$$
v=v^1\widetilde{E}_1+v^2\widetilde{E}_2+v^3\widetilde{E}_3,
$$
where  
$$
\widetilde{E}_1 = \frac{\partial}{\partial y}+\mu\frac{\partial}{\partial z}, \qquad
\widetilde{E}_2=\frac{\partial}{\partial z}, \qquad \widetilde{E}_3=\frac{\partial}{\partial \theta}+y\frac{\partial}{\partial z}-z\frac{\partial}{\partial y},
$$
are the fundamental vector fields corresponding to $\{E_i\}$.  Now, \cite{CM1} introduces the coordinate transformation
$$
y'=y, \qquad z'=z-\mu y, \qquad \theta'=\theta
$$
so that $\widetilde{E}_1=\partial/\partial y'$, and therefore the fundamental vector field corresponding to the $H_{\bar\mu}$ action is $\widetilde{E}_1$. Then we have coordinates $(x_1, x_2, y', z',\theta')$ on $J^{-1}_l(\bar{\mu})$ and $(x_1, x_2, z',\theta')$ on $J^{-1}_l(\bar{\mu})/H_{\bar\mu}$. 
We consider the basis of $T(\R^2\times SE(2))$ given by 
$$
\left\{X_1=\frac{\partial}{\partial x_1},X_1=\frac{\partial}{\partial x_2},\widetilde{E}_1,\widetilde{E}_2,\widetilde{E}_3\right\},
$$ with associated coordinates $(\dot x_1, \dot x_2, v^1, v^2, v^3)$ where
\begin{equation}\label{v2d}
v^1=\dot{y}+z\dot{\theta}, \qquad v^2=\dot{z}-\mu\dot{y}-\mu z\dot{\theta}-y\dot{\theta}, \qquad v^3=\dot\theta.
\end{equation}

The reduced Lagrangian $\mathfrak{L}: T(\R^2) \to \R$ is given by (see \eqref{Def:Red-Lagragian}) 
$$
\mathfrak{L} (x_1, x_2 ; \dot x_1, \dot x_2) = \frac{1}{2}\left(\dot{x}^2_1+2(2+\cos x_1)\dot{x}^2_2\right).
$$

By Theorem \ref{T:RouthIntrinsic}, the almost symplectic leaves of the reduced nonholonomic bracket $\{ \cdot \, , \cdot \}_\red$ are given by $(T\R^2 \times_{\R^2} \tilde{\mathcal{O}}, \ \omega_{\mathfrak{L}} - \beta_{\bar\mu} - [\overline{\langle {\mathcal J}, \mathcal{K}_\subW \rangle}_\red]_{\bar\mu})$ where 
$$ 
 \overline{\langle {\mathcal J}, \mathcal{K}_\subW \rangle}_\red=  \frac{\partial \mathfrak{L}}{\partial \dot x_2} (\sin x_1)dx_2 \wedge dx_1 = 2(2+\cos x_1)(\sin x_1) \dot x_2 \, dx_2 \wedge dx_1.
$$

In order to write the equations of motion on the leaves, recall that the curvature $\mathcal{F}$ associated to the mechanical connection is zero and also that $\partial\mathfrak{C}_{\bar\mu}/ \partial x_i = 0$, since the $(y,z,\theta)$ variables are not coupled to $x_1$ or $x_2$ in our lagrangian $l$.
Then \eqref{Eq:Splitted-RedLag2} are given, in this example by 
\begin{equation}
 \begin{split}
  -\dot x^2 (\sin x_1) - \ddot x_1 & = 2(2+\cos x_1)(\sin x_1)\dot x_2^2 ,  \\ 
  -2(2+\cos x_1)\ddot x_2 + 2 (\sin x_1) \dot x_1 \dot x_2 & = -2(2+\cos x_1) (\sin x_1) \, \dot x_1 \dot x_2.
 \end{split}
\end{equation}

To obtain equations \eqref{Eq:Splitted-RedLag1}, we follow \cite{CM1} writing the compressed lagrangian $l:T\bar{Q} \to \R$ in the variables $(x_1,x_2,y,z,\theta,\dot{x}_1,\dot{x}_2,v^1,v^2,v^3)$ and finding the functions $c_a$ by solving $(p_1, p_2, p_3) =(1+\mu^2, \mu,0)= \bar\mu\in \mathfrak{h}^* $ for $(v^1, v^2,v^3)$. Hence we obtain the same equations as in \cite{CM1} describing the {\it second order equation} condition:  
\begin{equation}
 \begin{split}\label{z} 
  \dot{z}' & =  \frac{2}{3}\left(z'(\sin\theta'-\mu\cos\theta')-2(1-\mu^2)\sin\theta'\cos\theta'-2\mu+4\mu\cos^2\theta'\right), \\
\dot{\theta}' & = \frac{1}{3}\left(2\cos\theta'+2\mu\sin\theta'-z'\right). 
 \end{split}
\end{equation}

\end{document}